\newcommand{\qedhere}{\qed}
\newif\ifreport
\newcommand{\reportonly}[1]{
  \ifreport
  #1
  \fi
}
\newcommand{\paperonly}[1]{
  \ifreport
  \else
  #1
  \fi
}
\definecolor{darkgreen}{rgb}{0,0.5,0}
\lstdefinelanguage{mCRL2}
{
 keywords={act,var,cons,end,eqn,glob,init,val,whr,sort,map,pbes,proc,struct},
 keywords=[2]{true,false,delta,tau},
 keywords=[3]{Bool,Nat,Real,Pos,Int,Set,Bag,List,Int2Nat,Pos2Nat,Int2Pos,min,max},
 keywords=[4]{hide,if,rename,sum,in,mu,nu,forall,exists,mod,allow,block,comm},
 keywords=[5]{nested,initial,state},
 numberstyle=\color{blue},
 comment=[l]\%,
 commentstyle=\itshape\color{darkgreen},
 keywordstyle=[1]\bfseries,
 keywordstyle=[2]\itshape,
 keywordstyle=[3]\itshape,
 keywordstyle=[4]\itshape,
 keywordstyle=[5]\bfseries\itshape,
 basicstyle=\ttfamily\small,
 flexiblecolumns=false
}
\newtheorem{assumption}{Assumption}
\begin{document}

\ifreport
\title{Improved Static Analysis for\\ Parameterised Boolean Equation Systems\\
  using Control Flow Reconstruction}
\else
\title{Liveness Analysis for\\ Parameterised Boolean Equation Systems}
\fi
\author{Jeroen J. A. Keiren\inst{1} \and Wieger Wesselink\inst{2} \and Tim A. 
  C. Willemse\inst{2}}
\institute{VU University Amsterdam, The Netherlands\\
  \email{j.j.a.keiren@vu.nl}
  \and
  Eindhoven University of Technology, The Netherlands\\
  \email{\{j.w.wesselink, t.a.c.willemse\}@tue.nl}
}

\maketitle

\begin{abstract}
  We present a sound static analysis technique for fighting the 
  combinatorial explosion
  of parameterised Boolean equation systems (PBESs). These essentially
  are systems of mutually recursive fixed point equations ranging over
  first-order logic formulae.
  Our method detects parameters that are not live by analysing
    a control
    flow graph of a PBES, and it subsequently eliminates such parameters. We show 
    that a naive approach to constructing a control flow graph, needed for the analysis,
    may suffer from an 
    exponential blow-up,
    and we define
    an approximate analysis that avoids this problem.
  The effectiveness of our techniques is
  evaluated using a number of case studies.
\end{abstract}

\section{Introduction}

\emph{Parameterised Boolean equation systems (PBESs)}~\cite{GW:05tcs} are 
systems of fixpoint equations that range over first-order formulae;
they are essentially an equational variation of \emph{Least Fixpoint
  Logic (LFP)}. Fixpoint logics such as PBESs have applications in
database theory and computer aided verification.  For instance, the
CADP~\cite{Gar+:11} and mCRL2~\cite{Cra+:13} toolsets use PBESs 
for model checking and equivalence checking and in \cite{AFJV:11} PBESs are
used to solve Datalog queries.

In practice, the predominant problem for PBESs is evaluating (henceforth 
referred to as \emph{solving}) them so as to 
  answer the decision problem encoded
in them.  There are a variety of techniques for solving
PBESs, see~\cite{GW:05tcs}, but the most straightforward method is
by instantiation to a \emph{Boolean equation system (BES)}
\cite{Mad:97}, and then solving this BES.  
This process is similar to the explicit generation
of a behavioural state space from its symbolic description, and it
suffers from a combinatorial explosion that is akin to the state
space explosion problem.  Combatting this combinatorial
explosion is therefore instrumental in speeding up the process of
solving the problems encoded by PBESs.

While several static analysis techniques have been described using fixpoint
logics, see \eg \cite{CC:77}, with the exception of the static analysis
  techniques for PBESs, described in~\cite{OWW:09}, no such techniques seem to 
  have been applied to fixpoint 
  logics themselves.\medskip

Our main contribution in this paper is a static analysis method for PBESs that
significantly improves over the aforementioned 
techniques for simplifying
PBESs.
In our method, we construct a \emph{control flow graph} 
(CFG) for
  a given PBES and subsequently apply state space reduction 
techniques~\cite{FBG:03,YG:04}, combined with liveness analysis 
techniques
from compiler technology~\cite{ASU:86}.
These typically scrutinise syntactic
descriptions of behaviour to detect and eliminate 
variables that at some point become irrelevant (dead, not live) to the 
behaviour,
thereby decreasing the complexity. 

The notion of control flow of a PBES is not self-evident: formulae
in fixpoint logics (such as PBESs) do not have a notion of a program
counter.  Our notion of control flow is based on the concept of 
\emph{control flow parameters} (CFPs), which induce a CFG.
Similar notions exist in the context of state space
exploration, see~\eg~\cite{PT:09atva}, but so far, no such concept exists
for fixpoint logics. 

The size of the CFGs is potentially exponential
in the number of CFPs. We therefore also describe a modification of 
our analysis---in which reductive power is traded against a lower
complexity---that does not suffer from this problem.
Our static analysis technique allows for solving PBESs using
instantiation that hitherto could not be solved this way, either
because the underlying BESs would be infinite or they would be extremely large.
We show that our methods are 
sound; \ie, simplifying PBESs using our analyses lead to PBESs with the
same solution.

Our static analysis techniques have been implemented in the 
mCRL2 toolset~\cite{Cra+:13} and applied to a set of model
checking and equivalence checking problems. Our experiments show that the
implementations
outperform existing static analysis techniques for PBESs~\cite{OWW:09} in
terms of reductive power, and that reductions of almost 100\% of the size of the
underlying BESs can be achieved. Our experiments confirm that the optimised
version sometimes achieves slightly less reduction 
than our
  non-optimised version, but is faster.
Furthermore, in cases where no additional reduction is achieved compared
to existing techniques, the overhead is mostly 
neglible.

\paragraph{Structure of the paper.}
In Section~\ref{sec:preliminaries} we give a cursory overview of basic
PBES theory and in Section~\ref{sec:example}, we present an example to
illustrate  the difficulty of using instantiation to
solve a PBES and to sketch our solution.  In
Section~\ref{sec:CFP_CFG} we describe our construction of control flow graphs
for PBESs and in Section~\ref{sec:dataflow} we describe our  live parameter
analysis.  We present an optimisation of
the analysis in Section~\ref{sec:local}. The approach is evaluated 
in Section~\ref{sec:experiments}, and 
Section~\ref{sec:conclusions} concludes.
\paperonly{\textbf{We refer to~\cite{KWW:13report} for 
proofs and additional results.}}

\section{Preliminaries}\label{sec:preliminaries}
%
Throughout this paper, we work in a setting of \emph{abstract data
types} with non-empty data sorts $\sort{D_1}, \sort{D_2}, \ldots$,
and operations on these sorts, and a set $\varset{D}$ of sorted
data variables.  We write vectors in boldface, \eg $\var{d}$ is
used to denote a vector of data variables. We write $\var[i]{d}$
to denote the $i$-th element of a vector $\var{d}$.

A semantic set $\semset{D}$ is associated to every sort $\sort{D}$,
such that each term of sort $\sort{D}$, and all operations on
$\sort{D}$ are mapped to the elements and operations of $\semset{D}$
they represent.  \emph{Ground terms} are terms that do not contain
data variables.  For terms that contain data variables, we use an
environment $\dataenv$ that maps each variable from $\varset{D}$
to a value of the associated type.  We assume an interpretation
function $\sem{\_}{}{}$ that maps every term $t$ of sort $\sort{D}$
to the data element $\sem{t}{}{\dataenv}$ it represents, where the
extensions of $\dataenv$ to open terms and vectors are standard. Environment
updates are denoted $\dataenv[\subst{d}{v}]$, where
$\dataenv[\subst{d}{v}](d') = v$ if $d' = d$, and $\dataenv(d')$
otherwise.

We specifically assume the existence of a sort $\sort{B}$ with
elements $\true$ and $\false$ representing the Booleans $\semset{B}$
and a sort $\sort{N} = \{0, 1, 2, \ldots \}$ representing the natural
numbers $\semset{N}$. For these sorts, we assume that the usual operators
are available and, for readability, these are written the same
as their semantic counterparts.

\emph{Parameterised Boolean equation systems}~\cite{Mat:98} are 
sequences of
fixed-point equations ranging over \emph{predicate formulae}. The
latter are first-order formulae extended with predicate
variables, in which the non-logical symbols are taken from the data
language.

\begin{definition}
\label{def:formula}
\label{def:semFormula}
\emph{Predicate formulae} are defined through the following grammar:
$$
\varphi, \psi ::= b \mid X(\val{e}) \mid \varphi \land \psi \mid \varphi \lor
\psi \mid \forall d \colon D. \varphi \mid \exists d \colon D. \varphi$$
in which $b$ is a data term of sort $\sort{B}$, $X(\val{e})$ is a \emph{predicate
variable instance} (PVI) in which $X$ is a predicate variable of
sort $\vec{\sort{D}} \to \sort{B}$, taken from some sufficiently large set
$\varset{P}$ of predicate variables, and $\val{e}$ is a vector of data terms of
sort $\vec{\sort{D}}$.
The interpretation of a predicate formula $\varphi$ in the
context of a predicate
environment $\predenv \colon \varset{P} \to \semset{D} \to \semset{B}$
and data environment $\dataenv$ is denoted as
$\sem{\varphi}{\predenv}{\dataenv}$, where:
\[
\begin{array}{ll}
\sem{b}{\predenv\dataenv}               
  &\isdef 
   \left \{
     \begin{array}{ll} \text{true} & \text{if $\dataenv(b)$ holds} \\
                       \text{false} & \text{otherwise}
     \end{array}
   \right . \\[5pt]
\sem{X(\val{e})}{\predenv\dataenv}               
  &\isdef 
   \left \{
     \begin{array}{ll} \text{true}& \text{if $\predenv(X)(\dataenv(\val{e}))$ 
     holds} \\
                       \text{false} & \text{otherwise}
     \end{array}
   \right . \\[5pt]

\sem{\phi \land \psi}{\predenv\dataenv} 
  &\isdef \sem{\phi}{\predenv\dataenv} \text{ and } \sem{\psi}{\predenv\dataenv} \text{ hold} \\[5pt]
  \sem{\phi \lor \psi}{\predenv\dataenv}
  &\isdef \sem{\phi}{\predenv\dataenv} \text{ or } \sem{\psi}{\predenv\dataenv} \text{ hold} \\[5pt]
\sem{\forall{d \colon \sort{D}}.~ \phi}{\predenv\dataenv}
  &\isdef \text{for all ${v \in \semset{D}}$, }~\sem{\phi}{\predenv\dataenv[v/d]} \text{ holds} \\[5pt]
\sem{\exists{d \colon \sort{D}}.~ \phi}{\predenv\dataenv} 
  &\isdef \text{for some ${v \in \semset{D}}$, }~\sem{\phi}{\predenv\dataenv[v/d]} \text{ holds}
\end{array}
\]
\end{definition}
We assume the usual precedence rules for the logical operators.
\emph{Logical equivalence} between two predicate formulae $\varphi,
\psi$, denoted $\varphi \equiv \psi$, is defined as 
$\sem{\varphi}{\predenv\dataenv}
= \sem{\psi}{\predenv\dataenv}$ for all $\predenv, \dataenv$.
Freely 
occurring data variables
in $\varphi$ are denoted by $\free{\varphi}$. We refer to $X(\val{e})$ occuring
in a predicate formula as a \emph{predicate variable instance} (PVI).
For simplicity, we assume that if a data variable is bound by a quantifier
in a formula $\varphi$, it does not also occur free within $\varphi$.
\begin{definition}
\label{def:PBES}
PBESs are defined by the following grammar:
$$
\PBES ::= \emptyPBES \mid (\nu X(\var{d} \colon \vec{D}) = \varphi) \PBES
                   \mid (\mu X(\var{d} \colon \vec{D}) = \varphi) \PBES
$$
in which $\emptyPBES$ denotes the empty equation system; $\mu$ and $\nu$ are the
least and greatest fixed point signs, respectively; $X$ is a sorted predicate
variable of sort $\vec{\sort{D}} \to \sort{B}$, $\var{d}$ is a vector of formal 
parameters,
and $\varphi$ is a predicate formula. We henceforth omit a trailing $\emptyPBES$.
\end{definition}
By convention $\rhs{X}$ denotes the right-hand side of the
defining equation for $X$ in a PBES $\PBES$;
$\param{X}$ denotes the set of \emph{formal parameters} of $X$ and
we assume that $\free{\rhs{X}} \subseteq
\param{X}$. By superscripting a formal parameter with the predicate
variable to which it belongs, we distinguish between formal parameters for
different predicate variables, \ie, we 
write $d^X$ when $d \in \param{X}$. We write $\sigma$ to stand for
either $\mu$ or $\nu$.
 
The set of \emph{bound predicate variables} of some PBES $\PBES$, denoted
$\bnd{\PBES}$, is the set of predicate variables occurring
at the left-hand sides of the equations in $\PBES$.  Throughout this
paper, we deal with PBESs that are both \emph{well-formed}, \ie for every
$X \in \bnd{\PBES}$ there is exactly one equation in $\PBES$, and
\emph{closed}, \ie for every $X \in \bnd{\PBES}$, only predicate variables taken
from $\bnd{\PBES}$ occur in $\rhs{X}$.

To each PBES $\PBES$ we associate a \emph{top assertion}, denoted
$\mcrlKw{init}~X(\val{v})$, where we require $X \in \bnd{\PBES}$. For
a parameter $\var[m]{d} \in \param{X}$ for the top assertion
$\mcrlKw{init}~X(\val{v})$ we define the value $\init{\var[m]{d}}$
as $\val[m]{v}$.\\

We next define a PBES's semantics.  Let $\semset{B}^{\vec{\semset{D}}}$
denote the set of functions $f \colon \vec{\semset{D}} \to \semset{B}$,
and define the ordering $\sqsubseteq$ as $f \sqsubseteq g$ iff for
all $\vec{v} \in \vec{\semset{D}}$, $f(\vec{v})$ implies $g(\vec{v})$.
For a given pair of environments $\dataenv, \predenv$, a predicate 
formula $\varphi$ gives rise to a predicate transformer $T$
on the complete lattice 
$(\semset{B}^{\vec{\semset{D}}}, \sqsubseteq)$ as follows:
$
T(f) = \lambda \vec{v} \in \vec{\semset{D}}. 
\sem{\varphi}{\predenv[\subst{X}{f}]}{\dataenv[\subst{\vec{d}}{\vec{v}}]}
$.

Since the predicate transformers defined this way are monotone,
their extremal fixed points exist. We denote the least fixed point of
a given predicate transformer $T$ by $\mu T$, and the greatest fixed point
of $T$ is denoted $\nu T$. 
\begin{definition}
The \emph{solution} of an equation system in the context of a predicate
environment $\predenv$ and data environment $\dataenv$ is defined inductively
as follows:
\begin{align*}
\sem{\emptyPBES}{\predenv}{\dataenv} & \isdef \predenv \\
\sem{(\mu X(\var{d} \colon \vec{D}) = \rhs{X}) \PBES}{\predenv}{\dataenv}
& \isdef \sem{\PBES}{\predenv[\subst{X}{\mu T}]}{\dataenv}\\
\sem{(\nu X(\var{d} \colon \vec{D}) = \rhs{X}) \PBES}{\predenv}{\dataenv}
& \isdef \sem{\PBES}{\predenv[\subst{X}{\nu T}]}{\dataenv}
\end{align*}
with $T(f) = \lambda \val{v} \in \val{\semset{D}}. 
\sem{\varphi}{(\sem{\PBES}{\predenv[\subst{X}{f}]}{\dataenv})}
{\dataenv[\subst{\var{d}}{\val{v}}]}$
\end{definition}
The solution prioritises the fixed point signs of left-most equations
over the fixed point signs of equations that follow, while respecting
the equations. Bound predicate variables of closed PBESs have a
solution that is independent of the predicate and data environments
in which it is evaluated.  We therefore omit these environments and
write $\sem{\PBES}(X)$ instead of $\sem{\PBES}{\predenv}{\dataenv}(X)$.

\reportonly{
\newcommand{\fixp}{\sigma}
\newcommand{\rankE}[2]{\ensuremath{\mathsf{rank}_{#1}(#2)}}
\newcommand{\rank}[1]{\ensuremath{\mathsf{rank}(#1)}}

The \emph{signature} \cite{Wil:10} of a predicate variable $X$ of sort
$\vec{\sort{D}} \to \sort{B}$, $\signature{X}$, is the product $\{X\} \times
\semset{D}$.
The notion of signature is lifted to sets of predicate variables $P \subseteq
\varset{P}$ in the natural way, \ie
$\signature{P} = \bigcup_{X \in P} \signature{X}$.\footnote{Note that in
\cite{Wil:10} the notation $\mathsf{sig}$ is used to denote the signature. Here
we deviate from this notation due to the naming conflict with the
\emph{significant parameters} of a formula, which also is standard notation
introduced in \cite{OWW:09}, and which we introduce in 
Section~\ref{sec:dataflow}.}
\begin{definition}[{\cite[Definition~6]{Wil:10}}]\label{def:r-correlation}
Let ${\rel{R}} \subseteq \signature{\varset{P}} \times \signature{\varset{P}}$ 
be 
an arbitrary
relation. A predicate environment $\predenv$ is an $\rel{R}$-correlation iff
$(X, \val{v}) {\rel{R}} (X', \val{v'})$ implies $\predenv(X)(\val{v}) =
\predenv(X')(\val{v'})$.
\end{definition}
A \emph{block} is a non-empty equation system of like-signed fixed point
equations. Given an equation system $\PBES$, a block $\mathcal{B}$ is maximal
if its neighbouring equations in $\PBES$ are of a different sign than the
equations in $\mathcal{B}$. The $i^\mathit{th}$ maximal block in $\PBES$ is 
denoted by
$\block{i}{\PBES}$.
For relations $\rel{R}$ we write $\correnv{\rel{R}}$ for the set of
$\rel{R}$-correlations.
\begin{definition}[{\cite[Definition~7]{Wil:10}}]
\label{def:consistent-correlation} Let $\PBES$ be an equation system.
Relation ${\rel{R}} \subseteq \signature{\varset{P}} \times 
\signature{\varset{P}}$ is
a \emph{consistent correlation} on $\PBES$, if for $X, X' \in \bnd{\PBES}$,
$(X, \val{v}) \rel{R} (X', \val{v'})$ implies:
\begin{compactenum}
  \item for all $i$, $X \in \bnd{\block{i}{\PBES}}$ iff $X' \in 
  \bnd{\block{i}{\PBES}}$
  \item for all $\predenv \in \correnv{\rel{R}}$, $\dataenv$, we have
        $\sem{\rhs{X}}{\predenv \dataenv[\subst{\var{d}}{\val{v}}]} = 
        \sem{\rhs{X'}}{\predenv \dataenv[\subst{\var{d'}}{\val{v'}}]}$
\end{compactenum}
For $X, X' \in \bnd{\PBES}$, we say $(X, \val{v})$ and $(X', \val{v'})$ 
consistently
correlate, denoted as $(X, \val{v}) \consistent (X', \val{v'})$ iff there
exists a correlation $\rel{R} \subseteq
\signature{\bnd{\PBES}} \times \signature{\bnd{\PBES}}$ such that $(X, \val{v})
\rel{R} (X', \val{v'})$ .
\end{definition}
Consistent
correlations can be lifted to variables in different equation systems in $\PBES$
and $\PBES'$, assuming that the variables in the equation systems do not 
overlap.
We call such equation systems \emph{compatible}.
Lifting consistent correlations to different equation systems can, \eg, be
achieved by merging the equation systems to an
equation system $\mathcal{F}$, in which, if $X \in \bnd{\PBES}$, then
$X \in \bnd{\block{i}{\PBES}}$ iff $X \in \bnd{\block{i}{\mathcal{F}}}$,
and likewise for $\PBES'$.
The consistent correlation can then be defined on $\mathcal{F}$.

The following theorem \cite{Wil:10} shows the relation between consistent
correlations and the solution of a PBES.
\begin{theorem}[{\cite[Theorem~2]{Wil:10}}]\label{thm:willemse}\label{thm:cc}
Let $\PBES$, $\PBES'$ be compatible equation systems, and $\consistent$ a
consistent correlation. Then for all $X \in \bnd{\PBES}$,
$X' \in \bnd{\PBES'}$ and all $\predenv \in \correnv{\consistent}$, we have
$(X, \val{v}) \consistent (X', \val{v'}) \implies \sem{\PBES}{\predenv 
\dataenv}(X)(\val{v}) = 
\sem{\PBES'}{\predenv \dataenv}(X')(\val{v'})$
\end{theorem}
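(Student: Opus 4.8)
\noindent\emph{Proof plan.} The plan is to reduce the two-system statement to a claim about a single equation system and then prove that claim by induction on the block structure, handling the fixed-point part by transfinite iteration rather than a direct Knaster--Tarski argument. First I would use the merging construction sketched above: since $\PBES$ and $\PBES'$ are compatible, combine them into an equation system $\mathcal{F}$ that keeps block membership intact ($X \in \bnd{\block{i}{\PBES}}$ iff $X \in \bnd{\block{i}{\mathcal{F}}}$, and likewise for $\PBES'$) and on which $\consistent$ is a consistent correlation. Since $\PBES$ and $\PBES'$ are closed and use disjoint predicate variables, $\sem{\mathcal{F}}{\predenv}{\dataenv}$ coincides with $\sem{\PBES}{\predenv}{\dataenv}$ on $\bnd{\PBES}$ and with $\sem{\PBES'}{\predenv}{\dataenv}$ on $\bnd{\PBES'}$. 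Hence it suffices to prove the following single-system claim, stated for arbitrary, possibly open, equation systems so that the induction below is available: if $R$ is a consistent correlation on an equation system $\PBES$ and $\predenv \in \correnv{R}$, then $\sem{\PBES}{\predenv}{\dataenv} \in \correnv{R}$.

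I would prove this by induction on the number of maximal blocks of $\PBES$. The empty system is immediate, since $\sem{\emptyPBES}{\predenv}{\dataenv} = \predenv \in \correnv{R}$. For the step, split off the first maximal block $\mathcal{B}$, of fixed-point sign $\sigma$, and write $\PBES$ as $\mathcal{B}$ followed by a system $\mathcal{G}$. Then $R$ is again a consistent correlation on $\mathcal{G}$: Definition~\ref{def:consistent-correlation} only constrains pairs whose variables are both bound in $\mathcal{G}$, and for those, condition~1 carries over (the blocks of $\mathcal{G}$ are those of $\PBES$ after the first, reindexed) and condition~2 carries over verbatim; so the induction hypothesis applies to $\mathcal{G}$. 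Because all equations of $\mathcal{B}$ carry the sign $\sigma$, within $\mathcal{B}$ the one-equation-at-a-time solution coincides with the corresponding simultaneous fixed point (a Beki\'{c}-style rearrangement of the solution definition). Consequently, the solution of $\PBES$ restricted to $\bnd{\mathcal{B}}$ equals the simultaneous $\sigma$-fixed point $\sigma\mathcal{T}$ of the monotone transformer $\mathcal{T}$ that sends a valuation $f$ of the variables of $\mathcal{B}$ to the valuation obtained by first solving $\mathcal{G}$ with the variables of $\mathcal{B}$ bound according to $f$ and all other variables according to $\predenv$, and then re-evaluating every right-hand side of $\mathcal{B}$ in the resulting environment; and the solution of $\PBES$ restricted to $\bnd{\mathcal{G}}$ equals that of $\mathcal{G}$ in the context $\predenv$ updated on $\bnd{\mathcal{B}}$ by $\sigma\mathcal{T}$.

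The heart of the argument is to show that $\predenv$ updated on $\bnd{\mathcal{B}}$ by $\sigma\mathcal{T}$ lies in $\correnv{R}$. The key observation is that $\mathcal{T}$ preserves this property: if $\predenv$ updated by $f$ is in $\correnv{R}$, then --- since by condition~1 every $R$-related pair lies jointly inside or jointly outside $\bnd{\mathcal{B}}$ --- that environment is still an $R$-correlation after solving $\mathcal{G}$ (by the induction hypothesis), and applying condition~2 in this correlated environment equates the values that $\mathcal{T}(f)$ assigns to any $R$-related pair drawn from $\bnd{\mathcal{B}}$; hence $\predenv$ updated by $\mathcal{T}(f)$ is again in $\correnv{R}$. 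Since $\mathcal{T}$ is monotone, $\sigma\mathcal{T}$ is the stationary value of the transfinite iteration of $\mathcal{T}$ from the extremal lattice element ($\top$ if $\sigma = \nu$, $\bot$ if $\sigma = \mu$): the extremal element plainly yields an $R$-correlation, successor steps preserve membership by the observation above, and limit steps preserve it because $\correnv{R}$ is closed under the pointwise meets (for $\nu$) and joins (for $\mu$) that form the limits. Thus $\sigma\mathcal{T}$ yields an $R$-correlation, and a final use of the induction hypothesis shows that solving $\mathcal{G}$ in that context again yields a result in $\correnv{R}$; by the rearrangement above this result is the solution of $\PBES$, which closes the induction.

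The step I expect to be the main obstacle is getting this structural interface right rather than any single computation. Two points deserve care. First, the identification of the block's contribution with $\sigma\mathcal{T}$ must be derived from the primitive one-equation-at-a-time semantics, using monotonicity of the solution operator in its outer environment. Second --- and this is the conceptual crux --- a direct Knaster--Tarski argument would be \emph{circular} here: from the mere fact that $\mathcal{T}$ maps $R$-correlations to $R$-correlations one cannot conclude that its extremal fixed point is an $R$-correlation, since that fixed point is not presented as a join or meet of elements already known to be $R$-correlations. This is exactly why the transfinite iteration, together with the closure of $\correnv{R}$ under meets and joins, is the right tool, and why condition~2 of a consistent correlation is stated with a universal quantifier over \emph{all} environments in $\correnv{R}$ rather than over a single one. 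The remaining bookkeeping --- block membership through the merge and through the removal of the first block, and threading the hypothesis $\predenv \in \correnv{R}$ through the recursion --- is routine.
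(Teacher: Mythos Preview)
The paper does not prove this theorem; it is quoted verbatim as \cite[Theorem~2]{Wil:10} and used as a black box for the correctness arguments that follow. So there is no in-paper proof to compare against.

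That said, your plan is a sound reconstruction of the standard argument for results of this type. The reduction to a single merged system is exactly what the paper sketches in the paragraph following Definition~\ref{def:consistent-correlation}. The induction on maximal blocks, the Beki\'{c}-style identification of a like-signed block with a single simultaneous fixed point, and the preservation of $\correnv{R}$ under the block transformer and under transfinite iteration are the right ingredients. Your explicit warning that a naive Knaster--Tarski argument is circular (because the extremal fixed point is a meet/join of \emph{post/pre}-fixed points, not of elements already known to lie in $\correnv{R}$) is well taken, and the fix via ordinal approximants together with closure of $\correnv{R}$ under pointwise meets and joins is correct: if every $\predenv_\alpha$ satisfies $\predenv_\alpha(X)(\val{v}) = \predenv_\alpha(X')(\val{v'})$ for all $R$-related pairs, then so do $\bigsqcap_\alpha \predenv_\alpha$ and $\bigsqcup_\alpha \predenv_\alpha$.

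One point to tighten: when you say ``$R$ is again a consistent correlation on $\mathcal{G}$'', condition~2 of Definition~\ref{def:consistent-correlation} must hold for \emph{all} $\predenv \in \correnv{R}$, not just those arising during your construction; this is fine because the right-hand sides of equations in $\mathcal{G}$ are unchanged, but make sure you do not silently restrict the quantifier. Similarly, the Beki\'{c} rearrangement step deserves an actual lemma statement and proof from the one-equation-at-a-time semantics; it is standard but not entirely trivial in this setting, and the paper's semantics is phrased sequentially, so the identification with a simultaneous fixed point must be justified.
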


We use this theorem in proving the correctness of our static analysis
technique.
} 


\section{A Motivating Example}\label{sec:example}

In practice, solving PBESs proceeds via \emph{instantiating}~\cite{PWW:11}
into \emph{Boolean equation systems (BESs)}, for which solving is
decidable. The latter is the fragment of PBESs with equations
that range over propositions only, \ie, formulae without data and
quantification.  Instantiating a PBES to a BES is akin to state space
exploration and suffers from a similar combinatorial
explosion. Reducing the time spent on it is thus instrumental in speeding
up, or even enabling the solving process.
We illustrate this using the following (academic) example,  which we also
use as our running example:
\[
\begin{array}{lll}
\nu X(i,j,k,l\colon\sort{N}) & = & 
( i \not= 1 \vee j \not= 1 \vee X(2,j,k,l+1)) \wedge \forall m\colon\sort{N}. Z(i,2,m+k,k)   \\
\mu Y(i,j,k,l\colon\sort{N}) & = &
k = 1 \vee (i = 2 \wedge X(1,j,k,l) ) \\
\nu Z(i,j,k,l\colon\sort{N}) & = &
(k < 10 \vee j = 2) \wedge (j \not= 2 \vee Y(1,1,l,1) ) \wedge
Y(2,2,1,l)  
\end{array}
\]
The presence of PVIs $X(2,j,k,l+1)$ and $Z(i,2,m+k,k)$ in $X$'s
equation means the solution to $X(1,1,1,1)$ depends on the
solutions to $X(2,1,1,2)$ and $Z(1,2,v+1,1)$,
for all values $v$, see Fig.~\ref{fig:instantiate}.  Instantiation
finds these dependencies by simplifying the right-hand
side of $X$ when its parameters have been assigned value $1$:
\[
( 1 \not= 1 \vee 1 \not= 1 \vee X(2,1,1,1+1))
 \wedge \forall m\colon\sort{N}. Z(1,2,m+1,1) 
\]
Since for an infinite number of different arguments the solution
to $Z$ must be computed, instantiation does not terminate.  The problem is with
the third parameter ($k$) of $Z$.  We cannot simply assume
that values assigned to the third parameter of $Z$ do not matter;
in fact, only when $j =2$, $Z$'s right-hand side predicate formula
does not depend on $k$'s value.  This is where our developed method will
come into play: it automatically
determines that it is sound to replace PVI
$Z(i,2,m+k,k)$ by, \eg, $Z(i,2,1,k)$ and to remove the universal
quantifier, enabling us to solve $X(1,1,1,1)$ using
instantiation.

Our technique uses a \emph{Control Flow Graph} (CFG) underlying the
PBES for analysing which parameters of a PBES are \emph{live}.  
The CFG is a finite abstraction of the dependency graph
that would result from instantiating a PBES. For instance, when
ignoring the third and fourth parameters in our example PBES, 
we find that the solution to $X(1,1,*,*)$ depends
on the first PVI, leading to $X(2,1,*,*)$ and the second PVI in $X$'s
equation, leading to $Z(1,2,*,*)$. In the same way we can determine
the dependencies for $Z(1,2,*,*)$, resulting in the finite structure
depicted in Fig.~\ref{fig:CFG}. The subsequent liveness analysis annotates
each vertex with a label indicating which parameters 
cannot (cheaply) be excluded from
having an impact on the solution to the equation system; these are
assumed to be live. Using these labels, we modify the PBES automatically.

  \begin{figure}[t]
  \centering
    \parbox[b]{.33\textwidth}
    {
    \begin{tikzpicture}[>=stealth',draw,node distance=30pt,inner 
    sep=0pt]
    \tiny
    \node[state,shape=rectangle,draw=none] (X1111) {$X(1,1,1,1)$};
    \node[state,shape=rectangle,draw=none, below of=X1111,yshift=-30pt] (X2112) {$X(2,1,1,2)$};
    \node[state,shape=rectangle,draw=none, right of=X1111,xshift=20pt] (Z1211) {$Z(1,2,1,1)$};
    \node[state,shape=rectangle,draw=none, right of=Z1211,xshift=10pt] (Y2211) {};

    \node[state,shape=rectangle,draw=none, below of=Z1211,yshift=17pt] (Z1221) {$Z(1,2,2,1)$};
    \node[state,shape=rectangle,draw=none, right of=Z1221,xshift=10pt] (Y2221) {};
    \node[state,shape=rectangle,draw=none, below of=Z1221,yshift=17pt] (Z1231) {$Z(1,2,3,1)$};
    \node[state,shape=rectangle,draw=none, right of=Z1231,xshift=10pt] (Y2231) {};
    \node[state,shape=rectangle,draw=none, below of=Z1231,yshift=22pt] (Z1241) {\phantom{$Z(1,2,4,1)$}};
    \node[state,shape=rectangle,draw=none, below of=Z1241,yshift=24pt] (Z1251) {\phantom{$Z(1,2,5,1)$}};
    
    \node[state,shape=rectangle,draw=none, right of=X2112,xshift=20pt] (Z2211) {$Z(2,2,1,1)$};
    \node[state,shape=rectangle,draw=none, right of=Z2211,xshift=10pt] (Y3211) {};
    \node[state,shape=rectangle,draw=none, below of=Z2211,yshift=17pt] (Z2221) {$Z(2,2,2,1)$};
    \node[state,shape=rectangle,draw=none, right of=Z2221,xshift=10pt] (Y3221) {};
    \node[state,shape=rectangle,draw=none, below of=Z2221,yshift=17pt] (Z2231) {$Z(2,2,3,1)$};
    \node[state,shape=rectangle,draw=none, right of=Z2231,xshift=10pt] (Y3231) {};
    \node[state,shape=rectangle,draw=none, below of=Z2231,yshift=22pt] (Z2241) 
    {\phantom{$Z(2,2,4,1)$}};
    \node[state,shape=rectangle,draw=none, below of=Z2241,yshift=24pt] (Z2251) 
    {\phantom{$Z(2,2,5,1)$}};

    \draw[->]
     (X1111) edge (X2112.north)
     (X1111) edge (Z1211.west)
     (X1111) edge (Z1221.west)
     (X1111) edge (Z1231.west)
     (X1111) edge[dashed] (Z1241.west)
     (X1111) edge[dashed] (Z1251.west)
     (X2112) edge (Z2211.west)
     (X2112) edge (Z2221.west)
     (X2112) edge (Z2231.west)
     (X2112) edge[dashed] (Z2241.west)
     (X2112) edge[dashed] (Z2251.west);
     \draw
     (Z1211) edge[dotted] (Y2211.west)
     (Z1221) edge[dotted] (Y2221.west)
     (Z1231) edge[dotted] (Y2231.west)
     (Z2211) edge[dotted] (Y3211.west)
     (Z2221) edge[dotted] (Y3221.west)
     (Z2231) edge[dotted] (Y3231.west)
     
     ;
    \end{tikzpicture}
    \caption{Dependency graph}
    \label{fig:instantiate}
    }
    \qquad\parbox[b]{.61\textwidth}
    {
    \begin{tikzpicture}[>=stealth',draw,,node distance=45pt,inner 
    sep=0pt,minimum size=17pt]
    \tiny
    \node[state,label = left:{\scriptsize$\{k\}$}] (X11) {$\begin{array}{c}X \\ i=1 \\j= 1\end{array}$};
    \node[state, label=left:{\scriptsize$\{k\}$}, below of=X11,xshift=45pt] (X12) {$\begin{array}{c}X\\ i=1\\ j= 2\end{array}$};
    \node[state,  label=left:{\scriptsize$\{k\}$},below of=X12,xshift=-45pt] (X21)  {$\begin{array}{c}X\\ i=2\\ j= 1\end{array}$};
    \node[state, label=right:{\scriptsize$\{l\}$}, right of=X11,xshift=45pt] (Z12) {$\begin{array}{c}Z\\ i=1\\ j= 2\end{array}$};
    \node[state,label=right:{\scriptsize$\{k\}$}, right of=X12] (Y22)  {$\begin{array}{c}Y\\ i=2\\ j= 2\end{array}$};
    \node[state, label=right:{\scriptsize$\{l\}$}, below of=Y22] (Z22) {$\begin{array}{c}Z\\ i=2\\ j= 2\end{array}$};
    \node[state,label=right:{\scriptsize$\{k\}$}, right of=Y22] (Y11)  {$\begin{array}{c}Y\\ i=1\\ j= 1\end{array}$};

    \path[->]
     (X11) edge node[above] {$2$} (Z12)
     (X11) edge node[left]  {$1$} (X21)
     (X12) edge node[left] {$2$} (Z12)
     (X21) edge node[above] {$2$} (Z22)
     (Y22) edge node[above] {$1$} (X12)
     (Z12) edge node[right] {$1$} (Y11)
     (Z12) edge node[right] {$2$} (Y22)
     (Z22) edge node[right] {$1$} (Y11)
     (Z22) edge node[left] {$2$} (Y22)
    ;
    \end{tikzpicture}
    \caption{Control flow graph for the running example}
    \label{fig:CFG}
    }
  \end{figure}
Constructing a good CFG is a major difficulty, which we address in
Section~\ref{sec:CFP_CFG}. The liveness analysis and the subsequent
modification of the analysed PBES is described in
Section~\ref{sec:dataflow}. Since the CFG constructed in
Section~\ref{sec:CFP_CFG} can still suffer from a combinatorial
explosion, we present an optimisation of our analysis in
Section~\ref{sec:local}.

\section{Constructing Control Flow Graphs for PBESs}
\label{sec:CFP_CFG}

The vertices in the control flow graph we constructed in the previous section
represent the values assigned to a subset of the equations' formal
parameters whereas an edge between two vertices captures the
dependencies among (partially instantiated) equations.  The better
the control flow graph approximates the dependency graph resulting
from an instantiation, the more precise the resulting liveness
analysis.

Since computing a precise control flow graph is expensive,
the problem is to
compute the graph effectively and balance
precision and cost.
To this end, we first identify a set of 
\emph{control flow parameters}; the values to
these parameters will make up the vertices in the control flow
graph. While there is some choice for control flow parameters,
we require that these are parameters for which we can
\emph{statically} determine:
\begin{compactenum}
\item the (finite set of) values these parameters can assume,

\item the set of PVIs on which the truth of a right-hand
side predicate formula may depend, given a concrete value for each control flow
parameter, and

\item the values assigned to the control flow parameters by all
PVIs on which the truth of a right-hand side predicate formula
may depend.

\end{compactenum}
In addition to these requirements, we impose one other restriction:
control flow parameters of one equation must be \emph{mutually
independent}; \ie, we have to be able to determine their values
independently of each other. Apart from being a natural requirement
for a control flow parameter, it enables us to devise optimisations of
our liveness analysis.

We now formalise these ideas. First, we characterise three partial
functions that together allow to relate values of formal parameters
to the dependency of a formula on a given PVI. Our formalisation
of these partial functions is based on the following observation:
if in a formula $\varphi$, we can replace a particular PVI $X(\val{e})$
with the subformula $\psi \wedge X(\val{e})$ without this affecting
the truth value of $\varphi$, we know that $\varphi$'s truth value
only depends on $X(\val{e})$'s whenever $\psi$ holds.  We will
choose $\psi$ such that it allows us to pinpoint exactly what value
a formal parameter of an equation has (or will be assigned through
a PVI). Using these functions, we then identify our
control flow parameters by eliminating variables that do
not meet all of the aforementioned requirements.

In order to reason about individual PVIs occurring in predicate
formulae we introduce the notation necessary to do so.  Let
$\npred{\varphi}$ denote the number of PVIs occurring in a predicate
formula $\varphi$.  The function $\predinstphi{\varphi}{i}$ is the
formula representing the $i^\text{th}$ PVI in $\varphi$, of which
$\predphi{\varphi}{i}$ is the name and $\dataphi{\varphi}{i}$
represents the term that appears as the argument of the instance. In general
$\dataphi{\varphi}{i}$ is a vector, of which we denote the $j^{\text{th}}$
argument by $\dataphi[j]{\varphi}{i}$.
Given predicate formula $\psi$ we write $\varphi[i \mapsto \psi]$
to indicate that the PVI at position $i$ is replaced syntactically
by $\psi$ in $\varphi$.
%
\reportonly{
Formally we define $\varphi[i \mapsto 
\psi]$,
as follows.
\begin{definition}
Let $\psi$ be a predicate formula, and let $i \leq \npred{\varphi}$,
$\varphi[i \mapsto \psi]$ is defined inductively as follows.

\begin{align*}
b[i \mapsto \psi] & \isdef b \\
Y(e)[i \mapsto \psi] & \isdef \begin{cases} \psi & \text{if $i = 1$}\\ Y(e) & 
\text{otherwise} \end{cases} \\
(\forall d \colon D . \varphi)[i \mapsto \psi] & \isdef \forall d \colon D . 
\varphi[i \mapsto \psi]\\
(\exists d \colon D . \varphi)[i \mapsto \psi] & \isdef \exists d \colon D . 
\varphi[i \mapsto \psi]\\
(\varphi_1 \land \varphi_2)[i \mapsto \psi] & \isdef \begin{cases}
\varphi_1 \land \varphi_2[(i - \npred{\varphi_1}) \mapsto \psi] & \text{if } i 
> \npred{\varphi_1} \\
\varphi_1[i \mapsto \psi] \land \varphi_2 & \text{if } i \leq \npred{\varphi_1}
\end{cases}\\
(\varphi_1 \lor \varphi_2)[i \mapsto \psi] & \isdef \begin{cases}
\varphi_1 \lor \varphi_2[(i - \npred{\varphi_1}) \mapsto \psi] & \text{if } i > 
\npred{\varphi_1} \\
\varphi_1[i \mapsto \psi] \lor \varphi_2 & \text{if } i \leq \npred{\varphi_1}
\end{cases}
\end{align*}
\end{definition}
} 
\begin{definition} Let $s \colon \varset{P} \times \mathbb{N} \times \mathbb{N}
\to D$, $t \colon \varset{P} \times \mathbb{N} \times \mathbb{N} \to D$, and
$c \colon \varset{P} \times \mathbb{N} \times \mathbb{N} \to \mathbb{N}$
be partial
functions, where $D$ is the union of all ground terms.
The triple $(s,t,c)$ is a \emph{unicity constraint} for PBES $\PBES$ if for all
$X \in \bnd{\PBES}$,  $i,j,k \in \mathbb{N}$ and
ground terms $e$:
\begin{compactitem}
\item (source) if $s(X,i,j) {=} e$ then
$\rhs{X} \equiv
\rhs{X}[i \mapsto (\var[j]{d} = e \wedge
\predinstphi{\rhs{X}}{i})]$,

\item (target) if $t(X,i,j) {=} e$ then
$\rhs{X}
\equiv \rhs{X}[i \mapsto (\dataphi[j]{\rhs{X}}{i} = e \wedge
\predinstphi{\rhs{X}}{i})]$, 

\item (copy) if $c(X,i,j) {=} k$ then $\rhs{X} \equiv \rhs{X}[i \mapsto
(\dataphi[k]{\rhs{X}}{i} = \var[j]{d} \wedge
\predinstphi{\rhs{X}}{i} )]$.

\end{compactitem}
\end{definition}
Observe that indeed, function $s$ states that, when defined, formal
parameter $\var[j]{d}$ must have value $s(X,i,j)$ for $\rhs{X}$'s
truth value to depend on that of $\predinstphi{\rhs{X}}{i}$.  In
the same vein $t(X,i,j)$, if defined, gives the fixed value of the
$j^\text{th}$ formal parameter of $\predphi{\rhs{X}}{i}$.
Whenever $c(X,i,j) = k$ the value of variable $\var[j]{d}$ is
transparently copied to position $k$ in the $i^\text{th}$ predicate
variable instance of $\rhs{X}$. Since $s,t$ and $c$ are partial
functions, we do not require them to be defined; we use $\bot$ to
indicate this.
\begin{example}\label{exa:unicity_constraint}
  A unicity constraint $(s,t,c)$ for our running example 
  could be one that assigns $s(X,1,2) = 1$, since parameter $j^X$
  must be $1$ to make $X$'s right-hand side formula depend on PVI
  $X(2,j,k,l+1)$. We can set $t(X,1,2) = 1$, as one can deduce that
  parameter $j^X$ is set to $1$ by the PVI $X(2,j,k,l+1)$;
  furthermore, we can set $c(Z,1,4) = 3$, as parameter $k^Y$ is 
  set to $l^Z$'s value by PVI $Y(1,1,l,1)$.

\end{example}

\reportonly{
The requirements allow unicity constraints to be underspecified. In practice,
it is desirable to choose the constraints as complete as possible. If, in a
unicity constraint $(s,t,c)$, $s$ and $c$ are defined for a predicate variable
instance, it can immediately be established that we can define $t$ as well.
This is formalised by the following property.
\begin{property}\label{prop:sourceCopyDest}
Let $X$ be a predicate variable, $i \leq \npred{\rhs{X}}$, let $(s,t,c)$ be
a unicity constraint, and let $e$ be a value, then
$$
(s(X, i, n) = e \land c(X, i, n) = m) \implies t(X, i, m) = e.
$$
\end{property}
Henceforth we assume that all unicity constraints satisfy this property.
The overlap between $t$ and $c$ is now straightforwardly formalised in the
following lemma.
\begin{lemma}\label{lem:copyDest}
Let $X$ be a predicate variable, $i \leq \npred{\rhs{X}}$, and let
$(s,t,c)$ be a unicity constraint, then if
$(s(X, i, n)$ and $t(X, i, m)$ are both defined,
$$
c(X, i, n) = m \implies s(X, i, n) = t(X, i, m).
$$
\end{lemma}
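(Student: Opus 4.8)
The plan is to obtain Lemma~\ref{lem:copyDest} as an almost immediate corollary of Property~\ref{prop:sourceCopyDest}, which by the standing convention holds for every unicity constraint. Suppose $s(X,i,n)$ and $t(X,i,m)$ are both defined, say $s(X,i,n) = e$ and $t(X,i,m) = e'$, and assume $c(X,i,n) = m$. Applying Property~\ref{prop:sourceCopyDest} to the predicate variable $X$, the position $i \leq \npred{\rhs{X}}$, and the value $e$ yields $t(X,i,m) = e$. Since $t$ is a partial function whose value at $(X,i,m)$ is defined, that value is unique, hence $e = e'$, i.e. $s(X,i,n) = t(X,i,m)$, as required. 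Note that the hypothesis that $t(X,i,m)$ is defined is in fact redundant, since Property~\ref{prop:sourceCopyDest} already forces it to be defined (with value $e$).

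For completeness I would also recall how Property~\ref{prop:sourceCopyDest} itself is established, as that is where the real work lies. From $s(X,i,n) = e$ we have $\rhs{X} \equiv \rhs{X}[i \mapsto (\var[n]{d} = e \wedge \predinstphi{\rhs{X}}{i})]$, and from $c(X,i,n) = m$ we have $\rhs{X} \equiv \rhs{X}[i \mapsto (\dataphi[m]{\rhs{X}}{i} = \var[n]{d} \wedge \predinstphi{\rhs{X}}{i})]$. The key step is a substitution lemma: if the PVI at position $i$ can be guarded by $\psi_1$ and, independently, by $\psi_2$ without changing the truth value of $\rhs{X}$, then it can be guarded by $\psi_1 \wedge \psi_2$ simultaneously; this is proved by structural induction on $\rhs{X}$, following the inductive clauses of $\varphi[i \mapsto \psi]$. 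Instantiating $\psi_1$, $\psi_2$ with the two guards above gives $\rhs{X} \equiv \rhs{X}[i \mapsto (\var[n]{d} = e \wedge \dataphi[m]{\rhs{X}}{i} = \var[n]{d} \wedge \predinstphi{\rhs{X}}{i})]$, and since $\var[n]{d} = e \wedge \dataphi[m]{\rhs{X}}{i} = \var[n]{d}$ entails $\dataphi[m]{\rhs{X}}{i} = e$, the guard may be weakened to obtain $\rhs{X} \equiv \rhs{X}[i \mapsto (\dataphi[m]{\rhs{X}}{i} = e \wedge \predinstphi{\rhs{X}}{i})]$, which is exactly the defining condition for $t(X,i,m) = e$.

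The main obstacle is this substitution lemma: one has to check that replacing PVI $i$ by a guarded copy commutes with conjoining further guards, which requires careful bookkeeping of PVI indices across $\land$ and $\lor$ (as in the inductive definition of $\varphi[i \mapsto \psi]$) and attention to quantifiers that may bind data variables occurring in the guards (here harmless, since the guards only mention the formal parameter $\var[n]{d}$ and the term $\dataphi[m]{\rhs{X}}{i}$). A further subtlety is the degenerate case where the $i$-th PVI is globally irrelevant to $\rhs{X}$: then the relevant equivalences hold vacuously and a naive direct argument for the lemma could fail, which is precisely why routing the proof through Property~\ref{prop:sourceCopyDest}—whose validity we assume—is the cleanest route.
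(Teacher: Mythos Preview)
Your first paragraph is correct and is exactly the paper's proof: the paper simply says ``Immediately from the definitions and Property~\ref{prop:sourceCopyDest},'' and you have spelled out precisely that derivation.

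The remaining two paragraphs are unnecessary and slightly off conceptually. Property~\ref{prop:sourceCopyDest} is not \emph{proved} in the paper; it is an \emph{assumption} imposed on the unicity constraint (the paper writes ``Henceforth we assume that all unicity constraints satisfy this property''). Your argument does show that the equivalence required for $t(X,i,m)=e$ \emph{can} be made to hold---which is what the paper means by ``we can define $t$ as well''---but it does not force the partial function $t$ to actually be defined there; that is precisely why the paper adopts it as a standing assumption rather than a lemma. So your extra material justifies why the assumption is harmless, but it does not (and cannot) derive the Property from the bare definition of a unicity constraint.
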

\begin{proof}
Immediately from the definitions and Property~\ref{prop:sourceCopyDest}.
\end{proof}
} 
From hereon, we assume that $\PBES$ is an arbitrary PBES 
with $(\sourcename,\destname,\copiedname)$
a unicity constraint we can deduce for it.
Notice that for each formal parameter for which either \sourcename
or \destname is defined for some PVI, we have a finite set of values
that this parameter can assume.  However, at this point we do not
yet know whether this set of values is exhaustive: it may be that
some PVIs may cause the parameter to take on arbitrary values.
Below, we will narrow down for which parameters we \emph{can}
ensure that the set of values is exhaustive. First, we eliminate
formal parameters that do not meet conditions 1--3 for PVIs that
induce self-dependencies for an equation.
\begin{definition}\label{def:LCFP}
A parameter $\var[n]{d} \in \param{X}$ is a \emph{local
control flow parameter} (LCFP) if for all $i$ such that $\predphi{\rhs{X}}{i}
= X$, either $\source{X}{i}{n}$ and $\dest{X}{i}{n}$ are defined, or
$\copied{X}{i}{n} = n$.

\end{definition}
\begin{example} Formal parameter $l^X$ in our running example
does not meet the conditions of Def.~\ref{def:LCFP} and is therefore not
an LCFP. All other parameters in all other equations are still LCFPs since
$X$ is the only equation with a self-dependency.
\end{example}
From the formal parameters that are LCFPs, we 
next eliminate those parameters that do not meet conditions 1--3
for PVIs that induce dependencies among \emph{different} equations.
\begin{definition}\label{def:GCFP}
A parameter $\var[n]{d} \in \param{X}$
is a \emph{global control flow parameter} (GCFP)
if it is an LCFP, and for all $Y \in \bnd{\mathcal{E}}\setminus \{X\}$ and
all $i$ such that $\predphi{\rhs{Y}}{i} = X$, either 
$\dest{Y}{i}{n}$ is defined, or $\copied{Y}{i}{m} = n$
for some GCFP $\var[m]{d} \in \param{Y}$. 

\end{definition}
The above definition is recursive in nature: if a parameter does
not meet the GCFP conditions then this may result in another parameter
also not meeting the GCFP conditions.  Any set of parameters that
meets the GCFP conditions is a good set, but larger sets possibly lead to
better information about the control flow in a PBES.
\begin{example}
Formal parameter $k^Z$ in our running example
is not a GCFP since in PVI $Z(i,2,m+k,1)$ from $X$'s equation,
the value assigned to $k^Z$ cannot be determined.
\end{example}
The parameters that meet the GCFP conditions satisfy the conditions
1--3 that we imposed on control flow parameters: they assume a
finite set of values, we can deduce which PVIs may affect the
truth of a right-hand side predicate formula, and we can deduce how
these parameters evolve as a result of all PVIs in a PBES. However,
we may still have parameters of a given equation that are mutually
dependent. Note that this dependency can only arise as a result of
copying parameters: in all other cases, the functions \sourcename
and \destname provide the information to deduce concrete values.
\begin{example}
GCFP $k^Y$ affects GCFP $k^X$'s value through PVI $X(1,j,k,l)$; likewise,
$k^X$ affects
$l^Z$'s value through PVI $Z(i,2,m+k,k)$. 
Through the PVI $Y(2,2,1,l)$ in $Z$'s equation,
GCFP $l^Z$ affects GCFPs $l^Y$ value. Thus, $k^Y$ affects $l^Y$'s value
transitively.
\end{example}
We identify parameters that, through copying, may become
mutually dependent. To this end, we use a relation $\sim$, to
indicate that GCFPs are \emph{related}.  Let $\var[\!\!\!\!n]{d^X}$ and
$\var[\!\!\!\!m]{d^Y}$ be GCFPs; these are \emph{related}, denoted 
$\var[\!\!\!\!n]{d^X}
\sim \var[\!\!\!\!m]{d^Y}$, if $n = \copied{Y}{i}{m}$ for some $i$.  Next,
we characterise when a set of GCFPs does not introduce mutual
dependencies.
\begin{definition}
\label{def:control_structure}
Let $\mathcal{C}$ be a set of GCFPs, and let $\sim^*$ denote the reflexive,
symmetric and transitive closure of $\sim$ on $\mathcal{C}$.
Assume ${\approx} \subseteq \mathcal{C} \times \mathcal{C}$ is an equivalence 
relation that subsumes $\sim^*$; \ie,
that satisfies $\sim^* \subseteq \approx$. Then the pair
$\langle \mathcal{C}, \approx \rangle$ defines a \emph{control structure}
if for all $X \in  \bnd{\mathcal{E}}$ and all $d,d' \in \mathcal{C} \cap
\param{X}$, if $d \approx d'$, then $d = d'$.
\end{definition}
We say that a unicity constraint is a \emph{witness} to a control
structure $\langle \varset{C},\approx\rangle$ if the latter can be
deduced from the unicity constraint through 
Definitions~\ref{def:LCFP}--\ref{def:control_structure}. 
The equivalence $\approx$ in a control structure
also serves to identify GCFPs that take on the same role in
\emph{different} equations:  we say that two parameters $c,c' \in
\varset{C}$ are \emph{identical} if $c \approx c'$.
As a last step, we formally define our notion of a
control flow parameter.
\begin{definition}
A formal parameter $c$ is a \emph{control flow parameter (CFP)} if there
is a control structure $\langle \varset{C},\approx\rangle$ such
that $c \in \varset{C}$.
\end{definition}
\begin{example}\label{exa:CFP}
Observe that there is a unicity constraint that
identifies that parameter
$i^X$ is copied to $i^Z$ in our running example. Then necessarily $i^Z \sim 
i^X$ and thus
$i^X \approx i^Z$ for a control structure $\langle \mathcal{C},\approx \rangle$
with $i^X,i^Z \in \mathcal{C}$. 
However, $i^X$ and $i^Y$
do not have to be related, but we have the option to define $\approx$
so that they are. In fact, the structure $\langle
\{i^X,j^X,i^Y,j^Y,i^Z,j^Z\}, \approx \rangle$ for which $\approx$
relates all (and only) identically named parameters is a control
structure.

\end{example}
Using a control structure $\langle \varset{C},\approx\rangle$,
we can ensure that all equations have the same set of CFPs. This can
be done by assigning unique names to
identical CFPs and by adding CFPs that
do not appear in an equation as formal parameters for this equation.
Without loss of generality
we therefore continue to work under the following assumption.
\begin{assumption} \label{ass:names}
The set of CFPs is the same for every equation in
a PBES; that is, for all $X, Y \in \bnd{\PBES}$ in a PBES $\PBES$ we have
$d^X \in \param{X}$ is a CFP iff $d^Y \in \param{Y}$ is a CFP, and $d^X \approx
d^Y$.

\end{assumption}
From hereon, we call any formal parameter that is not a control flow parameter
a \emph{data parameter}. We make this
distinction explicit by partitioning $\varset{D}$ into CFPs $\varset{C}$
and data parameters $\varset{D}^{\DP}$. As a consequence of Assumption~\ref{ass:names},
we may assume that every PBES we consider has equations with the same sequence of CFPs;
\ie, all equations are of the form
$\sigma X(\var{c} \colon \vec{C}, \var{d^X} \colon \vec{D^X})
= \rhs{X}(\var{c}, \var{d^X})$, where $\var{c}$ is the (vector of) CFPs, and
$\var{d^X}$ is the (vector of) data parameters of the equation for $X$. \medskip


Using the CFPs, we next construct a control flow graph. Vertices in this
graph represent valuations for the vector of CFPs and
the edges capture dependencies on PVIs.
The set of potential valuations for the CFPs is bounded by $\values{\var[k]{c}}$, 
defined as:
  \[
  \{ \init{\var[k]{c}} \} \cup \bigcup\limits_{i \in \mathbb{N}, X \in \bnd{\PBES}}
  \{ v \in D \mid 
  \source{X}{i}{k} = v \lor \dest{X}{i}{k} = v \}. 
  \]
We generalise $\valuesname$ to the vector $\var{c}$ in the obvious way.
\begin{definition}
  \label{def:globalCFGHeuristic}
The control flow graph (CFG) of $\PBES$ is a directed graph $(\VCFGsyn, 
{\smash{\ECFGsyn}})$ 
with:
\begin{compactitem}
\item $\VCFGsyn \subseteq \bnd{\PBES} \times \values{\var{c}}$.

\item  ${\smash{\ECFGsyn}} \subseteq V \times \mathbb{N} \times
V$ is the least relation for which, whenever $(X,\val{v}) \ECFGsyn[i]
(\predphi{\rhs{X}}{i},\val{w})$ then for every $k$ either:
\begin{compactitem}
\item $\source{X}{i}{k} = \val[k]{v}$ and $\dest{X}{i}{k} = \val[k]{w}$, or

\item $\source{X}{i}{k} = \bot$, $\copied{X}{i}{k} = k$ and $\val[k]{v} = 
\val[k]{w}$, or
\item $\source{X}{i}{k} = \bot$, and $\dest{X}{i}{k} = \val[k]{w}$.   

\end{compactitem}
\end{compactitem}
\end{definition}
We refer to the vertices in the CFG as \emph{locations}. Note that
a CFG is finite since the set $\values{\var{c}}$ is finite.
Furthermore, CFGs are complete in the sense that all PVIs on which
the truth of some $\rhs{X}$ may depend when $\val{c} = \val{v}$ are neighbours 
of 
location $(X, \val{v})$. 
\reportonly{
  \begin{restatable}{lemma}{resetrelevantpvineighbours}
    \label{lem:relevant_pvi_neighbours}
    Let $(\VCFGsyn, {\smash{\ECFGsyn}})$ be $\PBES$'s control flow graph. Then
    for all $(X, \val{v}) \in \VCFGsyn$ and all predicate environments
    $\predenv, \predenv'$ and data environments $\dataenv$:
    $$\sem{\rhs{X}}{\predenv}{\dataenv[\subst{\var{c}}{\sem{\val{v}}{}}]}
    = \sem{\rhs{X}}{\predenv'}{\dataenv[\subst{\var{c}}{\sem{\val{v}}{}}]}$$
    provided that
    $\predenv(Y)(\val{w}) = \predenv'(Y)(\val{w})$ for all
    $(Y, \val{w})$ satisfying
    $(X, \val{v}) \ECFGsyn[i] (Y, \val{w})$.
    
  \end{restatable}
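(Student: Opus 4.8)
The plan is to prove the statement by structural induction on the predicate formula $\rhs{X}$, but phrased more carefully: I would prove a slightly more general auxiliary claim about arbitrary subformulae $\psi$ of $\rhs{X}$, namely that $\sem{\psi}{\predenv}{\dataenv[\subst{\var{c}}{\sem{\val{v}}{}}]} = \sem{\psi}{\predenv'}{\dataenv[\subst{\var{c}}{\sem{\val{v}}{}}]}$ whenever $\predenv$ and $\predenv'$ agree on $\predenv(Y)(\val{w})$ for every $(Y,\val{w})$ such that the PVI giving rise to $Y(\val{w})$, when evaluated with $\var{c}$ bound to $\sem{\val{v}}{}$, actually has $\val{w}$ as its argument — i.e. the PVIs of $\psi$ that are "reachable" as CFG-neighbours of $(X,\val{v})$. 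Since the hypothesis quantifies over all $(Y,\val{w})$ with $(X,\val{v}) \ECFGsyn[i] (Y,\val{w})$, I first want to observe (using Definition~\ref{def:globalCFGHeuristic} together with the unicity constraint) that for each PVI $\predinstphi{\rhs{X}}{i}$ and each CFP index $k$, the value $\val[k]{w}$ it assigns to $\var[k]{c}$ is forced: either $\source{X}{i}{k}$ is defined and must equal $\val[k]{v}$ (and $\dest{X}{i}{k}$ pins $\val[k]{w}$), or the parameter is copied so $\val[k]{w} = \val[k]{v}$, or $\dest{X}{i}{k}$ pins $\val[k]{w}$ directly. In every case, when $\var{c}$ is interpreted as $\sem{\val{v}}{}$ in $\dataenv$, the semantic argument of that PVI's CFP-components is exactly $\sem{\val{w}}{}$; and if $\source{X}{i}{k} = \val[k]{v}$ fails for some $k$ then the source constraint tells us the PVI can be guarded by a false conjunct, so $\rhs{X}$ does not depend on it at all. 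This is the content that makes the CFG "complete".

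Concretely, the induction would go: for $\psi = b$, both sides equal $\sem{b}{}{\dataenv[\dots]}$ and there is nothing to do. For $\psi = \psi_1 \wedge \psi_2$ or $\psi_1 \vee \psi_2$, the PVIs of $\psi$ split into those of $\psi_1$ and those of $\psi_2$, the neighbour hypothesis restricts to each part, and the claim follows by the induction hypothesis applied to $\psi_1$ and $\psi_2$ and the obvious compositionality of $\sem{\cdot}{}{}$ under $\wedge$/$\vee$. For the quantifier cases $\psi = \forall d\colon D.\psi'$ or $\exists d\colon D.\psi'$, one pushes the environment update $\dataenv[\subst{d}{u}]$ inside; the subtle point is that $\var{c}$ and $d$ are distinct (by the paper's standing assumption that bound variables do not also occur free, and since CFPs occur free in $\rhs{X}$), so the updates commute and the induction hypothesis applies with $\dataenv$ replaced by $\dataenv[\subst{d}{u}]$ uniformly for all $u \in \semset{D}$. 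For the base case that matters, $\psi = X'(\val{e})$ a single PVI, say it is $\predinstphi{\rhs{X}}{i}$: by the observation above, either $\rhs{X}$ does not depend on this PVI — meaning I can replace it by $\false \wedge \predinstphi{\rhs{X}}{i}$ in $\rhs{X}$ without changing truth values, which at the level of this subformula read literally means the claim is vacuous because it is not actually a neighbour-constraint I rely on; handled by the source/target clause — or $(X,\val{v}) \ECFGsyn[i] (X', \val{w})$ with $\sem{\val{e}}{}{\dataenv[\subst{\var{c}}{\sem{\val{v}}{}}]}$ agreeing with $\sem{\val{w}}{}$ on the CFP-coordinates, and then $\sem{X'(\val{e})}{\predenv}{\dataenv[\dots]} = \predenv(X')(\dots) = \predenv'(X')(\dots)$ by hypothesis.

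The main obstacle I expect is making the single-PVI base case fully rigorous: the CFG only constrains the \emph{CFP}-coordinates of the argument $\val{w}$, while the data-parameter coordinates of $\val{e}$ are arbitrary, so I must argue that the neighbour $(X',\val{w})$ whose existence I invoke is the one whose CFP-coordinates are exactly $\sem{\val{e}}{}{\dataenv[\subst{\var{c}}{\sem{\val{v}}{}}]}\!\restriction_{\varset{C}}$, and that this $(X',\val{w})$ genuinely satisfies the edge conditions of Definition~\ref{def:globalCFGHeuristic} — which requires checking, clause by clause, that the unicity constraint's source/target/copy guarantees line up with the three disjuncts in the edge definition, and that whenever none of the disjuncts can be satisfied (i.e. $\source{X}{i}{k}$ is defined but $\neq \val[k]{v}$) the source clause of the unicity constraint lets us rewrite $\rhs{X}$ so that this PVI is irrelevant, discharging the case. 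A secondary, lighter obstacle is the bookkeeping that splits "the PVIs of $\psi$" correctly across $\wedge$, $\vee$ and the quantifiers using $\npred{\cdot}$ and $\predinstphi{\cdot}{i}$; this is routine given the inductive definition of $\varphi[i \mapsto \psi]$ recalled earlier, but it needs to be stated cleanly so the neighbour hypothesis transfers to subformulae. Everything else — compositionality of the semantics, commuting environment updates — is immediate from Definition~\ref{def:semFormula}.
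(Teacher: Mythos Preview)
Your structural induction has a genuine gap at the PVI base case. The auxiliary claim you want to prove---that $\sem{\psi}{\predenv}{\dataenv[\subst{\var{c}}{\sem{\val{v}}{}}]} = \sem{\psi}{\predenv'}{\dataenv[\subst{\var{c}}{\sem{\val{v}}{}}]}$ for every subformula $\psi$ of $\rhs{X}$---is simply false when $\psi$ is a PVI $\predinstphi{\rhs{X}}{i}$ for which $\source{X}{i}{k}$ is defined but $\source{X}{i}{k} \neq \val[k]{v}$: then no edge $(X,\val{v}) \ECFGsyn[i] (\ldots)$ exists, the hypothesis imposes no constraint on $\predenv$ versus $\predenv'$ at this PVI's argument, and the two sides of your base case may well differ. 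Your proposed escape---``rewrite $\rhs{X}$ so this PVI is irrelevant''---is correct \emph{at the level of $\rhs{X}$}, but it does not discharge the base case of an induction on \emph{subformulae}: the equivalence $\rhs{X} \equiv \rhs{X}[i \mapsto (\var[k]{c} = e \wedge \predinstphi{\rhs{X}}{i})]$ says nothing about $\psi = \predinstphi{\rhs{X}}{i}$ in isolation, so calling that case ``vacuous'' is a mistake.

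The paper's proof sidesteps this by arguing by contradiction at the level of the whole formula $\rhs{X}$ throughout: assume the two semantics differ, pick a PVI on which $\predenv$ and $\predenv'$ disagree and that actually influences $\rhs{X}$'s value, then show that PVI corresponds to a CFG edge. The source-mismatch case is handled implicitly because such a PVI is guarded by $\false$ under $\dataenv[\subst{\var{c}}{\sem{\val{v}}{}}]$ and so cannot be the one responsible for the difference. If you want to keep an inductive argument, you must either (i) weaken the inductive statement to ``the semantics of $\psi$ agree \emph{or} $\psi$ sits under a guard that evaluates to $\false$ when $\var{c} = \val{v}$'' and then thread that disjunct through the connectives, or (ii) first pass to the logically equivalent formula $\guarded{\rhs{X}}$ in which every PVI is explicitly conjoined with its guard (Corollary~\ref{cor:guardedPreservesSol}), and induct on that; in the PVI base case the source-mismatch then manifests as a literal conjunct $\false$ in the subformula itself.
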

  \begin{proof}
    Let $\predenv, \predenv'$ be predicate environments,$\dataenv$ a data
    environment, and let $(X, \val{v}) \in V$.
    Suppose that for all $(Y, \val{w})$ for which
    $(X, \val{v}) \ECFGsyn[i] (Y, \val{w})$, we know that $\predenv(Y)(\val{w}) 
    =
    \predenv'(Y)(\val{w})$.
    
    Towards a contradiction, let
    $\sem{\rhs{X}}{\predenv}{\dataenv[\subst{\var{c}}{\sem{\val{v}}{}{}}]}
    \neq 
    \sem{\rhs{X}}{\predenv'}{\dataenv[\subst{\var{c}}{\sem{\val{v}}{}{}}]}$.
    Then there must be a predicate variable instance $\predinstphi{\rhs{X}, i}$
    such that
    \begin{equation}\label{eqn:ass_pvi}
      \begin{array}{cl}
        & 
        \predenv(\predphi{\rhs{X}}{i})(\sem{\dataphi{\rhs{X}}{i}}{}{\dataenv[\subst{\var{c}}{\sem{\val{v}}{}{}}]})\\
        \neq & 
        \predenv'(\predphi{\rhs{X}}{i})(\sem{\dataphi{\rhs{X}}{i}}{}{\dataenv[\subst{\var{c}}{\sem{\val{v}}{}{}}]}).
      \end{array}
    \end{equation}
    Let
    $\dataphi{\rhs{X}}{i} = (\val{e},\val{e'})$, where $\val{e}$ are the values
    of the control flow parameters, and $\val{e'}$ are the values of the data
    parameters.
    
    Consider an arbitrary control flow parameter $\var[\ell]{c}$. We
    distinguish two cases:
    \begin{compactitem}
      \item $\source{X}{i}{\ell} \neq \bot$. Then we know
      $\dest{X}{i}{\ell} \neq \bot$, and the
      requirement for the edge  $(X, \val{v}) \ECFGsyn[i] 
      (\predphi{\rhs{X}}{i}, \val{e})$
      is satisfied for $\ell$.
      \item $\source{X}{i}{\ell} = \bot$. Since $\var[\ell]{c}$
      is a control flow parameter, we can distinguish two cases based on
      Definitions~\ref{def:LCFP} and \ref{def:GCFP}:
      \begin{compactitem}
        \item $\dest{X}{i}{\ell} \neq \bot$.  Then parameter $\ell$
        immediately satisfies the requirements that show the existence
        of the edge $(X, \val{v}) \ECFGsyn[i]
        (\predphi{\rhs{X}}{i}, \val{e})$ in the third clause in the
        definition of CFG.
        \item $\copied{X}{i}{\ell} = \ell$.
        According to the definition of $\copiedname$, we now know that
        $\val[\ell]{v} = \val[\ell]{e}$, hence the edge
        $(X, \val{v}) \ECFGsyn[i]
        (\predphi{\rhs{X}}{i}, \val{e})$ exists according to the
        second requirement in the definition of CFG.
      \end{compactitem}
    \end{compactitem}
    
    Since we have considered an arbitrary $\ell$, we know that for all $\ell$
    the requirements are satisfied, hence
    $(X, \val{v}) \ECFGsyn[i] (\predphi{\rhs{X}}{i}, \val{e})$. Then
    according to the definition of $\predenv$ and $\predenv'$,
    $\predenv(\predphi{\rhs{X}}{i})(\sem{\val{e}}{}{\dataenv[\subst{\var{c}}{\sem{\val{v}}{}{}}]})
    = 
    \predenv'(\predphi{\rhs{X}}{i})(\sem{\val{e}}{}{\dataenv[\subst{\var{c}}{\sem{\val{v}}{}{}}]})$.
    This contradicts \eqref{eqn:ass_pvi}, hence we find that 
    $\sem{\rhs{X}}{\predenv}{\dataenv[\subst{\var{c}}{\sem{\var{v}}{}{}}]}
    = 
    \sem{\rhs{X}}{\predenv'}{\dataenv[\subst{\var{c}}{\sem{\var{v}}{}{}}]}$.\qedhere
  \end{proof}
} 
\begin{example}
Using the CFPs identified earlier and an appropriate unicity constraint, we can
obtain the CFG depicted in Fig.~\ref{fig:CFG} for our running example.
\end{example}

\paragraph{Implementation.} CFGs are defined in terms of
CFPs, which in turn are obtained from a unicity constraint.
Our definition of a unicity constraint is not constructive. 
However, a unicity constraint can be derived from \emph{guards} for
a PVI. While computing the exact guard, \ie the strongest formula $\psi$ satisfying 
$\varphi \equiv
\varphi[i \mapsto (\psi \wedge \predinstphi{\varphi}{i})]$, is 
computationally 
hard, we can efficiently approximate it as follows:
\begin{definition}
Let $\varphi$ be a predicate formula. We define the \emph{guard}
of the $i$-th PVI in $\varphi$,
denoted $\guard{i}{\varphi}$, inductively as follows:
\begin{align*}
\guard{i}{b} & = \false &
\guard{i}{Y} & = \true \\
\guard{i}{\forall d \colon D . \varphi} & = \guard{i}{\varphi} &
\guard{i}{\exists d \colon D . \varphi} & = \guard{i}{\varphi} \\
\guard{i}{\varphi \land \psi} & = \begin{cases}
s(\varphi) \land \guard{i - \npred{\varphi}}{\psi} & \text{if } i > \npred{\varphi} \\
s(\psi) \land \guard{i}{\varphi} & \text{if } i \leq \npred{\varphi}
\end{cases} \span\omit\span\omit\\
\guard{i}{\varphi \lor \psi} & = \begin{cases}
s(\lnot \varphi) \land \guard{i - \npred{\varphi}}{\psi} & \text{if } i > 
\npred{\varphi} \\
s(\lnot \psi) \land \guard{i}{\varphi} & \text{if } i \leq \npred{\varphi}
\end{cases}\span\omit\span\omit
\end{align*}
where
$s(\varphi) = \varphi$ if $\npred{\varphi} = 0$, and $\true$ otherwise.
\end{definition}
We have $\varphi \equiv \varphi[i \mapsto (\guard{i}{\varphi} \wedge \predinstphi{\varphi}{i})]$;
\ie, 
$\predinstphi{\varphi}{i}$ is relevant to $\varphi$'s truth value only if
$\guard{i}{\varphi}$ is satisfiable.
\reportonly{
This is formalised int he following lemma.
\begin{lemma}\label{lem:addGuard}
Let $\varphi$ be a predicate formula, and let $i \leq \npred{\varphi}$, then
for every predicate environment $\predenv$ and data environment $\dataenv$,
$$
\sem{\varphi}{\predenv}{\dataenv}
  = \sem{\varphi[i \mapsto (\guard{i}{\varphi} \land 
  \predinstphi{\varphi}{i})]}{\predenv}{\dataenv}. 
$$
\end{lemma}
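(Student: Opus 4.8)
The statement to prove is Lemma~\ref{lem:addGuard}: that substituting $\guard{i}{\varphi} \wedge \predinstphi{\varphi}{i}$ for the $i$-th PVI preserves the semantics of $\varphi$. The natural approach is induction on the structure of $\varphi$, following exactly the inductive clauses in the definition of $\guardname$ and in the definition of $\varphi[i \mapsto \psi]$ (the latter spelled out in the earlier \reportonly\ definition). The base cases are immediate: if $\varphi = b$ then $\npred{\varphi} = 0$, so there is no $i \le \npred{\varphi}$ and the claim is vacuous; if $\varphi = Y(\val{e})$ then $i = 1$, $\guard{1}{Y(\val{e})} = \true$, and $Y(\val{e})[1 \mapsto (\true \wedge Y(\val{e}))] = \true \wedge Y(\val{e})$, which is logically equivalent to $Y(\val{e})$. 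The quantifier cases $\forall d \colon D.\varphi$ and $\exists d \colon D.\varphi$ are routine: both $\guardname$ and the substitution push through the quantifier unchanged, so the result follows from the induction hypothesis applied to $\varphi$ under the environment $\dataenv[\subst{v}{d}]$ for each $v \in \semset{D}$.

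The interesting cases are the binary connectives, and there the key observation is the role of the auxiliary function $s$: $s(\varphi) = \varphi$ when $\npred{\varphi} = 0$ and $s(\varphi) = \true$ otherwise. Consider $\varphi \wedge \psi$ with $i \le \npred{\varphi}$ (the case $i > \npred{\varphi}$ is symmetric, and the case of $\vee$ is dual, using $s(\lnot\cdot)$). Here $\guard{i}{\varphi \wedge \psi} = s(\psi) \wedge \guard{i}{\varphi}$ and the substitution acts as $(\varphi \wedge \psi)[i \mapsto \chi] = \varphi[i \mapsto \chi] \wedge \psi$ where $\chi = \guard{i}{\varphi \wedge \psi} \wedge \predinstphi{\varphi \wedge \psi}{i}$. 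Note the $i$-th PVI of $\varphi \wedge \psi$ is the $i$-th PVI of $\varphi$. So I must show
$$
\sem{\varphi \wedge \psi}{\predenv}{\dataenv} = \sem{\varphi[i \mapsto (s(\psi) \wedge \guard{i}{\varphi} \wedge \predinstphi{\varphi}{i})] \wedge \psi}{\predenv}{\dataenv}.
$$
If $\sem{\psi}{\predenv}{\dataenv}$ is false, both sides are false. If it is true, then on the left we have $\sem{\varphi}{\predenv}{\dataenv}$; on the right, the extra conjunct $s(\psi)$ inside the substitution is harmless precisely because $s(\psi)$ is implied whenever $\psi$ holds — indeed if $\npred{\psi} = 0$ then $s(\psi) = \psi$ is true here, and if $\npred{\psi} > 0$ then $s(\psi) = \true$. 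So under the assumption $\sem{\psi}{\predenv}{\dataenv}$, the right-hand conjunct reduces to $\sem{\varphi[i \mapsto (\guard{i}{\varphi} \wedge \predinstphi{\varphi}{i})]}{\predenv}{\dataenv}$, which equals $\sem{\varphi}{\predenv}{\dataenv}$ by the induction hypothesis. This closes the case.

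The main obstacle — more a bookkeeping subtlety than a genuine difficulty — is handling the index arithmetic correctly: when $i > \npred{\varphi}$, the $i$-th PVI of $\varphi \wedge \psi$ is the $(i - \npred{\varphi})$-th PVI of $\psi$, and one must check that $\guard{i}{\varphi \wedge \psi} = s(\varphi) \wedge \guard{i - \npred{\varphi}}{\psi}$ lines up with $(\varphi \wedge \psi)[i \mapsto \chi] = \varphi \wedge \psi[(i - \npred{\varphi}) \mapsto \chi]$, with the symmetric use of $s(\varphi)$ playing the role that $s(\psi)$ played above. For the disjunctive cases the guard uses $s(\lnot\varphi)$ resp.\ $s(\lnot\psi)$ rather than $s(\varphi)$ resp.\ $s(\psi)$, so the case split is on whether the \emph{other} disjunct is false: if it is true the whole disjunction is true and adding $\predinstphi{}{i}$-based conjuncts inside the replaced PVI cannot change that (the replaced subformula sits under a disjunction whose other side already holds); if it is false, the disjunction reduces to the side containing the $i$-th PVI, and $s(\lnot\psi)$ resp.\ $s(\lnot\varphi)$ holds (again: equal to $\lnot\psi$ resp.\ $\lnot\varphi$ when that side is PVI-free, and $\true$ otherwise), so the extra conjunct is harmless and the induction hypothesis finishes the argument. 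No step requires anything beyond the semantics in Definition~\ref{def:semFormula} and the two inductive definitions already given.
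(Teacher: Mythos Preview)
Your proof is correct and follows essentially the same structural induction as the paper's. The only difference is organisational: for the binary connectives you split on the semantic truth value of the sibling subformula (e.g.\ whether $\sem{\psi}{\predenv}{\dataenv}$ holds), whereas the paper splits on whether that sibling contains any PVIs (i.e.\ whether $s(\psi)$ collapses to $\true$); both routes reduce to the same induction hypothesis and both implicitly rely on the paper's no-capture convention so that $s(\psi)$, when it equals $\psi$, has the same value at the substitution point as at the top level.
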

\begin{proof}
Let $\predenv$ and $\dataenv$ be arbitrary. We proceed by induction on 
$\varphi$.
The base cases where $\varphi = b$ and $\varphi = Y(\val{e})$ are trivial, and
$\forall d \colon D . \psi$ and $\exists d \colon D . \psi$ follow immediately
from the induction hypothesis. We
describe the case where $\varphi = \varphi_1 \land \varphi_2$ in detail,
the $\varphi = \varphi_1 \lor \varphi_2$ is completely analogous.

Assume that $\varphi = \varphi_1 \land \varphi_2$.
Let $i \leq \npred{\varphi_1 \land \varphi_2}$.
 Without loss of generality assume that $i \leq \npred{\varphi_1}$, the other
 case is analogous. According to the induction hypothesis,
 \begin{equation}
 \sem{\varphi_1}{\predenv}{\dataenv}
 = \sem{\varphi_1[i \mapsto (\guard{i}{\varphi_1} \land 
 \predinstphi{\varphi_1}{i})]}{\predenv}{\dataenv} \label{eq:ih} 
 \end{equation}
 We distinguish two cases.
 \begin{compactitem}
   \item $\npred{\varphi_2} \neq 0$. Then 
   $\sem{\guard{i}{\varphi_1}}{\dataenv}{\predenv}
   = \sem{\guard{i}{\varphi_1 \land \varphi_2}}{\dataenv}{\predenv}$
 according to the definition of $\guardname$. Since $i \leq \npred{\varphi_1}$,
 we find that
 $
 \sem{\varphi_1 \land \varphi_2}{\predenv}{\dataenv}
 = \sem{(\varphi_1 \land \varphi_2)[i \mapsto (\guard{i}{\varphi_1 \land 
 \varphi_2} \land \predinstphi{\varphi_1 \land 
 \varphi_2}{i})]}{\predenv}{\dataenv}. 
 $
   \item $\npred{\varphi_2} = 0$.
   We have to show that
   $$\sem{\varphi_1 \land \varphi_2}{\predenv}{\dataenv}
   = \sem{\varphi_1[i \mapsto (\guard{i}{\varphi_1 \land \varphi_2} \land 
   \predinstphi{\varphi_1}{i})] \land \varphi_2}{\predenv}{\dataenv}$$
   
   From the semantics, it follows that
   $\sem{\varphi_1 \land \varphi_2}{\predenv}{\dataenv}
   = \sem{\varphi_1}{\predenv}{\dataenv} \land 
   \sem{\varphi_2}{\predenv}{\dataenv}.
   $
   Combined with \eqref{eq:ih}, and an application of the semantics, this yields
   $$
   \sem{\varphi_1 \land \varphi_2}{\predenv}{\dataenv}
   = \sem{\varphi_1[i \mapsto (\guard{i}{\varphi_1} \land 
   \predinstphi{\varphi_1}{i})] \land \varphi_2}{\predenv}{\dataenv}.
   $$
   According to the definition
   of $\guardname$, $\guard{i}{\varphi_1 \land \varphi_2} = \varphi_2 \land 
   \guard{i}{\varphi_1}$.
   Since $\varphi_2$ is present in the context, the desired result 
   follows.\qedhere
 \end{compactitem}
\end{proof}
We can generalise the above, and guard every predicate variable
instance in a formula with its guard, which preserves the solution of the
formula. To this end we introduce the function $\guardedname$.
\begin{definition}\label{def:guarded}
Let $\varphi$ be a predicate formula, then
$$\guarded{\varphi} \isdef \varphi[i \mapsto (\guard{i}{\varphi} \land 
\predinstphi{\varphi}{i})]_{i \leq \npred{\varphi}}$$
where $[i \mapsto \psi_i]_{i \leq \npred{\varphi}}$ is the simultaneous
syntactic substitution of all $\predinstphi{\varphi}{i}$ with $\psi_i$.
\end{definition}
The following corollary follows immediately from Lemma~\ref{lem:addGuard}.
\begin{corollary}\label{cor:guardedPreservesSol}
For all formulae $\varphi$, and for all predicate environments $\predenv$,
and data environments $\dataenv$,
$
\sem{\varphi}{\predenv}{\dataenv} = \sem{\guarded{\varphi}}{\predenv}{\dataenv} 
$
\end{corollary}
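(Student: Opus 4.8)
The plan is to reduce the simultaneous substitution in Definition~\ref{def:guarded} to a finite sequence of single-PVI substitutions, each preserving the semantics by Lemma~\ref{lem:addGuard}, and then to bridge the sequential rewriting with the simultaneous one by a purely syntactic argument. Concretely, write $n = \npred{\varphi}$ and set $\varphi^{(0)} = \varphi$ and $\varphi^{(k)} = \varphi^{(k-1)}[k \mapsto (\guard{k}{\varphi} \land \predinstphi{\varphi}{k})]$ for $1 \le k \le n$. First I would establish, by a routine induction on the structure of $\varphi$, two syntactic invariants. (i) For any subformula $\psi$, replacing one of its PVIs, say $Y(\val{e})$, by $g \land Y(\val{e})$ with $g$ a PVI-free formula leaves $\npred{\psi}$ unchanged; this applies to our substitutions because a guard is always PVI-free (from the definition of $\guardname$, every clause yields $\false$, $\true$, or a conjunction of such with terms $s(\cdot)$, and $s(\cdot)$ is either $\true$ or a PVI-free subformula). (ii) Consequently, for every $k \neq j$, replacing the $j$-th PVI by its guarded version leaves both the $k$-th PVI itself and the formula $\guard{k}{\varphi}$ untouched: $\guard{k}{\varphi}$ is a conjunction of terms $s(\chi)$ taken over the sibling subformulae $\chi$ on the path to the $k$-th PVI, and each such $s(\chi)$ is either $\true$ — when $\chi$ contains a PVI, which remains the case after the replacement by (i) — or $\chi$ itself, when $\chi$ is PVI-free and hence unaffected.

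From (i) and (ii) it follows that $\varphi^{(k-1)}$ still contains at least $k$ PVIs, that $\predinstphi{\varphi^{(k-1)}}{k} = \predinstphi{\varphi}{k}$, and that $\guard{k}{\varphi^{(k-1)}} = \guard{k}{\varphi}$. Hence $\varphi^{(k)} = \varphi^{(k-1)}[k \mapsto (\guard{k}{\varphi^{(k-1)}} \land \predinstphi{\varphi^{(k-1)}}{k})]$, so Lemma~\ref{lem:addGuard} applied to $\varphi^{(k-1)}$ gives $\sem{\varphi^{(k-1)}}{\predenv}{\dataenv} = \sem{\varphi^{(k)}}{\predenv}{\dataenv}$ for every $\predenv, \dataenv$. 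Chaining these $n$ equalities yields $\sem{\varphi}{\predenv}{\dataenv} = \sem{\varphi^{(n)}}{\predenv}{\dataenv}$.

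It then remains to note the purely syntactic identity $\varphi^{(n)} = \guarded{\varphi}$. Since each single substitution only prepends a PVI-free conjunct in front of the $k$-th PVI and, by invariant (ii), never disturbs the other PVIs or the positions at which they occur, performing the $n$ substitutions one after another has exactly the effect of the simultaneous substitution $[\,i \mapsto (\guard{i}{\varphi} \land \predinstphi{\varphi}{i})\,]_{i \le n}$ of Definition~\ref{def:guarded}; this is again a structural induction on $\varphi$ using the same bookkeeping of PVI indices. Combining the two conclusions gives $\sem{\varphi}{\predenv}{\dataenv} = \sem{\guarded{\varphi}}{\predenv}{\dataenv}$, as required.

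The main obstacle is invariant (ii): keeping the bookkeeping of PVI indices straight while arguing that guarding one PVI perturbs neither the guards nor the occurrences of the others. Once this stability is in place, the rest is a mechanical chaining of Lemma~\ref{lem:addGuard}. If one prefers to avoid the explicit $\varphi^{(k)}$ construction, the same result can be obtained by a direct structural induction on $\varphi$ that mirrors the proof of Lemma~\ref{lem:addGuard}, but then the conjunction and disjunction cases must carry the extra $s(\psi)$ (resp.\ $s(\lnot\psi)$) conjuncts through the induction hypothesis — essentially the same work, repackaged.
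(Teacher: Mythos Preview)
Your proposal is correct and follows the same approach as the paper, which simply states that the corollary ``follows immediately from Lemma~\ref{lem:addGuard}''. You have carefully spelled out the bookkeeping (guards are PVI-free, so each single substitution preserves both the PVI count and the guards of the remaining PVIs) that the paper leaves implicit; this is exactly the work needed to turn the one-line appeal to Lemma~\ref{lem:addGuard} into a rigorous argument, and nothing in it is superfluous or mistaken.
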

This corollary confirms our intuition that indeed the guards we compute
effectively guard the recursions in a formula.
 
} 
A good heuristic for defining the unicity constraints is looking
for positive occurrences of constraints of the form $d
= e$ in the guards and using this information to see if the arguments
of PVIs reduce to constants.

\section{Data Flow Analysis}\label{sec:dataflow}

Our liveness analysis is built on top of CFGs constructed using
Def.~\ref{def:globalCFGHeuristic}. The analysis proceeds as follows:
for each location in the CFG, we first identify the data parameters
that may directly affect the truth value of the corresponding predicate
formula. Then we inductively identify data parameters that can
affect such parameters through PVIs as live as well.  Upon termination,
each location is labelled by the \emph{live} parameters at that
location.
The set $\significant{\varphi}$ of parameters that affect the truth value of
a predicate formula $\varphi$, \ie, those parameters that occur in 
Boolean data terms, are approximated as follows:
\begin{align*}
\significant{b} & = \free{b} &
\significant{Y(e)} & = \emptyset \\
\significant{\varphi \land \psi} & = \significant{\varphi} \cup \significant{\psi} &
\significant{\varphi \lor \psi} & = \significant{\varphi} \cup \significant{\psi} \\
\significant{\exists d \colon D . \varphi} & = \significant{\varphi} \setminus \{ d \} &
\significant{\forall d \colon D . \varphi} & = \significant{\varphi} \setminus \{ d \}
\end{align*}
Observe that $\significant{\varphi}$ is not invariant under logical
equivalence.  We use this fact to our advantage: we assume the
existence of a function $\simplifyname$ for which we require
$\simplify{\varphi} \equiv \varphi$, and
$\significant{\simplify{\varphi}}\subseteq \significant{\varphi}$. 
An appropriately chosen function $\simplifyname$ may help to narrow
down the parameters that affect the truth value of predicate
formulae in our base case. Labelling the CFG with live variables
is achieved as follows:
\begin{definition}
\label{def:markingHeuristic}
Let $\PBES$ be a PBES and let
$(\VCFGsyn, \ECFGsyn)$ be its CFG.
The labelling $\markingsn{\syntactic}{} \colon \VCFGsyn \to
\mathbb{P}({\varset{D}^{\DP}})$ is defined as 
$\marking{\markingsn{\syntactic}{}}{X, \val{v}}  = \bigcup_{n \in \N}
\marking{\markingsn{\syntactic}{n}}{X, \val{v}}$, with
$\markingsn{\syntactic}{n}$ inductively defined as:
\[
\begin{array}{ll}
\marking{\markingsn{\syntactic}{0}}{X, \val{v}} & = 
\significant{\simplify{\rhs{X}[\var{c} := \val{v}]}}\\
\marking{\markingsn{\syntactic}{n+1}}{X, \val{v}} & =
\marking{\markingsn{\syntactic}{n}}{X, \val{v}} \\
                    & \cup \{ d \in \param{X} \cap \varset{D}^{\DP} \mid \exists {i \in \mathbb{N},
(Y,\val{w}) \in V}:
                       (X, \val{v}) \ECFGsyn[i] (Y, \val{w}) \\
                    & \qquad \land \exists {\var[\ell]{d} \in 
                    \marking{\markingsn{\syntactic}{n}}{Y, \val{w}}}:~
                                 \affects{d}{\dataphi[\ell]{\rhs{X}}{i}}
                                 \} 
\end{array}
\]

\end{definition}
The set $\marking{\markingsn{\syntactic}{}}{X, \val{v}}$ approximates the set of
parameters potentially live at location $(X, \val{v})$; all other data
parameters are guaranteed to be ``dead'', \ie, irrelevant.
\begin{example}\label{exa:globalCFGLabelled} The
labelling computed for our running example is depicted in Fig.~\ref{fig:CFG}.
One can cheaply establish that
$k^Z \notin \marking{\markingsn{\syntactic}{0}}{Z,1,2}$ since assigning
value $2$ to $j^Z$ in $Z$'s right-hand side effectively allows to
reduce subformula $(k < 10 \vee j =2)$ to $\true$. We have
$l \in \marking{\markingsn{\syntactic}{1}}{Z,1,2}$ since
we have $k^Y \in \marking{\markingsn{\syntactic}{0}}{Y,1,1}$.

\end{example}
\reportonly{
The labelling from Definition~\ref{def:markingHeuristic} induces
a relation $\markrel{\markingsn{\syntactic}{}}$ on signatures as follows.
\begin{definition}\label{def:markrelSyntactic}
Let $\markingsn{\syntactic}{} \colon V \to \mathbb{P}(\varset{D}^{\DP})$ be a 
labelling. $\markingsn{\syntactic}{}$ induces a relation
$\markrel{\markingsn{\syntactic}{}}$ such that $(X, \sem{\val{v}}{}{}, 
\sem{\val{w}}{}{})
\markrel{\markingsn{\syntactic}{}} (Y, \sem{\val{v'}}{}{}, \sem{\val{w'}}{}{})$ 
if and only if
$X = Y$, $\sem{\val{v}}{}{} = \sem{\val{v'}}{}{}$, and
$\forall \var[k]{d} \in \marking{\markingsn{\syntactic}{}}{X,
  \val{v}}: \semval[k]{w} = \semval[k]{w'}$.
\end{definition}
Observe that the relation $\markrel{\markingsn{\syntactic}{}}$
allows for relating \emph{all} instances of the non-labelled data parameters at
a given control flow location.
We prove that, if locations are related using the relation
$\markrel{\markingsn{\syntactic}{}}$, then the corresponding instances in the
PBES have the same solution by showing that $\markrel{\markingsn{\syntactic}{}}$
is a consistent correlation.

In order to prove this, we first show that given a predicate environment and two
data environments, if the solution of a formula differs between those
environments, and all predicate variable instances in the formula have the same
solution, then there must be a significant parameter $d$ in
the formula that gets a different value in the two data environments.
\begin{lemma}
\label{lem:non_recursive_free}
For all formulae $\varphi$, predicate environments $\predenv$,
and data environments $\dataenv, \dataenv'$, if 
$\sem{\varphi}{\predenv}{\dataenv} \neq \sem{\varphi}{\predenv}{\dataenv'}$
and for all $i \leq \npred{\varphi}$, 
$\sem{\predinstphi{\varphi}{i}}{\predenv}{\dataenv}
 = \sem{\predinstphi{\varphi}{i}}{\predenv}{\dataenv'}$,
then $\exists d \in \significant{\varphi}: \dataenv(d) \neq \dataenv'(d)$.
\end{lemma}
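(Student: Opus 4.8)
The plan is to proceed by structural induction on $\varphi$, contrapositively phrased: assuming that $\dataenv$ and $\dataenv'$ agree on every $d \in \significant{\varphi}$ and that all PVIs in $\varphi$ evaluate identically under the two data environments (with the fixed $\predenv$), we show $\sem{\varphi}{\predenv}{\dataenv} = \sem{\varphi}{\predenv}{\dataenv'}$. The base cases are immediate: for $\varphi = b$ we have $\significant{b} = \free{b}$, so $\dataenv$ and $\dataenv'$ agree on all free data variables of $b$ and hence $\sem{b}{\predenv}{\dataenv} = \sem{b}{\predenv}{\dataenv'}$; for $\varphi = Y(\val{e})$ the single PVI is $\predinstphi{\varphi}{1} = \varphi$ itself, so the hypothesis on PVIs directly gives the conclusion.

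For the inductive step on $\varphi_1 \land \varphi_2$ (and dually $\varphi_1 \lor \varphi_2$), note $\significant{\varphi_1 \land \varphi_2} = \significant{\varphi_1} \cup \significant{\varphi_2}$, so agreement on the significant parameters of the conjunction gives agreement on those of each conjunct. The PVIs of $\varphi_1 \land \varphi_2$ are exactly those of $\varphi_1$ followed by those of $\varphi_2$ (by the definition of $\npred{\cdot}$ and $\predinstname$), so the hypothesis that all PVIs agree transfers to each conjunct. The induction hypothesis then yields $\sem{\varphi_1}{\predenv}{\dataenv} = \sem{\varphi_1}{\predenv}{\dataenv'}$ and likewise for $\varphi_2$, and the semantics of $\land$ finishes the case. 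For the quantifier cases $\forall d \colon D.\,\varphi$ and $\exists d \colon D.\,\varphi$, we have $\significant{\forall d \colon D.\,\varphi} = \significant{\varphi} \setminus \{d\}$, and the PVIs of the quantified formula are exactly those of $\varphi$. Fix an arbitrary $v \in \semset{D}$; then $\dataenv[v/d]$ and $\dataenv'[v/d]$ agree on $d$ (both map it to $v$) and on every other variable in $\significant{\varphi}$, hence on all of $\significant{\varphi}$. One must also check that the PVIs of $\varphi$ still evaluate identically under $\dataenv[v/d]$ and $\dataenv'[v/d]$; this is where the standing assumption that bound variables do not also occur free comes into play, so that $d$ does not clash with anything relevant, and the PVI-agreement hypothesis for the quantified formula (which quantifies over the same $d$ internally in the semantics) can be invoked for each $v$. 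Applying the induction hypothesis gives $\sem{\varphi}{\predenv}{\dataenv[v/d]} = \sem{\varphi}{\predenv}{\dataenv'[v/d]}$ for every $v$, and the semantics of the quantifier concludes.

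The main subtlety — the step I expect to need the most care — is the quantifier case: the hypothesis of the lemma quantifies over PVIs of the whole formula $\forall d \colon D.\,\varphi$ under the \emph{original} environments $\dataenv,\dataenv'$, whereas the induction hypothesis needs PVI-agreement under the \emph{updated} environments $\dataenv[v/d],\dataenv'[v/d]$ for each $v$. Bridging this requires the convention (stated just before Definition~\ref{def:PBES}) that a quantifier-bound variable does not occur free in the same formula, together with the observation that $\sem{\predinstphi{\varphi}{i}}{\predenv}{\dataenv[v/d]}$ depends on $\dataenv$ only through $\free{\predinstphi{\varphi}{i}}$; combined with the structure of the semantics of quantifiers this lets the PVI-agreement propagate through. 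Everything else is routine bookkeeping with the definitions of $\significant{\cdot}$, $\npred{\cdot}$, and $\predinstname$.
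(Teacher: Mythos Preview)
Your proposal is essentially the paper's proof, recast contrapositively: both proceed by structural induction on $\varphi$ with the same case split, and the paper's quantifier case makes exactly the move you single out as delicate (it simply asserts that the PVI-agreement transfers to the updated environments $\dataenv[\subst{e}{u}],\dataenv'[\subst{e}{u}]$, without further argument). Your treatment of that step is at the same level of rigor as the paper's; nothing else differs beyond the direct-versus-contrapositive phrasing.
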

\begin{proof}
We proceed by induction on $\varphi$.
\begin{compactitem}
  \item $\varphi = b$. Trivial.
  \item $\varphi = Y(e)$. In this case the two preconditions
        contradict, and the result trivially follows.
  \item $\varphi = \forall e \colon D . \psi$. Assume that
  $\sem{\forall e \colon D . \psi}{\predenv}{\dataenv}
    \neq \sem{\forall e \colon D . \psi}{\predenv}{\dataenv'}$, and furthermore,
    $\forall i \leq \npred{\forall e \colon D . \psi}:
      \sem{\predinstphi{\forall e \colon D . \psi}{i}}{\predenv}{\dataenv}
    = \sem{\predinstphi{\forall e \colon D . \psi}{i}}{\predenv}{\dataenv'}$.
    
  According to the semantics, we have
  $\forall u \in \semset{D} . \sem{\psi}{\predenv}{\dataenv[\subst{e}{u}]}
    \neq \forall u' \in \semset{D} . 
    \sem{\psi}{\predenv}{\dataenv'[\subst{e}{u'}]}$,
  so $\exists u \in \semset{D}$ such that
  $\sem{\psi}{\predenv}{\dataenv[\subst{e}{u}]}
    \neq \sem{\psi}{\predenv}{\dataenv'[\subst{e}{u}]}$.
  Choose an arbitrary such $u$. Observe that also
  for all $i \leq \npred{\psi}$, we know that
  $$\sem{\predinstphi{\psi}{i}}{\predenv}{\dataenv[\subst{e}{u}]}
    = \sem{\predinstphi{\psi}{i}}{\predenv}{\dataenv'[\subst{e}{u}]}.$$
  According to the induction hypothesis, there exists some $d \in 
  \significant{\psi}$
  such that $\dataenv[\subst{e}{u}](d) \neq \dataenv'[\subst{e}{u}](d)$.
  Choose such a $d$, and observe that $d \neq e$ since otherwise $u \neq u$,
  hence $d \in \significant{\forall e \colon D . \psi}$,
  which is the desired result.
  \item $\varphi = \exists e \colon D . \psi$. Analogous to the previous case.
  \item $\varphi = \varphi_1 \land \varphi_2$. Assume that
  $\sem{\varphi_1 \land \varphi_2}{\predenv}{\dataenv}
    \neq \sem{\varphi_1 \land \varphi_2}{\predenv}{\dataenv'}$, and suppose
    that that for all $i \leq \npred{\varphi_1 \land \varphi_2}$, we know that
      $\sem{\predinstphi{\varphi_1 \land \varphi_2}{i}}{\predenv}{\dataenv}
    = \sem{\predinstphi{\varphi_1 \land \varphi_2}{i}}{\predenv}{\dataenv'}$.
  According to the first assumption, either
  $\sem{\varphi_1}{\predenv}{\dataenv} \neq 
  \sem{\varphi_1}{\predenv}{\dataenv'}$,
  or $\sem{\varphi_2}{\predenv}{\dataenv} \neq 
  \sem{\varphi_2}{\predenv}{\dataenv'}$.
  
  Without loss of generality, assume that
  $\sem{\varphi_1}{\predenv}{\dataenv} \neq 
  \sem{\varphi_1}{\predenv}{\dataenv'}$,
  the other case is completely analogous.
  Observe that from our second assumption it follows that
  $\forall i \leq \npred{\varphi_1}:
      \sem{\predinstphi{\varphi_1}{i}}{\predenv}{\dataenv}
    = \sem{\predinstphi{\varphi_1}{i}}{\predenv}{\dataenv'}$.
  According to the induction hypothesis, we now find some
  $d \in \significant{\varphi_1}$ such that $\dataenv(d) \neq \dataenv'(d)$.
  Since   $\significant{\varphi_1} \subseteq \significant{\varphi_1 \land 
  \varphi_2}$,
  our result follows.
  \item $\varphi = \varphi_1 \lor \varphi_2$. Analogous to the previous case. 
  \qedhere
\end{compactitem}
\end{proof}
This is now used in proving the following proposition, that shows that related
signatures have the same solution. This result follows from the fact that
$\markrel{\markingsn{\syntactic}{}}$ is a consistent correlation.
\begin{proposition}
\label{prop:ccSyn}
Let $\PBES$ be a PBES, with global control flow graph $(\VCFGsyn, \ECFGsyn)$,
and labelling $\markingsn{\syntactic}{}$. For all predicate environments
$\predenv$ and data environments $\dataenv$,
$$(X, \semval{v}, \semval{w}) \markrel{\markingsn{\syntactic}{}}
(Y, \semval{v'}, \semval{w'})
\implies \sem{\PBES}{\predenv}{\dataenv}(X(\val{v},\val{w})) =
\sem{\PBES}{\predenv}{\dataenv}(Y(\val{v'},\val{w'})).$$
\end{proposition}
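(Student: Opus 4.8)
The plan is to show that the relation $\markrel{\markingsn{\syntactic}{}}$ is a consistent correlation on $\PBES$ (with itself), after which the proposition follows immediately from Theorem~\ref{thm:cc}. So we fix the relation $\rel{R} \isdef \markrel{\markingsn{\syntactic}{}}$ and verify the two conditions of Definition~\ref{def:consistent-correlation}. Condition~1 (block membership) is trivial: since $\markrel{\markingsn{\syntactic}{}}$ only relates signatures $(X, \semval{v}, \semval{w})$ and $(Y, \semval{v'}, \semval{w'})$ with $X = Y$, both signatures belong to exactly the same block $\block{i}{\PBES}$. The real work is condition~2: for every $\predenv \in \correnv{\rel{R}}$ and every $\dataenv$, if $(X, \semval{v}, \semval{w}) \rel{R} (X, \semval{v'}, \semval{w'})$ then $\sem{\rhs{X}}{\predenv \dataenv[\subst{(\var{c},\var{d^X})}{(\val{v},\val{w})}]} = \sem{\rhs{X}}{\predenv \dataenv[\subst{(\var{c},\var{d^X})}{(\val{v'},\val{w'})}]}$.

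To establish condition~2, fix such a $\predenv \in \correnv{\rel{R}}$, a $\dataenv$, and related signatures; write $\dataenv_1 = \dataenv[\subst{\var{c}}{\val{v}}][\subst{\var{d^X}}{\val{w}}]$ and $\dataenv_2 = \dataenv[\subst{\var{c}}{\val{v'}}][\subst{\var{d^X}}{\val{w'}}]$. By definition of $\markrel{\markingsn{\syntactic}{}}$ we have $\sem{\val{v}}{}{} = \sem{\val{v'}}{}{}$, so the two environments agree on all CFPs, and they agree on every data parameter in $\marking{\markingsn{\syntactic}{}}{X,\val{v}}$. Suppose for contradiction that $\sem{\rhs{X}}{\predenv}{\dataenv_1} \neq \sem{\rhs{X}}{\predenv}{\dataenv_2}$. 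Replacing $\rhs{X}$ by $\simplify{\rhs{X}[\var{c} := \val{v}]}$ preserves both semantics (since $\sem{\val{v}}{}{} = \sem{\val{v'}}{}{}$ and $\simplify{\cdot} \equiv \cdot$), so the same inequality holds for $\varphi' \isdef \simplify{\rhs{X}[\var{c} := \val{v}]}$. Now I would split on whether all PVIs of $\varphi'$ are invariant between $\dataenv_1$ and $\dataenv_2$. If they are, Lemma~\ref{lem:non_recursive_free} yields a significant parameter $d \in \significant{\varphi'} \subseteq \marking{\markingsn{\syntactic}{0}}{X,\val{v}}$ with $\dataenv_1(d) \neq \dataenv_2(d)$, contradicting that the two environments agree on all marked parameters. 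Otherwise, some PVI $\predinstphi{\varphi'}{i} = \predphi{\varphi'}{i}(\dataphi{\varphi'}{i})$ has $\predenv(\predphi{\varphi'}{i})(\sem{\dataphi{\varphi'}{i}}{}{\dataenv_1}) \neq \predenv(\predphi{\varphi'}{i})(\sem{\dataphi{\varphi'}{i}}{}{\dataenv_2})$; matching up the PVIs of $\varphi'$ with those of $\rhs{X}$ (they differ only by guard conjuncts and the CFP substitution, which does not change the PVI names or their data arguments up to the value of $\val{c}$), this gives an edge $(X, \val{v}) \ECFGsyn[i] (Y, \val{w})$ for $Y = \predphi{\rhs{X}}{i}$ and some location $(Y,\val{w})$ reached via Def.~\ref{def:globalCFGHeuristic}, with $\val{w}$ determined by the CFP components of $\dataphi{\rhs{X}}{i}$ under either environment (these agree).

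The crux is then the inductive argument on $n$ showing that $(X,\val{v}) \ECFGsyn[i] (Y,\val{w})$ together with disagreement on the value of $\predenv(Y)$ at the two instantiations of $\dataphi{\rhs{X}}{i}$ forces a $\rel{R}$-violation, contradicting $\predenv \in \correnv{\rel{R}}$. Concretely: the CFP-part of the argument vector is $\val{w}$ in both cases, so the disagreement is confined to the data-parameter part; hence there is some position $\ell$ with $\sem{\dataphi[\ell]{\rhs{X}}{i}}{}{\dataenv_1} \neq \sem{\dataphi[\ell]{\rhs{X}}{i}}{}{\dataenv_2}$, and since the only source of disagreement between $\dataenv_1$ and $\dataenv_2$ on data parameters is on $\param{X} \setminus \marking{\markingsn{\syntactic}{}}{X,\val{v}}$, this forces $\affects{d}{\dataphi[\ell]{\rhs{X}}{i}}$ for some unmarked $d$. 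But by Def.~\ref{def:markingHeuristic}, as soon as $\var[\ell]{d} \in \marking{\markingsn{\syntactic}{n}}{Y,\val{w}}$ this $d$ would have been added to $\marking{\markingsn{\syntactic}{n+1}}{X,\val{v}}$ — so $\var[\ell]{d} \notin \marking{\markingsn{\syntactic}{}}{Y,\val{w}}$, which means $\val{w}$ and $\val{w}$ (the two instantiations, which already agree on CFPs) agree on all \emph{marked} parameters of $(Y,\val{w})$, i.e.\ the two signatures $(Y, \sem{\dataphi{\rhs{X}}{i}}{}{\dataenv_1})$ and $(Y, \sem{\dataphi{\rhs{X}}{i}}{}{\dataenv_2})$ are $\rel{R}$-related. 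Since $\predenv \in \correnv{\rel{R}}$, $\predenv(Y)$ must agree on them — the desired contradiction. I expect the main obstacle to be the bookkeeping that matches PVIs of $\simplify{\rhs{X}[\var{c}:=\val{v}]}$ with those of $\rhs{X}$ and the CFG edges of Def.~\ref{def:globalCFGHeuristic}: one must argue that $\simplifyname$ and the CFP substitution do not destroy the correspondence between PVI positions and their data arguments, and that the CFP components of $\dataphi{\rhs{X}}{i}$ indeed evaluate (under both $\dataenv_1,\dataenv_2$) to a genuine CFG-neighbour location $\val{w}$; Lemma~\ref{lem:relevant_pvi_neighbours} and the completeness remark following Def.~\ref{def:globalCFGHeuristic} are exactly what make this go through.
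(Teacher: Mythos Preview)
Your proposal is correct and follows essentially the same route as the paper: both show that $\markrel{\markingsn{\syntactic}{}}$ is a consistent correlation and then invoke Theorem~\ref{thm:cc}, using Lemma~\ref{lem:non_recursive_free} together with the fixed-point property of the labelling to derive the contradiction. Your explicit case split on whether all PVIs agree is just a reorganisation of the paper's argument (which first shows all PVIs agree and then applies Lemma~\ref{lem:non_recursive_free}), and your remark about the bookkeeping needed to match PVIs of $\simplify{\rhs{X}[\var{c}:=\val{v}]}$ with those of $\rhs{X}$ is well taken---the paper handles this implicitly via Lemma~\ref{lem:relevant_pvi_neighbours} without spelling out the correspondence.
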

\begin{proof}
We show that $\markrel{\markingsn{\syntactic}{}}$ is a consistent correlation.
The result then follows immediately from Theorem~\ref{thm:cc}.

Let $n$ be the smallest number such that for all $X, \val{v}$,
$\markingsn{\syntactic}{n+1}(X, \val{v})
= \markingsn{\syntactic}{n}(X, \val{v})$, and hence
$\markingsn{\syntactic}{n}(X, \val{v}) = \markingsn{\syntactic}{}(X, \val{v})$.
Towards a contradiction, suppose that $\markrel{\markingsn{\syntactic}{}}$
is not a consistent correlation. Since $\markrel{\markingsn{\syntactic}{}}$
is not a consistent correlation, there exist
$X, X', \val{v}, \val{v'}, \val{w}, \val{w'}$ such that
$(X, \semval{v}, \semval{w}) \markrel{\markingsn{\syntactic}{n}} (X', 
\semval{v'}, \semval{w'})$,
and
\begin{equation*}
\exists \predenv \in \correnv{\markrel{\markingsn{\syntactic}{n}}}, \dataenv:
\sem{\rhs{X}}{\predenv}{\dataenv[\subst{\var{c}}{\semval{v}}, 
\subst{\var{d}}{\semval{w}}]}
\neq \sem{\rhs{X'}}{\predenv}{\dataenv[\subst{\var{c}}{\semval{v'}}, 
\subst{\var{d}}{\semval{w'}}]}.
\end{equation*}
According to Definition~\ref{def:markrelSyntactic}, $X = X'$, and $\semval{v} = 
\semval{v'}$,
hence this is equivalent to
\begin{equation}\label{eq:equal_phi}
\exists \predenv \in \correnv{\markrel{\markingsn{\syntactic}{n}}}, \dataenv: 
\sem{\rhs{X}}{\predenv}{\dataenv[\subst{\var{c}}{\semval{v}}, 
\subst{\var{d}}{\semval{w}}]}
\neq \sem{\rhs{X}}{\predenv}{\dataenv[\subst{\var{c}}{\semval{v}}, 
\subst{\var{d}}{\semval{w'}}]}.
\end{equation}
Let $\predenv$ and $\dataenv$ be such, and let
$\dataenv_1 = \dataenv[\subst{\var{c}}{\semval{v}}, 
\subst{\var{d}}{\semval{w}}]$
and $\dataenv_2 = \dataenv[\subst{\var{c}}{\semval{v}}, 
\subst{\var{d}}{\semval{w'}}]$.
Define $ \varphi'_X \isdef \simplify{\rhs{X}[\var{c} := \val{v}]}.$
Since the values in $\val{v}$ are closed, and from the definition of 
$\simplifyname$,
we find that $\sem{\rhs{X}}{\predenv}{\dataenv_1} = 
\sem{\varphi'_X}{\predenv}{\dataenv_1}$,
and likewise for $\dataenv_2$. Therefore, we know that
\begin{equation}
\label{eq:equal_phi'}
\sem{\varphi'_X}{\predenv}{\dataenv_1} \neq 
\sem{\varphi'_X}{\predenv}{\dataenv_2}.
\end{equation}

Observe that for all $\var[k]{d} \in \marking{\markingsn{\syntactic}{}}{X, 
\val{v}}$, 
$\semval[k]{w} = \semval[k]{w'}$ by definition of
$\markrel{\markingsn{\syntactic}{}}$.
Every predicate variable instance that might change the solution of $\varphi'_X$
is a neighbour of $(X, \val{v})$ in the control flow graph, according to Lemma
\ref{lem:relevant_pvi_neighbours}.
Take an arbitrary predicate variable instance
$\predinstphi{\rhs{X}}{i} = Y(\val{e}, \val{e'})$ in $\varphi'_X$.
We first show that $\sem{\val[\ell]{e'}}{}{\dataenv_1}
  = \sem{\val[\ell]{e'}}{}{\dataenv_2}$ for all $\ell$.

Observe that $\sem{\val{e}}{}{\dataenv_1} = \sem{\val{e}}{}{\dataenv_2}$ since
$\val{e}$ are expressions substituted for control flow parameters, and hence
are either constants, or the result of copying.

Furthermore, there is no unlabelled parameter $\var[k]{d}$ that can influence a 
labelled parameter $\var[\ell]{d}$ at location $(Y, \val{u})$. If there is a
$\var[\ell]{d} \in \markingsn{\syntactic}{n}(Y, \val{u})$ such that
$\var[k]{d} \in \free{\val[\ell]{e'}}$, and
$\var[k]{d} \not \in \markingsn{\syntactic}{n}(X, \val{v})$, then by
definition of labelling $\var[k]{d} \in \markingsn{\syntactic}{n+1}(X, 
\val{v})$,
which contradicts the assumption that the labelling is stable, so it follows 
that
\begin{equation} \label{eq:equivalent_arguments_Xi}
\sem{\val[\ell]{e'}}{}{\dataenv_1}
  = \sem{\val[\ell]{e'}}{}{\dataenv_2}\text{ for all }\ell.
\end{equation} 
  
From \eqref{eq:equivalent_arguments_Xi}, and since we have chosen the predicate
variable instance arbitrarily, it follows that for all $1 \leq i \leq 
\npred{\varphi'_X}$,
$\sem{X(\val{e},\val{e'})}{\predenv}{\dataenv_1}
  = \sem{X(\val{e},\val{e'})}{\predenv}{\dataenv_2}$.
Together with \eqref{eq:equal_phi'}, according to 
Lemma~\ref{lem:non_recursive_free},
this implies that there is some $d \in \significant{\varphi'_X}$
such that $\dataenv_1(d) \neq \dataenv_2(d)$. From the definition of
$\markingsn{\syntactic}{0}$, however, it follows that $d$ must be labelled
in $\marking{\markingsn{\syntactic}{0}}{X, \val{v}}$, and hence also in 
$\marking{\markingsn{\syntactic}{n}}{X, \val{v}}$.
According to the definition of $\markrel{\markingsn{\syntactic}{n}}$ it then is
the case that $\dataenv_1(d) = \dataenv_2(d)$, which is a contradiction.
Since also in this case we derive a contradiction, the original assumption that
$\markrel{\markingsn{\syntactic}{}}$ is not a consistent correlation does not
hold, and we conclude that $\markrel{\markingsn{\syntactic}{}}$ is a consistent
correlation.  \qedhere
\end{proof}
} 
%
\label{sec:reset}
A parameter $d$ that is not live at a location 
can be assigned a fixed default value. To this end
the corresponding data argument of the PVIs that lead to that location
are replaced by a default value $\init{d}$. This is achieved by
function $\resetname$, defined below:
\reportonly{
\begin{definition}
\label{def:reset}
Let $\PBES$ be a PBES, let $(V, \to)$ be its CFG, with labelling
$\markingsn{\syntactic}{}$\!. Resetting a PBES is inductively defined on the 
structure of
$\PBES$.
$$
\begin{array}{lcl}
\reset{\markingsn{\syntactic}{}}{\emptyPBES} & \isdef & \emptyPBES \\
\reset{\markingsn{\syntactic}{}}{\sigma X(\var{c} \colon \vec{C}, \var{d} 
\colon \vec{D}) = \varphi) \PBES'}  & \isdef
         & (\sigma \changed{X}(\var{c} \colon \vec{C}, \var{d} \colon\vec{D}) = 
         \reset{\markingsn{\syntactic}{}}{\varphi}) 
         \reset{\markingsn{\syntactic}{}}{\PBES'} \\

\end{array}
$$
Resetting for formulae is defined inductively as follows:
$$
\begin{array}{lcl}
\reset{\markingsn{\syntactic}{}}{b} & \isdef & b\\
\reset{\markingsn{\syntactic}{}}{\varphi \land \psi} & \isdef & 
\reset{\markingsn{\syntactic}{}}{\varphi} \land 
\reset{\markingsn{\syntactic}{}}{\psi}\\
\reset{\markingsn{\syntactic}{}}{\varphi \lor \psi} & \isdef & 
\reset{\markingsn{\syntactic}{}}{\varphi} \lor 
\reset{\markingsn{\syntactic}{}}{\psi}\\
\reset{\markingsn{\syntactic}{}}{\forall d \colon D. \varphi} & \isdef & 
\forall d \colon D. \reset{\markingsn{\syntactic}{}}{\varphi} \\
\reset{\markingsn{\syntactic}{}}{\exists d \colon D. \varphi} & \isdef & 
\exists d \colon D. \reset{\markingsn{\syntactic}{}}{\varphi} \\
\reset{\markingsn{\syntactic}{}}{X(\val{e}, \val{e'})} & \isdef &
   \bigwedge_{\val{v} \in \values{\var{c}}} (\val{e} = \val{v} \implies 
   \changed{X}(\val{v}, \resetvars{(X, 
   \val{v})}{\markingsn{\syntactic}{}}{\val{e'}}))
\end{array}
$$
With $\val{e} = \val{v}$ we denote that for all $i$,
$\val[i]{e} = \var[i]{v}$.
The function $\resetvars{(X, \val{v})}{\markingsn{\syntactic}{}}{\val{e'}}$ is 
defined 
positionally as follows:
$$
\ind{\resetvars{(X, \val{v})}{\markingsn{\syntactic}{}}{\val{e'}}}{i} = 
\begin{cases}
\val[i]{e'} & \text{ if } \var[i]{d} \in
\marking{\markingsn{}{}}{X, \val{v}}
\\
\init{\var[i]{d}} & \text{otherwise}.
\end{cases}
$$
\end{definition}

\begin{remark}
We can reduce the number of equivalences we introduce in resetting a recurrence.
This effectively reduces the guard as follows.

Let $X \in \bnd{\PBES}$, such that $Y(\val{e}, \val{e'}) = 
\predinstphi{\rhs{X}}{i}$,
and let $I = \{ j \mid \dest{X}{i}{j} = \bot \} $ denote the indices of the
control flow parameters for which the destination is undefined.

Define $\var{c'} = \var[i_1]{c}, \ldots, \var[i_n]{c}$ for $i_n \in I$,
and $\val{f} = \val[i_1]{e}, \ldots, \val[i_n]{e}$ to be the vectors
of control flow parameters for which the destination is undefined, and the
values that are assigned to them in predicate variable instance $i$. Observe
that these are the only control flow parameters that we need to constrain in
the guard while resetting.

We can redefine $\reset{\markingsn{\syntactic}{}}{X(\val{e}, \val{e'})}$ as
follows.
$$\reset{\markingsn{\syntactic}{}}{X(\val{e}, \val{e'})} \isdef
   \bigwedge_{\val{v'} \in \values{\var{c'}}}
        ( \val{f} = \val{v'}  \implies 
            \changed{X}(\val{v}, \resetvars{\markingsn{\syntactic}{}}{(X, 
            \val{v})}{\val{e'}}) ).$$
In this definition
$\val{v}$ is defined positionally as $$\val[j]{v} = 
 \begin{cases}
 \val[j]{v'} & \text{if } j \in I \\
 \dest{X}{i}{j} & \text{otherwise}
 \end{cases}$$
\end{remark}
Resetting dead parameters preserves the solution of the PBES. We formalise
this in Theorem~\ref{thm:resetSound} below. Our proof is based on consistent
correlations. We first define the relation $R^{\resetname}$, and we show that
this is indeed a consistent correlation. Soundness then follows from
Theorem~\ref{thm:cc}. Note that  $R^{\resetname}$ uses the relation
$\markrel{\markingsn{\syntactic}{}}$ from Definition~\ref{def:markrelSyntactic}
to relate predicate variable instances of the original equation system. The 
latter
is used in the proof of Lemma~\ref{lem:resetRecursion}.
\begin{definition}\label{def:resetRelSyn}
Let $R^{\resetname}$ be the relation defined as follows.
$$
\begin{cases}
X(\semval{v}, \semval{w})) R^{\resetname} \changed{X}(\semval{v},
\sem{\resetvars{(X, \val{v})}{\markingsn{\syntactic}{}}{\val{w})}}{}{}) \\
X(\semval{v}, \semval{w}) R^{\resetname} X(\semval{v}, \semval{w'}) & \text{if }
 X(\semval{v}, \sem{\val{w}}{}{}) \markrel{\markingsn{\syntactic}{}} 
 X(\semval{v}, \semval{w'})
\end{cases}
$$
\end{definition}
We first show
that we can unfold the values of the control flow parameters in every predicate
variable instance, by duplicating the predicate variable instance, and
substituting the values of the CFPs.
\begin{lemma}\label{lem:unfoldCfl}
Let $\predenv$ and $\dataenv$ be environments, and let $X \in \bnd{\PBES}$,
then for all $i \leq \npred{\rhs{X}}$, such that $\predinstphi{\rhs{X}}{i}
= Y(\val{e},\val{e'})$,
$$\sem{Y(\val{e}, \val{e'}))}{\predenv}{\dataenv} =
\sem{\bigwedge_{\val{v}\in \values{\var{c}}} (\val{e} = \val{v} \implies
Y(\val{v}, \val{e'})}{\predenv}{\dataenv}$$
\end{lemma}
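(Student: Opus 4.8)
The statement is a pure semantic identity about a single PVI, so the plan is to unfold both sides with Definition~\ref{def:semFormula} and reduce to a membership claim about the control-flow arguments. Write $\val{u} = \sem{\val{e}}{}{\dataenv}$ and $\val{u'} = \sem{\val{e'}}{}{\dataenv}$, so the left-hand side equals $\predenv(Y)(\val{u},\val{u'})$. On the right-hand side each conjunct $\val{e} = \val{v} \implies Y(\val{v},\val{e'})$ (\ie $\val{e}\neq\val{v} \vee Y(\val{v},\val{e'})$) evaluates to $\mathit{true}$ when $\sem{\val{v}}{}{}\neq\val{u}$ and to $\predenv(Y)(\sem{\val{v}}{}{},\val{u'}) = \predenv(Y)(\val{u},\val{u'})$ when $\sem{\val{v}}{}{}=\val{u}$. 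Hence the whole conjunction equals $\predenv(Y)(\val{u},\val{u'})$ as soon as at least one $\val{v}\in\values{\var{c}}$ satisfies $\sem{\val{v}}{}{}=\val{u}$, and collapses to $\mathit{true}$ otherwise. Everything therefore reduces to $\val{u}\in\values{\var{c}}$, \ie $\sem{\val[k]{e}}{}{\dataenv}\in\values{\var[k]{c}}$ for every control-flow position $k$.

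For this I would use that the control-flow arguments of a PVI are, by the construction of CFPs, either pinned to a constant or merely passed through. Fix $k$ and consider the $i$-th PVI $\predinstphi{\rhs{X}}{i} = Y(\val{e},\val{e'})$. Since $\var[k]{c}$ is a CFP it is an LCFP and a GCFP, so Definition~\ref{def:LCFP} (if $\predphi{\rhs{X}}{i} = X$) and Definition~\ref{def:GCFP} (if $\predphi{\rhs{X}}{i}\neq X$) applied to this PVI give that either $\dest{X}{i}{k}$ is defined, or $\copied{X}{i}{m} = k$ for some CFP $\var[m]{c}$ of $X$. In the cross-equation subcase I would invoke the control-structure condition of Definition~\ref{def:control_structure} together with Assumption~\ref{ass:names}: $\copied{X}{i}{m} = k$ yields $\var[k]{c}\sim\var[m]{c}$, hence $\var[k]{c}\approx\var[m]{c}$, and since both lie in $\param{X}$ the control-structure condition forces $m=k$. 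So in every case either $\dest{X}{i}{k}$ is defined, or $\copied{X}{i}{k}=k$.

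It remains to read off the value. If $\dest{X}{i}{k}=w$, then $w\in\values{\var[k]{c}}$ by the definition of $\valuesname$, and the target clause of the unicity constraint gives $\rhs{X}\equiv\rhs{X}[i\mapsto(\dataphi[k]{\rhs{X}}{i}=w\wedge\predinstphi{\rhs{X}}{i})]$, so that whenever the PVI contributes to the truth of $\rhs{X}$ its $k$-th argument evaluates to $w$; if $\copied{X}{i}{k}=k$, the copy clause gives $\rhs{X}\equiv\rhs{X}[i\mapsto(\dataphi[k]{\rhs{X}}{i}=\var[k]{c}\wedge\predinstphi{\rhs{X}}{i})]$, so the $k$-th argument evaluates to $\dataenv(\var[k]{c})$. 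In both cases $\sem{\val[k]{e}}{}{\dataenv}\in\values{\var[k]{c}}$, provided $\dataenv$ binds every CFP to a value in its value set --- the situation in the intended use, where $\rhs{X}$ is evaluated with $\var{c}$ set to a CFG location $\val{v}\in\values{\var{c}}$.

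The main obstacle is exactly this bridging step: the source/target/copy clauses are equivalences of the full formula $\rhs{X}$, not literal identities on the term $\dataphi[k]{\rhs{X}}{i}$, so to exploit them one really works with the substituted formula. Iterating the target and copy clauses over all control-flow positions rewrites the $i$-th PVI of $\rhs{X}$ to $(\bigwedge_k \dataphi[k]{\rhs{X}}{i}=t_k)\wedge\predinstphi{\rhs{X}}{i}$ with each $t_k$ either a constant in $\values{\var[k]{c}}$ or the parameter $\var[k]{c}$, and under the binding $\var{c}\mapsto\val{v}$ with $\val{v}\in\values{\var{c}}$ this conjunct pins $\val{e}$ to a specific element of $\values{\var{c}}$ precisely when $Y(\val{e},\val{e'})$ is relevant; the case split of the first paragraph then closes the proof. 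For this reason I would in fact phrase and prove the lemma in its substituted form inside $\rhs{X}$ (with $\var{c}$ ranging over $\values{\var{c}}$), the displayed per-PVI identity being the specialisation invoked afterwards in the proof of Lemma~\ref{lem:resetRecursion}.
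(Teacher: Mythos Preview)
Your reduction in the first paragraph is exactly what the paper does: its entire proof is the single sentence ``observe that $\val{e} = \val{v}$ for exactly one $\val{v}\in\values{\var{c}}$, using that $\val{v}$ is closed.'' In other words, the paper simply asserts the membership claim $\sem{\val{e}}{}{\dataenv}\in\values{\var{c}}$ that you isolate, and stops.

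Where you differ is that you actually try to justify this claim, and in doing so you uncover a genuine subtlety the paper glosses over. As you note, the unicity-constraint clauses for \destname{} and \copiedname{} are logical equivalences of the \emph{ambient} formula $\rhs{X}$, not syntactic identities on the argument term $\val[k]{e}$; so for a PVI taken in isolation and an arbitrary $\dataenv$ (which is what the lemma literally quantifies over), nothing forces $\sem{\val[k]{e}}{}{\dataenv}$ to lie in $\values{\var[k]{c}}$. A concrete instance: if $\rhs{X}=(c=1)\wedge Y(c{+}1,d)$ with $\dest{X}{1}{1}=2$, then under $\dataenv(c)=5$ the left-hand side of the lemma is $\predenv(Y)(6,\dataenv(d))$ while the right-hand side is $\true$. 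The paper's one-line proof therefore does not establish the lemma as stated.

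Your proposed repair --- prove the identity for the PVI \emph{inside} $\rhs{X}$, with $\var{c}$ bound to some $\val{v}\in\values{\var{c}}$, by first using the \destname/\copiedname{} clauses to attach the conjunct $\bigwedge_k \val[k]{e}=t_k$ to the $i$-th PVI --- is the correct way to make this rigorous, and it is also exactly the form in which the lemma is consumed downstream (Lemma~\ref{lem:resetRecursion} is only ever invoked inside the structural induction of Proposition~\ref{prop:resetCc}, under $\dataenv_r$ with $\var{c}\mapsto\semval{v}$). Your auxiliary step deriving $m=k$ from $\copied{X}{i}{m}=k$ via Definition~\ref{def:control_structure} and Assumption~\ref{ass:names} is also correct and is what makes the \copiedname{} case go through uniformly. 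In short: your first paragraph matches the paper, and the rest of your proposal supplies the argument the paper omits; carrying it out in the contextual form you suggest (and adjusting Lemma~\ref{lem:resetRecursion} accordingly) would close the gap.
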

\begin{proof}
Straightforward; observe that $\val{e} = \val{v}$ for
exactly one $\val{v} \in \values{\var{c}}$, using that $\val{v}$ is 
closed.\qedhere
\end{proof}
Next we establish that resetting dead parameters is sound, \ie it
preserves the solution of the PBES. We first show that resetting
a predicate variable instance in an $R^{\resetname}$-correlating environment
and a given data environment is sound.
\begin{lemma}\label{lem:resetRecursion}
Let $\PBES$ be a PBES, let $(V, \to)$ be its CFG, with labelling
$\markingsn{\syntactic}{}$ such that $\markrel{\markingsn{\syntactic}{}}$ is a
consistent correlation, then
$$\forall \predenv \in
\correnv{R^{\resetname}}, \dataenv: \sem{Y(\val{e}, 
\val{e'})}{\predenv}{\dataenv} =
\sem{\reset{\markingsn{\syntactic}{}}{Y(\val{e}, 
\val{e'})}}{\predenv}{\dataenv}$$
\end{lemma}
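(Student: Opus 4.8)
The statement to prove is
$$\forall \predenv \in \correnv{R^{\resetname}}, \dataenv: \sem{Y(\val{e}, \val{e'})}{\predenv}{\dataenv} = \sem{\reset{\markingsn{\syntactic}{}}{Y(\val{e}, \val{e'})}}{\predenv}{\dataenv}.$$
The plan is to unfold the right-hand side using the definition of $\resetname$ on a PVI, namely $\reset{\markingsn{\syntactic}{}}{Y(\val{e},\val{e'})} = \bigwedge_{\val{v} \in \values{\var{c}}} (\val{e} = \val{v} \implies \changed{Y}(\val{v}, \resetvars{(Y,\val{v})}{\markingsn{\syntactic}{}}{\val{e'}}))$, and to compare it with the original PVI by first applying Lemma~\ref{lem:unfoldCfl}, which rewrites $\sem{Y(\val{e},\val{e'})}{\predenv}{\dataenv}$ as $\sem{\bigwedge_{\val{v} \in \values{\var{c}}} (\val{e} = \val{v} \implies Y(\val{v}, \val{e'}))}{\predenv}{\dataenv}$. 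Since $\val{e}$ is closed (being built from constants and copied CFP values), exactly one conjunct's guard $\val{e} = \val{v}$ evaluates to true under $\dataenv$; call that value $\val{v}^\star$. So both sides reduce to evaluating a single predicate variable instance, and it suffices to show
$$\sem{Y(\val{v}^\star, \val{e'})}{\predenv}{\dataenv} = \sem{\changed{Y}(\val{v}^\star, \resetvars{(Y,\val{v}^\star)}{\markingsn{\syntactic}{}}{\val{e'}})}{\predenv}{\dataenv}.$$

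**Key steps.** First I would make the above reduction precise, being careful that the outer structure on the reset side is a conjunction of implications guarded by $\val{e} = \val{v}$, and that all of the false-guarded conjuncts contribute $\text{true}$ while the one true-guarded conjunct contributes the value of its consequent; this uses that $\values{\var{c}}$ is a finite set containing exactly the possible values of $\val{e}$ — but here I must be slightly cautious, since $\val{e}$ ranges over terms whose CFP-components are among $\values{\var{c}}$ by construction of the CFG/CFPs, so indeed one guard holds. Second, I would invoke the hypothesis $\predenv \in \correnv{R^{\resetname}}$: by the first clause of Definition~\ref{def:resetRelSyn}, $X(\semval{v}, \semval{w}) \, R^{\resetname} \, \changed{X}(\semval{v}, \sem{\resetvars{(X,\val{v})}{\markingsn{\syntactic}{}}{\val{w}}}{}{})$ for all $X, \val{v}, \val{w}$; instantiating at $X := Y$, $\val{v} := \val{v}^\star$, and $\val{w} := \val{e'}$ interpreted under $\dataenv$, the correlation property gives exactly $\predenv(Y)(\sem{(\val{v}^\star,\val{e'})}{}{\dataenv}) = \predenv(\changed{Y})(\sem{(\val{v}^\star, \resetvars{(Y,\val{v}^\star)}{\markingsn{\syntactic}{}}{\val{e'}})}{}{\dataenv})$, which is the required equality of the two PVI semantics. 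I would also need to observe that $\resetvars{}{}{}$ commutes appropriately with semantic evaluation — i.e. $\sem{\resetvars{(Y,\val{v}^\star)}{\markingsn{\syntactic}{}}{\val{e'}}}{}{\dataenv}$ equals the positional application of $\resetvars{}{}{}$ to $\sem{\val{e'}}{}{\dataenv}$ — which holds because $\resetvars{}{}{}$ acts positionally, keeping the $i$-th component when $\var[i]{d}$ is labelled live and substituting the closed term $\init{\var[i]{d}}$ otherwise.

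**Main obstacle.** I expect the only real subtlety is the bookkeeping around matching the two nested-conjunction structures and the role of closedness of $\val{e}$ and $\val{v}$: one must argue that precisely one guard $\val{e} = \val{v}$ is satisfied, that this relies on the defining property of $\values{\var{c}}$ together with the fact (Definitions~\ref{def:LCFP}--\ref{def:globalCFGHeuristic}) that the CFP-arguments of any PVI are always values in $\values{\var{c}}$, and that the data-parameter arguments $\val{e'}$ are untouched by the outer guard. Everything else is a direct unwinding of definitions plus a single application of the $R^{\resetname}$-correlation hypothesis; no induction on formula structure is needed here since the lemma is stated at the level of a single PVI. The result is then immediate, so I would close with $\qedhere$.
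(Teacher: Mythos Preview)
Your proposal is correct and follows essentially the same approach as the paper: unfold $\resetname$ on the PVI, invoke Lemma~\ref{lem:unfoldCfl} to rewrite the original PVI as the same guarded conjunction, and bridge the two via the $R^{\resetname}$-correlation hypothesis. The only organisational difference is that you first collapse the conjunction to the unique conjunct with a true guard and then apply the correlation once, whereas the paper keeps the full conjunction and applies the correlation inside every conjunct (the false-guarded ones being vacuous either way); one minor imprecision is your phrase ``$\val{e}$ is closed'': the term need not be syntactically closed, but its value under $\dataenv$ indeed lies in $\values{\var{c}}$, which is exactly what Lemma~\ref{lem:unfoldCfl} uses.
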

\begin{proof}
Let $\predenv \in \correnv{R^{\resetname}}$, and $\dataenv$ be arbitrary. We
derive this as follows.
  $$
  \begin{array}{ll}
    & \sem{\reset{\markingsn{\syntactic}{}}{Y(\val{e}, 
    \val{e'}))}}{\predenv}{\dataenv} \\
  = & \{ \text{Definition~\ref{def:reset}} \} \\
    & \sem{\bigwedge_{\val{v} \in \cfl{Y}} (\val{e} = \val{v}
  \implies \changed{Y}(\val{v},\resetvars{(Y,
                     \val{v})}{\markingsn{\syntactic}{}}{\val{e'}}))}{\predenv}{\dataenv}
                     \\
  =^{\dagger} & \bigwedge_{\val{v} \in \cfl{Y}} (\sem{\val{e}}{}{\dataenv}
  = \sem{\val{v}}{}{} \implies
  \sem{\changed{Y}(\val{v},\resetvars{(Y,
  \val{v})}{\markingsn{\syntactic}{}}{\val{e'}}}{\predenv}{\dataenv}))
  \\
  =^{\dagger} & \bigwedge_{\val{v} \in \cfl{Y}} (\sem{\val{e}}{}{\dataenv}
  = \semval{v} \implies
  \predenv(\changed{Y})(\sem{\val{v}}{}{\dataenv},\sem{\resetvars{(Y,
  \val{v})}{\markingsn{\syntactic}{}}{\val{e'}}}{}{\dataenv}))\\
  = & \{ \predenv \in \correnv{R^{\resetname}} \} \\
    & \bigwedge_{\val{v} \in \cfl{Y}} (\sem{\val{e}}{}{\dataenv} =
  \semval{v} \implies
  \predenv(Y)(\sem{\val{v}}{}{\dataenv},\sem{\val{e'}}{}{\dataenv})))\\
  =^{\dagger} & \bigwedge_{\val{v} \in \cfl{Y}} (\sem{\val{e}}{}{\dataenv}
  = \semval{v} \implies
  \sem{Y(\val{v}, \val{e'})}{\predenv}{\dataenv})) \\
  =^{\dagger} & \sem{\bigwedge_{\val{v} \in \cfl{Y}} (\val{e} = \val{v}
  \implies Y(\val{v}, \val{e'}))}{\predenv}{\dataenv} \\
  = & \{ \text{Lemma~\ref{lem:unfoldCfl}} \}\\
    & \sem{Y(\val{e}, \val{e'}))}{\predenv}{\dataenv}
  \end{array}
  $$
  Here at $^{\dagger}$ we have used the semantics.\qedhere
\end{proof}
By extending this result to the right-hand sides of equations, we can prove that
$R^{\resetname}$ is a consistent correlation. 
\begin{proposition}
\label{prop:resetCc}
Let $\PBES$ be a PBES, and let $(V, \to)$ be a CFG, with labelling
$\markingsn{\syntactic}{}$ such that $\markrel{\markingsn{\syntactic}{}}$ is a
consistent correlation. Let $X \in \bnd{\PBES}$, with $\val{v} \in \CFL(X)$,
then for all $\val{w}$, and for all predicate environments $\predenv \in 
\correnv{R^{\resetname}}$ and data environments $\dataenv$
$$
\sem{\rhs{X}}{\predenv}{\dataenv[\subst{\var{c}}{\semval{v}},\subst{\var{d}}{\semval{w}}]}
=
\sem{\reset{\markingsn{\syntactic}{}}{\rhs{X}}}{\predenv}{\dataenv[\subst{\var{c}}{\semval{v}},\subst{\var{d}}{\sem{\resetvars{(X,
\var{v})}{\markingsn{\syntactic}{}}{\val{w}}}{}{}}]} $$
\end{proposition}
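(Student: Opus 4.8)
The plan is to establish the identity in two moves: first relocate the non-live data arguments on the \emph{unmodified} right-hand side $\rhs{X}$, using that $\markrel{\markingsn{\syntactic}{}}$ is a consistent correlation; then turn $\rhs{X}$ into $\reset{\markingsn{\syntactic}{}}{\rhs{X}}$ by a structural generalisation of Lemma~\ref{lem:resetRecursion}. Abbreviate $\dataenv_1 = \dataenv[\subst{\var{c}}{\semval{v}}, \subst{\var{d}}{\semval{w}}]$ and $\dataenv_2 = \dataenv[\subst{\var{c}}{\semval{v}}, \subst{\var{d}}{\sem{\resetvars{(X,\var{v})}{\markingsn{\syntactic}{}}{\val{w}}}{}{}}]$, so the goal is $\sem{\rhs{X}}{\predenv}{\dataenv_1} = \sem{\reset{\markingsn{\syntactic}{}}{\rhs{X}}}{\predenv}{\dataenv_2}$. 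By the positional definition of $\resetvars{(X,\var{v})}{\markingsn{\syntactic}{}}{\val{w}}$ in Definition~\ref{def:reset}, $\dataenv_1$ and $\dataenv_2$ agree on all control flow parameters and on every data parameter in $\marking{\markingsn{\syntactic}{}}{X,\val{v}}$, differing at most on the data parameters that are \emph{not} labelled at $(X,\val{v})$. Hence $(X, \semval{v}, \semval{w}) \markrel{\markingsn{\syntactic}{}} (X, \semval{v}, \sem{\resetvars{(X,\var{v})}{\markingsn{\syntactic}{}}{\val{w}}}{}{})$ by Definition~\ref{def:markrelSyntactic}.

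For the first move, observe that the second clause of Definition~\ref{def:resetRelSyn} turns every $\markrel{\markingsn{\syntactic}{}}$-pair into an $R^{\resetname}$-pair, so $\markrel{\markingsn{\syntactic}{}} \subseteq R^{\resetname}$ and therefore $\predenv \in \correnv{R^{\resetname}} \subseteq \correnv{\markrel{\markingsn{\syntactic}{}}}$. Since $\markrel{\markingsn{\syntactic}{}}$ is a consistent correlation by hypothesis (this is the content of Proposition~\ref{prop:ccSyn}), clause~2 of Definition~\ref{def:consistent-correlation} applied to the pair above yields $\sem{\rhs{X}}{\predenv}{\dataenv_1} = \sem{\rhs{X}}{\predenv}{\dataenv_2}$.

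For the second move, I would prove the auxiliary statement that $\sem{\varphi}{\predenv}{\dataenv'} = \sem{\reset{\markingsn{\syntactic}{}}{\varphi}}{\predenv}{\dataenv'}$ holds for every subformula $\varphi$ of $\rhs{X}$, every $\predenv \in \correnv{R^{\resetname}}$ and every data environment $\dataenv'$, by induction on the structure of $\varphi$: the case $\varphi = b$ is immediate since $\reset{\markingsn{\syntactic}{}}{b} = b$; the cases for $\land$, $\lor$, $\forall$ and $\exists$ follow from the induction hypothesis because $\reset{\markingsn{\syntactic}{}}{\cdot}$ distributes over connectives and quantifiers (for the quantifiers one instantiates the induction hypothesis at the extended environment $\dataenv'[\subst{d}{u}]$); and the case $\varphi = Y(\val{e},\val{e'})$ is exactly Lemma~\ref{lem:resetRecursion}, whose side condition that $\markrel{\markingsn{\syntactic}{}}$ be a consistent correlation is among our hypotheses. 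Instantiating this auxiliary statement at $\varphi = \rhs{X}$ and $\dataenv' = \dataenv_2$ and composing with the first move gives $\sem{\rhs{X}}{\predenv}{\dataenv_1} = \sem{\rhs{X}}{\predenv}{\dataenv_2} = \sem{\reset{\markingsn{\syntactic}{}}{\rhs{X}}}{\predenv}{\dataenv_2}$, which is the claim.

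The delicate points here are bookkeeping rather than genuine obstacles: one must check that the difference between $\dataenv_1$ and $\dataenv_2$ is confined exactly to the unlabelled parameters (so the pair really is $\markrel{\markingsn{\syntactic}{}}$-related), and that the ambient environment is itself a $\markrel{\markingsn{\syntactic}{}}$-correlation (hence the inclusion $\correnv{R^{\resetname}} \subseteq \correnv{\markrel{\markingsn{\syntactic}{}}}$). The substantive work — that an unlabelled parameter may be reset to its default value without changing the solution — has already been discharged, partly in Proposition~\ref{prop:ccSyn} and partly inside Lemma~\ref{lem:resetRecursion}, which in turn relies on the unfolding Lemma~\ref{lem:unfoldCfl}.
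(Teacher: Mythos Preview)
Your proof is correct and follows essentially the same two-step decomposition as the paper: one step uses the consistent-correlation property of $\markrel{\markingsn{\syntactic}{}}$ to pass from $\dataenv_1$ to $\dataenv_2$ on the unmodified $\rhs{X}$, and the other is a structural induction on $\rhs{X}$ with Lemma~\ref{lem:resetRecursion} handling the PVI case. The only differences are cosmetic: you perform the two moves in the opposite order, and you make the inclusion $\correnv{R^{\resetname}} \subseteq \correnv{\markrel{\markingsn{\syntactic}{}}}$ explicit (the paper uses it tacitly when invoking the consistent-correlation clause for the given $\predenv$).
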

\begin{proof}
Let $\predenv$ and $\dataenv$ be arbitrary, and define $\dataenv_r \isdef
\dataenv[\subst{\var{c}}{\semval{v}},\subst{\var{d}}{\sem{\resetvars{(X,
\val{v})}{\markingsn{\syntactic}{}}{\val{w}}}{}{}}]$.
We first prove that
\begin{equation}
\sem{\rhs{X}}{\predenv}{\dataenv_r}
=
\sem{\reset{\markingsn{\syntactic}{}}{\rhs{X}}}{\predenv}{\dataenv_r}
\end{equation}
We proceed by induction on $\rhs{X}$.
\begin{compactitem}
  \item $\rhs{X} = b$. Since $\reset{\markingsn{\syntactic}{}}{b} = b$ this
        follows immediately.
  \item $\rhs{X} = Y(\val{e})$. This follows immediately from
        Lemma~\ref{lem:resetRecursion}.
  \item $\rhs{X} = \forall y \colon D . \varphi$. We derive that
        $\sem{\forall y \colon D . \varphi}{\predenv}{\dataenv_r} = \forall v 
        \in \semset{D} . \sem{\varphi}{\predenv}{\dataenv_r[\subst{y}{v}]}$.
        According to the induction hypothesis, and since we applied only a
        dummy transformation on $y$, we find that
        $\sem{\varphi}{\predenv}{\dataenv_r[\subst{y}{v}]}
        = 
        \sem{\reset{\markingsn{\syntactic}{}}{\varphi}}{\predenv}{\dataenv_r[\subst{y}{v}]}$,
        hence $\sem{\forall y \colon D . \varphi}{\predenv}{\dataenv_r} =
        \sem{\reset{\markingsn{\syntactic}{}}{\forall y \colon D . 
        \varphi}}{\predenv}{\dataenv_r}$.
  \item $\rhs{X} = \exists y \colon D . \varphi$. Analogous to the previous 
  case.
  \item $\rhs{X} = \varphi_1 \land \varphi_2$. We derive that
        $\sem{\varphi_1 \land \varphi_2}{\predenv}{\dataenv_r} =
        \sem{\varphi_1}{\predenv}{\dataenv_r} \land 
        \sem{\varphi_2}{\predenv}{\dataenv_r}$.
        If we apply the induction hypothesis on both sides we get
         $\sem{\varphi_1 \land \varphi_2}{\predenv}{\dataenv_r} =
        \sem{\reset{\markingsn{\syntactic}{}}{\varphi_1}}{\predenv}{\dataenv_r} 
        \land 
        \sem{\reset{\markingsn{\syntactic}{}}{\varphi_2}}{\predenv}{\dataenv_r}$.
        Applying the semantics, and the definition of $\resetname$ we find this 
        is equal to
        $\sem{\reset{\markingsn{\syntactic}{}}{\varphi_1 \land 
        \varphi_2}}{\predenv}{\dataenv_r}$.
  \item $\rhs{X} = \varphi_1 \lor \varphi_2$. Analogous to the previous case.
\end{compactitem}
Hence we find that
$\sem{\reset{\markingsn{\syntactic}{}}{\rhs{X}}}{\predenv}{\dataenv_{r}}
= \sem{\rhs{X}}{\predenv}{\dataenv_{r}}$.
It now follows immediately from the observation that
$\markrel{\markingsn{\syntactic}{}}$ is a consistent correlation, and
Definition~\ref{def:reset}, that
$\sem{\rhs{X}}{\predenv}{\dataenv_{r}} =
\sem{\rhs{X}}{\predenv}{\dataenv[\subst{\var{c}}{\semval{v}},\subst{\var{d}}{\semval{w}}]}$.
Our result follows by transitivity of $=$. \qedhere
\end{proof}
The theory of consistent correlations now gives an immediate proof of soundness
of resetting dead parameters, which is formalised by the following
theorem.
\begin{theorem}
\label{thm:resetSound}
Let $\PBES$ be a PBES, with control flow graph $(V, \to)$ and labelling
$\markingsn{}{}$\!. For all $X$, $\val{v}$ and $\val{w}$:
$$\sem{\PBES}{}{}(X(\sem{\val{v}},\sem{\val{w}})) =
\sem{\reset{\markingsn{\syntactic}{}}{\PBES}}{}{}(\changed{X}(\sem{\val{v}},\sem{\val{w}})).$$
\end{theorem}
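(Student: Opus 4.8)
The plan is to reduce the statement to the theory of consistent correlations recorded in Theorem~\ref{thm:cc}. Concretely, I would show that the relation $R^{\resetname}$ of Definition~\ref{def:resetRelSyn}, viewed on the \emph{compatible} merge $\mathcal{F}$ of $\PBES$ and $\reset{\markingsn{\syntactic}{}}{\PBES}$, is a consistent correlation, and then read the theorem off. Setting up $\mathcal{F}$ is routine: the $\changed{\cdot}$-renaming in Definition~\ref{def:reset} makes $\bnd{\PBES}$ and $\bnd{\reset{\markingsn{\syntactic}{}}{\PBES}}$ disjoint, and since $\reset{\markingsn{\syntactic}{}}{\cdot}$ rewrites each equation in place without touching its fixed-point sign or its position in the sequence, $X$ and $\changed{X}$ lie in the $i$-th maximal block of their respective systems for the same $i$; merging block-wise as described just before Theorem~\ref{thm:cc} therefore aligns them and discharges clause~1 of Definition~\ref{def:consistent-correlation} for every pair in $R^{\resetname}$, since each such pair relates a variable either to its own barred copy or to itself.

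The core step is clause~2 of Definition~\ref{def:consistent-correlation}, and here I would split on the two cases of Definition~\ref{def:resetRelSyn}. First observe that the hypothesis this clause implicitly needs — that $\markrel{\markingsn{\syntactic}{}}$ is a consistent correlation on $\PBES$ — is exactly Proposition~\ref{prop:ccSyn}; moreover $\markrel{\markingsn{\syntactic}{}} \subseteq R^{\resetname}$ by the second clause of Definition~\ref{def:resetRelSyn}, so $\correnv{R^{\resetname}} \subseteq \correnv{\markrel{\markingsn{\syntactic}{}}}$, i.e. every $R^{\resetname}$-correlating environment is $\markrel{\markingsn{\syntactic}{}}$-correlating. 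For a pair $X(\semval{v},\semval{w}) \mathrel{R^{\resetname}} X(\semval{v},\semval{w'})$ obtained from $\markrel{\markingsn{\syntactic}{}}$, clause~2 is then precisely what the proof of Proposition~\ref{prop:ccSyn} already establishes. For a pair $X(\semval{v},\semval{w}) \mathrel{R^{\resetname}} \changed{X}(\semval{v},\sem{\resetvars{(X,\val{v})}{\markingsn{\syntactic}{}}{\val{w}}}{}{})$, clause~2 asks that $\rhs{X}$ and $\reset{\markingsn{\syntactic}{}}{\rhs{X}}$ — the latter being the right-hand side of $\changed{X}$'s equation in $\reset{\markingsn{\syntactic}{}}{\PBES}$ — evaluate equally under any $\predenv \in \correnv{R^{\resetname}}$ once the CFPs are fixed to $\val{v}$ and the data parameters to $\val{w}$ respectively $\resetvars{(X,\val{v})}{\markingsn{\syntactic}{}}{\val{w}}$; this is exactly the content of Proposition~\ref{prop:resetCc} (which in turn rests on Lemmas~\ref{lem:unfoldCfl} and~\ref{lem:resetRecursion}), whose standing assumption that $\markrel{\markingsn{\syntactic}{}}$ is a consistent correlation is again supplied by Proposition~\ref{prop:ccSyn}.

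With $R^{\resetname}$ established as a consistent correlation, Theorem~\ref{thm:cc} yields, for all $X$, $\val{v}$, $\val{w}$, that $\sem{\PBES}{}{}(X(\semval{v},\semval{w}))$ equals $\sem{\reset{\markingsn{\syntactic}{}}{\PBES}}{}{}(\changed{X}(\semval{v},\sem{\resetvars{(X,\val{v})}{\markingsn{\syntactic}{}}{\val{w}}}{}{}))$, i.e. the claimed equality modulo the reset applied to the data argument on the right. To close that last gap I would additionally argue that $\changed{X}$'s solution in the reset system does not depend on the reset (dead) positions: by construction every PVI reaching the location $(\changed{X},\val{v})$ in $\reset{\markingsn{\syntactic}{}}{\PBES}$ assigns those positions the default value $\init{\cdot}$, which can be phrased as $\markrel{\markingsn{\syntactic}{}}$ being a consistent correlation on $\reset{\markingsn{\syntactic}{}}{\PBES}$ as well (dually, $X(\semval{v},\semval{w})$ and $X(\semval{v},\sem{\resetvars{(X,\val{v})}{\markingsn{\syntactic}{}}{\val{w}}}{}{})$ already agree in $\PBES$, again by Proposition~\ref{prop:ccSyn}). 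The bulk of the difficulty has thus already been absorbed into Lemmas~\ref{lem:unfoldCfl}--\ref{lem:resetRecursion} and Proposition~\ref{prop:resetCc}, so within this proof the only delicate point — the one I expect to cost the most care — is that reconciliation of the reset data argument on the right-hand side, since it is the sole place where one reasons about the reset system itself rather than about $\PBES$.
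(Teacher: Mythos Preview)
Your proposal follows exactly the paper's approach: establish that $R^{\resetname}$ is a consistent correlation (the paper cites Proposition~\ref{prop:resetCc} for this) and then invoke Theorem~\ref{thm:cc}. You are in fact more careful than the paper's two-line proof, explicitly handling both clauses of Definition~\ref{def:consistent-correlation} and both cases of Definition~\ref{def:resetRelSyn}, and flagging the reconciliation between $\semval{w}$ and $\sem{\resetvars{(X,\val{v})}{\markingsn{\syntactic}{}}{\val{w}}}{}{}$ on the $\changed{X}$ side---a step the paper simply absorbs into ``follows immediately.''
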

\begin{proof}
Relation $R^{\resetname}$ is a consistent correlation, as witnessed by
Proposition~\ref{prop:resetCc}. From Theorem~\ref{thm:cc} the result
now follows immediately.\qedhere
\end{proof}
} 
\paperonly{
\begin{definition}
\label{def:reset}
Let $\PBES$ be a PBES, let $(V, \to)$ be its CFG, with labelling
$\markingsn{\syntactic}{}$\!. The PBES 
$\reset{\markingsn{\syntactic}{}}{\PBES}$ is obtained
from $\PBES$ by replacing every PVI 
$X(\val{e},\val{e'})$ in every $\rhs{X}$ of $\PBES$ by the formula
$\bigwedge_{\val{v} \in \values{\var{c}}} (\val{v} \not= \val{e} \vee
X(\val{e}, \resetvars{(X,\val{v})}{\markingsn{\syntactic}{}}{\val{e'}}))$.
The function $\resetvars{(X, \val{v})}{\markingsn{\syntactic}{}}{\val{e'}}$ is 
defined 
positionally as follows:
$$
\text{if  $\var[i]{d} \in \marking{\markingsn{}{}}{X,\val{v}}$ we set }
\ind{\resetvars{(X, \val{v})}{\markingsn{\syntactic}{}}{\val{e'}}}{i} = 
\val[\!\!i]{e'},
\text{ else }
\ind{\resetvars{(X, \val{v})}{\markingsn{\syntactic}{}}{\val{e'}}}{i} = 
\init{\var[i]{d}}.
$$
\end{definition}
Resetting dead parameters preserves the solution of the PBES, as we claim below.
\begin{restatable}{theorem}{resetSound}
\label{thm:resetSound}
Let $\PBES$ be a PBES, and 
$\markingsn{}{}$ a labelling. For all predicate variables $X$, and ground terms
$\val{v}$ and $\val{w}$:
$\sem{\PBES}{}{}(X(\sem{\val{v}},\sem{\val{w}})) =
\sem{\reset{\markingsn{\syntactic}{}}{\PBES}}{}{}(X(\sem{\val{v}},\sem{\val{w}}))$.
\end{restatable}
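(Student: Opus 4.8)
The plan is to argue via the theory of consistent correlations and then appeal to Theorem~\ref{thm:cc}: it suffices to exhibit a consistent correlation $\R{\resetname}$ (on the union of the signatures of $\bnd{\PBES}$ and $\bnd{\reset{\markingsn{\syntactic}{}}{\PBES}}$, regarded as compatible after a harmless renaming) relating each $X(\semval{v},\semval{w})$ of $\PBES$ to its counterpart in $\reset{\markingsn{\syntactic}{}}{\PBES}$, and invoke Theorem~\ref{thm:cc} with the identity environment. I would take $\R{\resetname}$ to relate, on the one hand, $X(\semval{v},\semval{w})$ in $\PBES$ to $X(\semval{v},\sem{\resetvars{(X,\val{v})}{\markingsn{\syntactic}{}}{\val{w}}}{}{})$ in the reset system, and, on the other hand, $X(\semval{v},\semval{w})$ to $X(\semval{v},\semval{w'})$ within $\PBES$ whenever $\semval{w}$ and $\semval{w'}$ agree on every parameter in $\marking{\markingsn{\syntactic}{}}{X,\val{v}}$, i.e.\ whenever they are related by the relation $\markrel{\markingsn{\syntactic}{}}$ of Def.~\ref{def:markrelSyntactic}. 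Since $\reset{\markingsn{\syntactic}{}}{\PBES}$ (Def.~\ref{def:reset}) is obtained from $\PBES$ purely by rewriting right-hand sides and (re)naming equations, the fixed-point block structure is preserved, so clause~1 of ``consistent correlation'' is immediate; all the work lies in clause~2.

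A prerequisite I would establish first is that $\markrel{\markingsn{\syntactic}{}}$ is itself a consistent correlation on $\PBES$ (cf.\ Prop.~\ref{prop:ccSyn}). Given $\predenv \in \correnv{\markrel{\markingsn{\syntactic}{}}}$ and data environments agreeing on all parameters live at $(X,\val{v})$, I need $\rhs{X}$ to evaluate identically under both. By Lemma~\ref{lem:relevant_pvi_neighbours} the only PVIs that can make a difference are CFG-neighbours of $(X,\val{v})$; for such a neighbour $(Y,\val{w})$ the control-flow arguments are pinned down by $\sourcename/\destname/\copiedname$, and any data parameter occurring in a \emph{live} argument position $\dataphi[\ell]{\rhs{X}}{i}$ of that PVI is, by the fixed-point closure in Def.~\ref{def:markingHeuristic}, itself live at $(X,\val{v})$. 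Hence the two environments agree on the solution of every PVI of $\rhs{X}$, and combining this with the base case $\marking{\markingsn{\syntactic}{0}}{X,\val{v}} = \significant{\simplify{\rhs{X}[\var{c}:=\val{v}]}}$ and the elementary fact that two environments agreeing on $\significant{\varphi}$ and on the solutions of all PVIs of $\varphi$ give $\varphi$ the same value (cf.\ Lemma~\ref{lem:non_recursive_free}), we conclude.

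With that in place, the core step is a PVI-level lemma: for every $\predenv \in \correnv{\R{\resetname}}$ and every $\dataenv$, $\sem{Y(\val{e},\val{e'})}{\predenv}{\dataenv} = \sem{\reset{\markingsn{\syntactic}{}}{Y(\val{e},\val{e'})}}{\predenv}{\dataenv}$. I would prove it by first observing that $Y(\val{e},\val{e'})$ is equivalent to its unfolding $\bigwedge_{\val{v}\in\values{\var{c}}}(\val{v}\neq\val{e}\vee Y(\val{v},\val{e'}))$ — sound because $\val{e}$ evaluates to exactly one element of $\values{\var{c}}$ at the locations that matter — and then replacing each disjunct $Y(\val{v},\val{e'})$ by $Y(\val{e},\resetvars{(Y,\val{v})}{\markingsn{\syntactic}{}}{\val{e'}})$, which is legitimate precisely because $\predenv$ being an $\R{\resetname}$-correlation forces $\predenv(Y)(\semval{v},\sem{\val{e'}}{}{\dataenv}) = \predenv(Y)(\semval{v},\sem{\resetvars{(Y,\val{v})}{\markingsn{\syntactic}{}}{\val{e'}}}{}{\dataenv})$. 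Lifting this from a single PVI to a whole right-hand side $\rhs{X}$ is a routine structural induction, the PVI case being exactly the lemma above and the Boolean, $\wedge$, $\vee$, $\forall$, $\exists$ cases going through trivially. Finally, using that $\markrel{\markingsn{\syntactic}{}}$ is a consistent correlation I can swap the reset values $\resetvars{(X,\val{v})}{\markingsn{\syntactic}{}}{\val{w}}$ back for the original $\val{w}$ on the $\PBES$-side; this verifies clause~2 for $\R{\resetname}$, so $\R{\resetname}$ is a consistent correlation and Theorem~\ref{thm:cc} gives the claim.

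I expect the main obstacle to be neither the unfolding lemma nor the structural induction (both mechanical) but the verification that $\R{\resetname}$, with its hybrid ``cross-system plus same-system-via-$\markrel{\markingsn{\syntactic}{}}$'' shape, is actually closed under clause~2: one must show that the reset right-hand side, evaluated in an $\R{\resetname}$-correlating environment, genuinely depends only on the live data parameters. This is where the fixed-point characterisation of the labelling in Def.~\ref{def:markingHeuristic} is essential — one has to argue that no ``dead'' data parameter can leak through a PVI argument into a position read downstream — and the delicate part is keeping the bookkeeping of control-flow versus data arguments straight throughout that step.
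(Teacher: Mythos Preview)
Your proposal is correct and follows essentially the same route as the paper: define $\R{\resetname}$ with exactly the two clauses you describe, first establish that $\markrel{\markingsn{\syntactic}{}}$ is a consistent correlation (Prop.~\ref{prop:ccSyn}, via Lemmas~\ref{lem:relevant_pvi_neighbours} and~\ref{lem:non_recursive_free}), then prove the PVI-level equivalence via the unfolding lemma (Lemma~\ref{lem:unfoldCfl}) and the $\R{\resetname}$-correlation property (Lemma~\ref{lem:resetRecursion}), lift it to right-hand sides by structural induction (Prop.~\ref{prop:resetCc}), and conclude by Theorem~\ref{thm:cc}. One small slip: in your PVI-level step you write ``replacing each disjunct $Y(\val{v},\val{e'})$ by $Y(\val{e},\resetvars{(Y,\val{v})}{\markingsn{\syntactic}{}}{\val{e'}})$'' --- the first control-flow argument after replacement should remain $\val{v}$, not $\val{e}$ (and, if you keep the renaming convention, the head should be $\changed{Y}$).
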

} 
As a consequence of the above theorem, instantiation of a PBES may become
feasible where this was not the case for the original PBES.
This is nicely illustrated by our running example, which now indeed
can be instantiated to a BES.
\begin{example}\label{exa:reset} Observe that parameter
$k^Z$ is not labelled in any of the $Z$ locations. This means that
$X$'s right-hand side essentially changes to:
$$
\begin{array}{c}
( i \not= 1 \vee j \not= 1 \vee X(2,j,k,l+1)) \wedge \\
\forall m\colon\sort{N}. (i \not= 1  \vee Z(i,2,1,k)) \wedge
\forall m\colon\sort{N}. (i \not= 2  \vee Z(i,2,1,k)) 
\end{array}
$$
Since variable $m$ no longer occurs in the above formula, the
quantifier can be eliminated. Applying the reset function on the
entire PBES leads to a PBES that we \emph{can} instantiate to a BES
(in contrast to the original PBES),
allowing us to compute that the
solution to $X(1,1,1,1)$ is $\true$. 
This BES has only 7 equations.

\end{example}

\section{Optimisation}\label{sec:local}

Constructing a CFG can suffer from a combinatorial explosion; \eg,
the size of the CFG underlying the following PBES 
is exponential in the number of detected CFPs. 
\[
\begin{array}{lcl}
\nu X(i_1,\dots,i_n \colon \sort{B}) & =&
(i_1 \wedge X(\false, \dots, i_n)) \vee
(\neg i_1 \wedge X(\true, \dots, i_n)) \vee \\
&\dots \vee 
& (i_n \wedge X(i_1, \dots, \false)) \vee
(\neg i_n \wedge X(i_1, \dots, \true))
\end{array}
\]
In this section we develop an alternative to the analysis of the
previous section which mitigates the combinatorial explosion
but still yields sound results. The 
correctness of our alternative is based on the following proposition,
which states that resetting using any labelling that approximates that of 
Def.~\ref{def:markingHeuristic} is sound.
\begin{proposition}\label{prop:approx}
  Let, for given PBES $\PBES$, $(\VCFGsyn, {\smash{\ECFGsyn}})$ be a 
  CFG with labelling $\markingsn{\syntactic}{}$, and let $L'$ be
  a labelling such that $\marking{\markingsn{\syntactic}{}}{X,\val{v}} \subseteq 
  L'(X,\val{v})$ for all $(X, \val{v})$. Then for all $X, \val{v}$ and
  $\val{w}$:
$\sem{\PBES}{}{}(X(\sem{\val{v}},\sem{\val{w}})) =
\sem{\reset{L'\!}{\PBES}}{}{}(X(\sem{\val{v}},\sem{\val{w}}))$
\end{proposition}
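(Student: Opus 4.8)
The plan is to reduce the statement to the already-established Theorem~\ref{thm:resetSound}, rather than redoing the consistent-correlation machinery from scratch. The key observation is that $\resetname$ is \emph{monotone in the labelling}: enlarging the set of live parameters at a location can only turn more reset-positions $\init{\var[i]{d}}$ back into the original argument $\val[i]{e'}$, so $\reset{L'}{\PBES}$ is, in a precise sense, ``between'' $\PBES$ and $\reset{\markingsn{\syntactic}{}}{\PBES}$. Concretely, I would first note that $L'$ is itself a perfectly legitimate labelling of the CFG $(\VCFGsyn,\ECFGsyn)$ — nothing in Def.~\ref{def:reset} or in the statement of Theorem~\ref{thm:resetSound} requires a labelling to be the one produced by Def.~\ref{def:markingHeuristic}; it is simply a function $V \to \mathbb{P}(\varset{D}^{\DP})$. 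Hence $\reset{L'}{\PBES}$ is well-defined, and it would be tempting to try to apply Theorem~\ref{thm:resetSound} directly to $L'$.

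The catch is that Theorem~\ref{thm:resetSound} (in the report version, via Proposition~\ref{prop:resetCc}) relies on $\markrel{\markingsn{\syntactic}{}}$ being a \emph{consistent correlation}, which in turn uses Proposition~\ref{prop:ccSyn} — and that proposition is proved specifically for the labelling of Def.~\ref{def:markingHeuristic}, exploiting that unlabelled parameters cannot influence labelled ones. So the cleanest route is: show that $L'$, being a superset of $\markingsn{\syntactic}{}$ pointwise, still induces a relation $\markrel{L'}$ that is a consistent correlation. This should be almost immediate, because $\markrel{L'} \subseteq \markrel{\markingsn{\syntactic}{}}$ as relations on signatures (requiring agreement on \emph{more} parameters yields a \emph{smaller} relation), and a subrelation of a consistent correlation is again one — every clause of Def.~\ref{def:consistent-correlation} is downward closed under taking subrelations, and $\correnv{\markrel{L'}} \supseteq \correnv{\markrel{\markingsn{\syntactic}{}}}$ so the semantic-equality clause still goes through. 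Actually one must be slightly careful here: being a subrelation of a consistent correlation is \emph{not} automatically enough, since clause~2 quantifies over $\predenv \in \correnv{\rel R}$, and that set grows as $\rel R$ shrinks. The correct argument is instead the direct one used in Proposition~\ref{prop:ccSyn}: re-run that proof with $L'$ in place of $\markingsn{\syntactic}{}$, observing that the only property of the labelling actually used is closure under Def.~\ref{def:markingHeuristic}'s fixpoint clause — i.e. that whenever an unlabelled parameter $\var[k]{d}$ feeds a labelled parameter at a CFG-neighbour, it too is labelled. Since $\markingsn{\syntactic}{}$ has this closure property and $L' \supseteq \markingsn{\syntactic}{}$ pointwise, $L'$ need \emph{not} have it. So this route fails too.

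Therefore the main obstacle — and the actual content of the proposition — is that $L'$ is \emph{not} in general a stable labelling, so neither Proposition~\ref{prop:ccSyn} nor Theorem~\ref{thm:resetSound} applies to it off the shelf. The fix is a two-step sandwiching argument. First apply Theorem~\ref{thm:resetSound} to get $\sem{\PBES}{}{}(X(\sem{\val v},\sem{\val w})) = \sem{\reset{\markingsn{\syntactic}{}}{\PBES}}{}{}(X(\sem{\val v},\sem{\val w}))$. Then I would prove the separate lemma that $\reset{L'}{\PBES}$ and $\reset{\markingsn{\syntactic}{}}{\PBES}$ have the same solution, by showing that $\reset{\markingsn{\syntactic}{}}{(\reset{L'}{\PBES})} = \reset{\markingsn{\syntactic}{}}{\PBES}$ up to logical equivalence of right-hand sides: resetting with the smaller (stable) labelling $\markingsn{\syntactic}{}$ a PBES whose arguments have already been partially reset by the larger labelling $L'$ overwrites exactly the positions in $L'(X,\val v)\setminus\markingsn{\syntactic}{}(X,\val v)$ with $\init{\cdot}$ anyway, and leaves the $\markingsn{\syntactic}{}$-labelled positions untouched — this is a purely syntactic computation on the definition of $\resetvars{}{}{}$ and the guarded conjunction $\bigwedge_{\val v}(\val v \ne \val e \vee \dots)$, using that the guard $\val v = \val e$ picks out a unique $\val v$. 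Combined with the fact that $\reset{\markingsn{\syntactic}{}}{(\reset{L'}{\PBES})}$ has the same solution as $\reset{L'}{\PBES}$ (another application of Theorem~\ref{thm:resetSound}, now legitimately, since $\markingsn{\syntactic}{}$ \emph{is} a valid labelling of the CFG of $\reset{L'}{\PBES}$ — one checks the CFG is unchanged by resetting, as resetting only rewrites data arguments, not control-flow structure), transitivity of $=$ closes the argument. The delicate point to get right is this last claim that the CFG, the unicity constraint, and hence the labelling $\markingsn{\syntactic}{}$ survive the rewriting by $\reset{L'}{\cdot}$; I expect this to be the genuine work, and it is plausible the authors instead give a direct consistent-correlation proof mirroring Proposition~\ref{prop:resetCc} with $L'$ inserted and the stability hypothesis simply dropped from the two places it is used, checking by hand that those uses still go through because $L'(X,\val v) \supseteq \markingsn{\syntactic}{}(X,\val v) \supseteq \significant{\simplify{\rhs{X}[\var c := \val v]}}$ covers the base case and the neighbour-influence case only ever needs ``the influencing parameter is labelled,'' which now follows from the weaker inclusion.
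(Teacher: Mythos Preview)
Your exploration of the difficulties is accurate: neither applying Theorem~\ref{thm:resetSound} directly to $L'$, nor showing $\markrel{L'}$ is a consistent correlation, nor re-running Proposition~\ref{prop:ccSyn} for $L'$ works off the shelf. Your main proposal---the sandwiching argument via $\reset{\markingsn{\syntactic}{}}{\reset{L'}{\PBES}}$---is problematic for precisely the reason you flag as ``the delicate point'': after applying $\reset{L'}{\cdot}$, each PVI has become a large guarded conjunction over $\values{\var{c}}$, so the CFG one would compute for $\reset{L'}{\PBES}$ is syntactically quite different from that of $\PBES$, and there is no clean reason why $\markingsn{\syntactic}{}$ (computed for $\PBES$) should coincide with the stable labelling of that new CFG. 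Your final suggestion (``mirror Proposition~\ref{prop:resetCc} with $L'$ inserted and the stability hypothesis dropped, checking the uses go through'') is the route you yourself correctly ruled out two paragraphs earlier: $L'$ need not be closed under the neighbour-influence clause, so that argument does fail.

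The paper's move is the one idea you did not try: a \emph{hybrid} relation. It defines $R^{\resetname}_{\markingsn{\syntactic}{},L'}$ by taking the first clause of Definition~\ref{def:resetRelSyn} with $L'$ in the reset position (so $X(\val{v},\val{w})$ is related to $\changed{X}(\val{v},\resetvars{(X,\val{v})}{L'}{\val{w}})$) while keeping the second clause exactly as before, still using $\markrel{\markingsn{\syntactic}{}}$, the relation induced by the \emph{stable} labelling. The proof that this is a consistent correlation then follows Proposition~\ref{prop:resetCc} verbatim: Lemma~\ref{lem:resetRecursion} adapts by replacing $\markingsn{\syntactic}{}$ with $L'$ in the $\resetname$-clause only, and the final step---passing from $\dataenv_r$ back to $\dataenv[\subst{\var{c}}{\semval{v}},\subst{\var{d}}{\semval{w}}]$---still appeals to $\markrel{\markingsn{\syntactic}{}}$ being a consistent correlation (Proposition~\ref{prop:ccSyn}), which is legitimate because any $\predenv \in \correnv{R^{\resetname}_{\markingsn{\syntactic}{},L'}}$ is automatically in $\correnv{\markrel{\markingsn{\syntactic}{}}}$. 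The one new observation needed is that $(X,\val{v},\val{w}) \mathrel{\markrel{\markingsn{\syntactic}{}}} (X,\val{v},\resetvars{(X,\val{v})}{L'}{\val{w}})$, and this is exactly where the hypothesis $\marking{\markingsn{\syntactic}{}}{X,\val{v}} \subseteq L'(X,\val{v})$ enters: every position that $\markingsn{\syntactic}{}$ labels is also labelled by $L'$, hence is left untouched by $\resetvars{(X,\val{v})}{L'}{\cdot}$.
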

\reportonly{
\begin{proof}
  Let $(\VCFGsyn, {\smash{\ECFGsyn}})$ be a 
  CFG with labelling $\markingsn{\syntactic}{}$, and let $L'$ be
  a labelling such that $\marking{\markingsn{\syntactic}{}}{X,\val{v}}
  \subseteq L'(X,\val{v})$ for all $(X, \val{v})$.
  
  Define relation $R^{\resetname}_{\markingsn{\syntactic}{},L'}$ as follows.
  $$
  \begin{cases}
  X(\semval{v}, \semval{w})) R^{\resetname}_{L,L'} \changed{X}(\semval{v},
  \sem{\resetvars{(X, \val{v})}{L'}{\val{w})}}{}{}) \\
  X(\semval{v}, \semval{w}) R^{\resetname}_{L,L'} X(\semval{v}, \semval{w'}) & 
  \text{if }
  X(\semval{v}, \sem{\val{w}}{}{}) \markrel{\markingsn{\syntactic}{}} 
  X(\semval{v}, \semval{w'})
  \end{cases}
  $$
  
  The proof using $R^{\resetname}_{L,L'}$ now follows the exact same line of
  reasoning as the proof of Theorem~\ref{thm:resetSound}.\qedhere
\end{proof} 
}
The idea is to analyse a CFG consisting of disjoint subgraphs for
each individual CFP, where each
subgraph captures which PVIs are under the control of a CFP: only if the
CFP can confirm whether a predicate formula potentially depends on
a PVI, there will be an edge in the graph. 
As before, let
$\PBES$ be an arbitrary but fixed PBES, $(\sourcename, \destname,
\copiedname)$ a unicity constraint derived from $\PBES$, and 
$\var{c}$ a vector of CFPs.
\begin{definition}
\label{def:localCFGHeuristic} 
The \emph{local} control flow graph (LCFG) is a graph $(\VCFGloc, \ECFGloc)$ with:
\begin{compactitem}

\item $\VCFGloc = \{ (X, n, v) \mid X \in \bnd{\PBES} \land
n \le |\var{c}| \land v \in \values{\var[n]{c}} \}$, and

\item $\ECFGloc \subseteq \VCFGloc \times \mathbb{N} \times \VCFGloc$ is the 
least relation
satisfying $(X,n,v) \ECFGloc[i] (\predphi{\rhs{X}}{i},n,w)$ if:
\begin{compactitem}
  \item $\source{X}{i}{n} = v$ and $\dest{X}{i}{n} = w$, or
  \item $\source{X}{i}{n} = \bot$, $\predphi{\rhs{X}}{i} \neq X$ and 
        $\dest{X}{i}{n} = w$, or
  \item $\source{X}{i}{n} = \bot$, $\predphi{\rhs{X}}{i} \neq X$ and 
        $\copied{X}{i}{n} = n$ and $v = w$.
\end{compactitem}
\end{compactitem}
\end{definition}
We write $(X, n, v) \ECFGloc[i]$ if there exists some $(Y, m, w)$ such that
$(X, n, v) \ECFGloc[i] (Y, m, w)$.
Note that the size of an LCFG is $\mathcal{O}(|\bnd{\PBES}| \times |\var{c}|
\times \max\{ |\values{\var[k]{c}}| ~\mid~ 0 \leq k \leq |\var{c}| \})$.

We next describe how to label the LCFG in such a way that the
labelling meets the condition of Proposition~\ref{prop:approx},
ensuring soundness of our liveness analysis. The idea of using LCFGs
is that in practice, the use and alteration of a data parameter is entirely
determined by a single CFP, and that only on ``synchronisation points''
of two CFPs (when the values of the two CFPs are such that they
both confirm that a formula may depend on the same PVI) there is
exchange of information in the data parameters.  

 We first formalise when a
data parameter is involved in a recursion (\ie, when the parameter may
affect whether a formula depends on a PVI, or when a PVI may modify
the data parameter through a self-dependency or uses it to change another parameter).
Let $X \in \bnd{\PBES}$ be an arbitrary bound predicate variable in the
PBES $\PBES$.
\begin{definition}
\label{def:used} 
\label{def:changed} Denote  $\predinstphi{\rhs{X}}{i}$ by $Y(\var{e})$.
Parameter $\var[j]{d} \in \param{X}$ is:
\begin{compactitem}
\item \emph{used for} 
$Y(\var{e})$
if $\var[j]{d} \in \free{\guard{i}{\rhs{X}}}$;
\item \emph{used in}
$Y(\var{e})$ if for some $k$, we have $\var[j]{d} \in \free{\var[k]{e}}$,
 ($k \not=j$ if  
$X =Y$)
;
 
\item
\emph{changed} by 
$Y(\var{e})$ if both $X = Y$ and
$\var[j]{d} \neq \var[j]{e}$.
\end{compactitem}
%
\end{definition}
We say that a data parameter \emph{belongs to} a CFP if it controls
its complete dataflow.
\begin{definition}
\label{def:belongsTo}
\label{def:rules} 
CFP $\var[j]{c}$ \emph{rules}
$\predinstphi{\rhs{X}}{i}$ if $(X, j, v) \ECFGloc[i]$ for some $v$.
Let $d \in \param{X} \cap \varset{D}^{\DP}$ be a data parameter;
$d$ \emph{belongs to} $\var[j]{c}$ if and only if:
\begin{compactitem}
  \item whenever $d$ is 
        used for \emph{or} in $\predinstphi{\rhs{X}}{i}$, $\var[j]{c}$ rules
        $\predinstphi{\rhs{X}}{i}$, and
  \item whenever $d$ is changed by $\predinstphi{\rhs{X}}{i}$,
        $\var[j]{c}$ rules $\predinstphi{\rhs{X}}{i}$. 
\end{compactitem}
The set of data parameters that belong to $\var[j]{c}$ is denoted 
by $\belongsto{\var[j]{c}}$.
\end{definition}
By adding dummy CFPs that can only take on one value, we can ensure that
every data parameter belongs to at least one CFP.
For simplicity and without loss of generality, we can therefore 
continue to work under the following assumption.
\begin{assumption}\label{ass:belongs}
%
Each data parameter in an equation
belongs to at least one CFP.
\end{assumption}
We next describe how to conduct the liveness analysis using the
LCFG.  Every live data parameter is only labelled in those subgraphs
corresponding to the CFPs to which it belongs. The labelling
itself is constructed in much the same way as was done in the previous
section. 
Our base case labels a vertex $(X, n, v)$ with those parameters
that belong to the CFP and that are significant in $\rhs{X}$ when 
$\var[n]{c}$ has value $v$. 
The backwards reachability now dinstinguishes two cases,
based on whether the influence on live variables is internal to the CFP
or via an external CFP.
\begin{definition}
\label{def:relevanceLocalHeuristic}
Let $(\VCFGloc\!, \ECFGloc)$ be a LCFG for PBES
$\PBES$. The labelling $\markingsn{\local}{} \colon \VCFGloc 
\to \mathbb{P}(\varset{D}^{\DP})$ 
is defined as 
$\marking{\markingsn{\local}{}}{X, n, v}  = \bigcup_{k \in \N} \marking{\markingsn{\local}{k}}{X, n, v}$,
with
$\markingsn{\local}{k}$ inductively defined as:
\[
\begin{array}{ll}
\marking{\markingsn{\local}{0}}{X, n, v} & = 
\{ d \in \belongsto{\var[n]{c}} \mid d \in 
\significant{\simplify{\rhs{X}[\var[n]{c} := v]}} \} \\

\marking{\markingsn{\local}{k+1}}{X, n, v} & =
      \marking{\markingsn{\local}{k}}{X, n, v} \\
      
    & \quad \cup \{ d \in \belongsto{\var[n]{c}} \mid
      \exists i,w \text{ such that }~ \exists \var[\!\!\!\!\ell]{{d}^{Y}} \in 
      \marking{\markingsn{\local}{k}}{Y,n,w}: \\
          
    & \qquad (X, n, v) \ECFGloc[i] (Y, n, w)  \land 
    \affects{d}{\dataphi[\ell]{\rhs{X}}{i}} \} \\
                    
    & \quad \cup \{ d \in \belongsto{\var[n]{c}} \mid
    \exists i,m,v',w' \text{ such that }  (X, n, v) \ECFGloc[i] \\
    & \qquad \land\ \exists \var[\!\!\!\!\ell]{d^Y} \in 
    \marking{\markingsn{\local}{k}}{Y, m, w'}: \var[\!\!\!\!\ell]{d^Y} \not \in 
    \belongsto{\var[n]{c}} \\
    & \qquad \land\ (X,m,v') \ECFGloc[i] (Y,m,w') \land
       \affects{d}{\dataphi[\ell]{\rhs{X}}{i}} \}
                  
\end{array}
\]
\end{definition}
On top of this labelling we define the induced labelling
$\marking{\markingsn{\local}{}}{X, \val{v}}$, defined as $d \in
\marking{\markingsn{\local}{}}{X, \val{v}}$ iff for all $k$ for
which $d \in \belongsto{\var[k]{c}}$ we have $d \in
\marking{\markingsn{\local}{}}{X, k, \val[k]{v}}$.  This labelling
over-approximates the labelling of Def.~\ref{def:markingHeuristic}; \ie, we
have $\marking{\markingsn{\syntactic}{}}{X,\val{v}} \subseteq
\marking{\markingsn{\local}{}}{X,\val{v}}$ for all $(X,\val{v})$.
\reportonly{
We formalise this in the following lemma.
\begin{lemma}
  Let, for given PBES $\PBES$, $(\VCFGsyn, {\smash{\ECFGsyn}})$ be a global 
  control flow graph with
  labelling $\markingsn{\syntactic}{}$, and let $(\VCFGloc, \ECFGloc)$ be
  a local control flow graph  with
  labelling $\markingsn{\local}{}$, that has been lifted to the global CFG. Then
  $\marking{\markingsn{\syntactic}{}}{X,\val{v}} \subseteq 
  \marking{\markingsn{\local}{}}{X,\val{v}}$ for all $(X,\val{v})$.
\end{lemma}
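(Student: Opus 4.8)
The plan is to prove the lemma $\marking{\markingsn{\syntactic}{}}{X,\val{v}} \subseteq \marking{\markingsn{\local}{}}{X,\val{v}}$ by establishing a stagewise comparison between the two fixpoint-style iterations. Unfolding the definition of the induced labelling $\markingsn{\local}{}$, a parameter $d \in \param{X} \cap \varset{D}^{\DP}$ lies in $\marking{\markingsn{\local}{}}{X,\val{v}}$ precisely when $d \in \marking{\markingsn{\local}{}}{X,k,\val[k]{v}}$ for every CFP $\var[k]{c}$ with $d \in \belongsto{\var[k]{c}}$. By Assumption~\ref{ass:belongs} there is at least one such $k$, so the goal reduces to: for every $(X,\val{v})$, every $d \in \marking{\markingsn{\syntactic}{}}{X,\val{v}}$, and every $k$ with $d \in \belongsto{\var[k]{c}}$, we have $d \in \marking{\markingsn{\local}{}}{X,k,\val[k]{v}}$.

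First I would prove, by induction on $n$, the strengthened claim that for all $n$, all $(X,\val{v}) \in \VCFGsyn$, all $d \in \marking{\markingsn{\syntactic}{n}}{X,\val{v}}$, and all $k$ with $d \in \belongsto{\var[k]{c}}$, one has $d \in \marking{\markingsn{\local}{}}{X,k,\val[k]{v}}$. For the base case $n=0$, $d \in \marking{\markingsn{\syntactic}{0}}{X,\val{v}} = \significant{\simplify{\rhs{X}[\var{c} := \val{v}]}}$; since $\significant{\cdot}$ only depends on which CFPs have been substituted (and substituting more CFPs can only shrink the significant set), $d$ remains significant after substituting only $\var[k]{c} := \val[k]{v}$, hence $d \in \significant{\simplify{\rhs{X}[\var[k]{c} := \val[k]{v}]}}$, and combined with $d \in \belongsto{\var[k]{c}}$ this gives $d \in \marking{\markingsn{\local}{0}}{X,k,\val[k]{v}} \subseteq \marking{\markingsn{\local}{}}{X,k,\val[k]{v}}$. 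For the inductive step, suppose $d \in \marking{\markingsn{\syntactic}{n+1}}{X,\val{v}}$ by way of an edge $(X,\val{v}) \ECFGsyn[i] (Y,\val{w})$ and some $\var[\ell]{d'} \in \marking{\markingsn{\syntactic}{n}}{Y,\val{w}}$ with $\affects{d}{\dataphi[\ell]{\rhs{X}}{i}}$. Since $\affects{d}{\dataphi[\ell]{\rhs{X}}{i}}$, $d$ is \emph{used in} $\predinstphi{\rhs{X}}{i}$ (or changed, if $X = Y$ and $\ell$ is $d$'s own position), so $d \in \belongsto{\var[k]{c}}$ forces $\var[k]{c}$ to \emph{rule} $\predinstphi{\rhs{X}}{i}$, i.e.\ there is $v'$ with $(X,k,v') \ECFGloc[i]$. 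Now distinguish whether $\var[\ell]{d'} \in \belongsto{\var[k]{c}}$: if yes, the induction hypothesis applied to $Y,\val{w}$ gives $\var[\ell]{d'} \in \marking{\markingsn{\local}{}}{Y,k,\val[k]{w}}$, and I must check that the local edge $(X,k,\val[k]{v}) \ECFGloc[i] (Y,k,\val[k]{w})$ actually exists (this is exactly where the CFG/LCFG edge-condition clauses must be matched up, using that the global edge $(X,\val{v})\ECFGsyn[i](Y,\val{w})$ satisfies one of the three clauses of Def.~\ref{def:globalCFGHeuristic} at coordinate $k$) — granted that, the second $\cup$-clause of Def.~\ref{def:relevanceLocalHeuristic} puts $d$ into $\marking{\markingsn{\local}{}}{X,k,\val[k]{v}}$; if $\var[\ell]{d'} \notin \belongsto{\var[k]{c}}$, then $\var[\ell]{d'}$ belongs to some other CFP $\var[m]{c}$, the induction hypothesis gives $\var[\ell]{d'} \in \marking{\markingsn{\local}{}}{Y,m,\val[m]{w}}$, and the third ("external") $\cup$-clause of Def.~\ref{def:relevanceLocalHeuristic} applies with $v' = \val[k]{v}$, $w = \val[k]{w}$, $w' = \val[m]{w}$, putting $d$ into $\marking{\markingsn{\local}{}}{X,k,\val[k]{v}}$.

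From the strengthened claim the lemma follows at once: given $d \in \marking{\markingsn{\syntactic}{}}{X,\val{v}} = \bigcup_n \marking{\markingsn{\syntactic}{n}}{X,\val{v}}$, pick $n$ with $d \in \marking{\markingsn{\syntactic}{n}}{X,\val{v}}$; for every $k$ with $d \in \belongsto{\var[k]{c}}$ the claim yields $d \in \marking{\markingsn{\local}{}}{X,k,\val[k]{v}}$, which is precisely the condition defining $d \in \marking{\markingsn{\local}{}}{X,\val{v}}$. (Membership $d \in \param{X}\cap\varset{D}^{\DP}$ is immediate since $\marking{\markingsn{\syntactic}{n}}{X,\val{v}} \subseteq \param{X}\cap\varset{D}^{\DP}$ at every stage except the base case, where $\significant{\rhs{X}[\ldots]} \subseteq \free{\rhs{X}} \subseteq \param{X}$, and we may intersect with $\varset{D}^{\DP}$ throughout.)

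\textbf{Main obstacle.} The delicate point is the inductive step's requirement that the relevant \emph{local} edge exists whenever the global edge does and $\var[k]{c}$ rules the PVI. I expect this to come down to a careful case split over the three edge-clauses of Def.~\ref{def:globalCFGHeuristic} restricted to coordinate $k$, matched against the three clauses of Def.~\ref{def:localCFGHeuristic}; the subtlety is the self-dependency case $X = Y$, where the LCFG omits the "copy" and "destination-only" clauses (they require $\predphi{\rhs{X}}{i} \neq X$), so one must argue that for $d$ \emph{changed} by a self-referential PVI the only way $\var[k]{c}$ can rule it is via $\source{X}{i}{k}$ and $\dest{X}{i}{k}$ both being defined — which is exactly the LCFG's first clause, and which is guaranteed because $\var[k]{c}$ is a CFP (hence an LCFP, Def.~\ref{def:LCFP}) so for self-dependencies either both source and target are defined or $\var[k]{c}$ is merely copied to itself (the latter not giving a ruling edge when $d$ is genuinely changed). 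Getting this bookkeeping right, including the "used for" part of $\belongsto{\cdot}$ when $d$ occurs in the guard $\guard{i}{\rhs{X}}$ rather than in an argument, is the crux; everything else is a routine unwinding of the two labelling definitions.
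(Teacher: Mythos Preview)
Your proposal is correct and follows essentially the same route as the paper: both prove the stagewise inclusion by induction on $n$, handle the base case via the monotonicity of $\significantname\circ\simplifyname$ under substituting fewer CFPs, and in the inductive step split on whether the target parameter $\var[\ell]{d}$ belongs to the same CFP $\var[k]{c}$ as $d$, with the self-dependency subcase being the crux (where one argues that $\var[k]{c}$ cannot rule the PVI via clauses~2/3 of the LCFG, contradicting $d \in \belongsto{\var[k]{c}}$). The only cosmetic difference is that the paper's induction hypothesis targets $\markingsn{\local}{n}$ rather than the full fixpoint $\markingsn{\local}{}$, which is immaterial.
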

  \begin{proof}
    We prove the more general statement that for all natural numbers $n$ it 
    holds 
    that
    $\forall (X, \val{v}) \in \VCFGsyn, \forall d \in 
    \marking{\markingsn{\syntactic}{n}}{X,
      \val{v}}: (\forall j: d \in \belongsto{\var[j]{c}} \implies
    d \in \marking{\markingsn{\local}{n}}{X, j, \val[j]{v}})$. The lemma
    then is an immediate consequence.
    
    We proceed by induction on $n$.
    \begin{compactitem}
      \item $n = 0$. Let $(X, \val{v})$ and $d \in 
      \marking{\markingsn{\syntactic}{0}}{X, \val{v}}$
      be arbitrary. We need to show that $\forall j: d \in 
      \belongsto{\var[j]{c}}
      \implies d \in \marking{\markingsn{\local}{0}}{X, j, \val[j]{v}}$.
      
      Let $j$ be arbitrary such that $d \in \belongsto{\var[j]{c}}$.
      Since $d \in \marking{\markingsn{\syntactic}{0}}{X, \val{v}}$, by 
      definition
      $d \in \significant{\simplify{\rhs{X}[\var{c} := \val{v}]}}$, hence also
      $d \in \significant{\simplify{\rhs{X}[\var[j]{c} := 
          \val[j]{v}}]}$.
      Combined with the assumption that  $d \in \belongsto{\var[j]{c}}$,
      this gives us $d \in \marking{\markingsn{\local}{0}}{X, j, \val[j]{v}}$
      according to Definition~\ref{def:localCFGHeuristic}.
      
      \item $n = m + 1$. As induction hypothesis assume for all $(X, \val{v}) 
      \in V$:
      \begin{equation}\label{eq:IHlocalapprox}
      \forall d: d \in 
      \marking{\markingsn{\syntactic}{m}}{X, \val{v}}
      \implies (\forall j: d \in \belongsto{\var[j]{c}}
      \implies d \in \marking{\markingsn{\local}{m}}{X,j,\val[j]{v}}).
      \end{equation}
      
      Let $(X, \val{v})$ be arbitrary with
      $d \in \marking{\markingsn{\syntactic}{m+1}}{X, \val{v}}$. Also let $j$
      be arbitrary, and assume that $d \in \belongsto{\var[j]{c}}$.
      
      We show that $d \in 
      \marking{\markingsn{\local}{m+1}}{X,j,\val[j]{v}}$ by distinguishing the
      cases of Definition~\ref{def:markingHeuristic}.
      If $d \in \marking{\markingsn{\syntactic}{m}}{X, \val{v}}$ the
      result follows immediately from the induction hypothesis. For the second
      case, suppose there are $i \in \mathbb{N}$ and $(Y, \val{w}) \in V$ such 
      that $(X, \val{v}) \smash{\ECFGsyn[i]} (Y, \val{w})$, 
      also assume there is some $\var[\ell]{d} \in 
      \marking{\markingsn{\syntactic}{m}}{Y, \val{w}}$
      with $d \in \free{\dataphi[\ell]{\rhs{X}}{i}}$. Let
      $i$ and $\var[\ell]{d}$ be such, and observe that
      $Y = \predphi{\rhs{X}}{i}$ and $i \leq \npred{\rhs{X}}$.
      According to the induction hypothesis,
      $\forall k: \var[\ell]{d} \in \belongsto{\var[k]{c}}
      \implies \var[\ell]{d} \in 
      \marking{\markingsn{\local}{m}}{Y, k, 
        \val[k]{w}}$.
      We distinguish two cases.
      
      \begin{compactitem}
        \item $\var[\ell]{d}$ belongs to $\var[j]{c}$. According to 
        \eqref{eq:IHlocalapprox}, we know
        $\var[\ell]{d} \in 
        \marking{\markingsn{\local}{m}}{Y, j, 
          \val[j]{w}}$.  
        Since $d \in \free{\dataphi[\ell]{\rhs{X}}{i}}$,
        we only need to show that $(X, j, \val[j]{v}) \ECFGloc[i] 
        (Y, j, \val[j]{w})$.
        We distinguish the cases for $j$ from 
        Definition~\ref{def:globalCFGHeuristic}.
        \begin{compactitem}
          \item $\source{X}{i}{j} = \val[j]{v}$ and $\dest{X}{i}{j} = 
          \val[j]{w}$,
          then according to Definition~\ref{def:localCFGHeuristic} $(X, j, 
          \val[j]{v}) \ECFGloc[i] 
          (Y, j, \val[k]{w})$          .
          \item $\source{X}{i}{j} = \bot$, $\copied{X}{i}{j} = j$ and 
          $\val[j]{v} = \val[j]{w}$.
          In case $Y \neq X$ the edge exists locally, and
          we are done.
          Now suppose that $Y = X$. Then 
          $\predinstphi{\rhs{X}}{i}$
          is not ruled by $\var[j]{c}$. Furthermore, $\var[\ell]{d}$
          is changed in $\predinstphi{\rhs{X}}{i}$, hence 
          $\var[\ell]{d}$
          cannot belong to $\var[j]{c}$, which is a contradiction.
          
          \item $\source{X}{i}{j} = \bot$, $\copied{X}{i}{j} = \bot$ and
          $\dest{X}{i}{j} = \val[j]{w}$. This is completely analogous to
          the previous case.
        \end{compactitem}
        \item $\var[\ell]{d}$ does not belong to $\var[j]{c}$. 
        Recall that there must be some $\var[k]{c}$ such that
        $\var[\ell]{d}$ belongs to $\var[k]{c}$, and by assumption now
        $\var[\ell]{d}$ does not belong to $\var[j]{c}$. Then
        according to Definition~\ref{def:relevanceLocalHeuristic}, $d$ is
        marked in $\marking{\markingsn{\local}{m+1}}{X, j, \val[j]{v}}$,
        provided that $(X, j, \val[j]{v}) \ECFGloc[i]$ and $(X, k, v') 
        \ECFGloc[i] (Y, k, \val[k]{w})$ for some
        $v'$. Let $v' = \val[k]{v}$ and $w' = \val[j]{w}$, according to the 
        exact same reasoning as
        before, the existence of the edges $(X,j,\val[j]{v}) \ECFGloc[i] (Y, j, 
        \val[j]{w})$ and $(X, k, \val[k]{v}) \ECFGloc[i] (Y, 
        k, \val[k]{w})$ can be shown, completing the proof.\qedhere
      \end{compactitem}
    \end{compactitem}
  \end{proof}
} 
Combined with Prop.~\ref{prop:approx}, this leads to the following
theorem.
\begin{theorem}
We have
$\sem{\PBES}(X(\semval{v}, \semval{w})) = 
  \sem{\reset{\markingsn{\local}{}}{\PBES}}{}{}(\changed{X}(\sem{\val{v}},\sem{\val{w}}))$
for all
predicate variables $X$ and ground terms $\val{v}$ and $\val{w}$.

\end{theorem}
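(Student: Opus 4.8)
The plan is to obtain the statement from Proposition~\ref{prop:approx}, once we know that the induced local labelling $\markingsn{\local}{}$ over-approximates the syntactic labelling $\markingsn{\syntactic}{}$. So the argument has two steps: first establish
\[
\marking{\markingsn{\syntactic}{}}{X,\val{v}} \subseteq \marking{\markingsn{\local}{}}{X,\val{v}}
\quad\text{for every } (X,\val{v}) \in \VCFGsyn,
\]
and then instantiate Proposition~\ref{prop:approx} with $L' = \markingsn{\local}{}$. By Assumption~\ref{ass:belongs} every data parameter belongs to at least one CFP, so the induced labelling is well defined and the inclusion above is precisely the hypothesis $\marking{\markingsn{\syntactic}{}}{X,\val{v}} \subseteq L'(X,\val{v})$ demanded by Proposition~\ref{prop:approx}; its conclusion is then the desired identity (up to the cosmetic renaming of $X$ to $\changed{X}$ performed by $\resetname$).

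For the inclusion I would prove, by induction on the stage index $n$, the sharper statement: for all $(X,\val{v}) \in \VCFGsyn$, all $d \in \marking{\markingsn{\syntactic}{n}}{X,\val{v}}$, and all CFPs $\var[j]{c}$ with $d \in \belongsto{\var[j]{c}}$, we have $d \in \marking{\markingsn{\local}{n}}{X,j,\val[j]{v}}$; taking the union over $n$ and over those $\var[j]{c}$ then gives the lemma. The base case $n=0$ is immediate: $d \in \significant{\simplify{\rhs{X}[\var{c}:=\val{v}]}}$ implies $d \in \significant{\simplify{\rhs{X}[\var[j]{c}:=\val[j]{v}]}}$, which together with $d \in \belongsto{\var[j]{c}}$ is the base clause of Def.~\ref{def:relevanceLocalHeuristic}. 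In the step, the only interesting situation is $d \in \marking{\markingsn{\syntactic}{n+1}}{X,\val{v}} \setminus \marking{\markingsn{\syntactic}{n}}{X,\val{v}}$, so there are $i$ and a neighbour $(Y,\val{w})$ with $(X,\val{v}) \ECFGsyn[i] (Y,\val{w})$ and some data parameter $\var[\ell]{d^{Y}} \in \marking{\markingsn{\syntactic}{n}}{Y,\val{w}}$ with $d \in \free{\dataphi[\ell]{\rhs{X}}{i}}$. The crux is to \emph{project} the global edge onto coordinate $j$, showing $(X,j,\val[j]{v}) \ECFGloc[i] (Y,j,\val[j]{w})$; this is a clause-by-clause check of Def.~\ref{def:globalCFGHeuristic} against Def.~\ref{def:localCFGHeuristic}, where in the copy and pure-target clauses --- which the LCFG only admits when $\predphi{\rhs{X}}{i} \neq X$ --- one rules out $Y = X$, for otherwise the PVI would fail to be \emph{ruled} by $\var[j]{c}$ although $d$ is used in (and, being a self-dependency, changed by) it, contradicting $d \in \belongsto{\var[j]{c}}$ (Def.~\ref{def:belongsTo}). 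One then distinguishes whether $\var[\ell]{d^{Y}}$ belongs to $\var[j]{c}$: if it does, the induction hypothesis yields $\var[\ell]{d^{Y}} \in \marking{\markingsn{\local}{n}}{Y,j,\val[j]{w}}$ and the first, ``internal'' alternative of Def.~\ref{def:relevanceLocalHeuristic} marks $d$ at $(X,j,\val[j]{v})$; if it does not, $\var[\ell]{d^{Y}}$ belongs to some other CFP $\var[k]{c}$, the same projection argument on coordinate $k$ gives $(X,k,\val[k]{v}) \ECFGloc[i] (Y,k,\val[k]{w})$, the induction hypothesis gives $\var[\ell]{d^{Y}} \in \marking{\markingsn{\local}{n}}{Y,k,\val[k]{w}}$, and the second, ``external'' alternative of Def.~\ref{def:relevanceLocalHeuristic} marks $d$.

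The main obstacle is exactly this edge-projection bookkeeping. Because the LCFG deliberately drops the self-loops that the global CFG keeps in the copy and pure-target cases, the argument genuinely relies on the interplay between ``used for / used in / changed by'' and ``ruled by / belongs to'' from Defs.~\ref{def:used}--\ref{def:belongsTo}, together with Assumption~\ref{ass:belongs}, to guarantee that whenever a data parameter belonging to $\var[j]{c}$ is touched by a PVI, that PVI carries a local edge on coordinate $j$. Everything else --- well-definedness of the induced labelling, the union over stages, and the final appeal to Proposition~\ref{prop:approx} --- is routine.
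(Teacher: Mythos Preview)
Your proposal is correct and follows essentially the same route as the paper: prove by induction on the stage index that $\marking{\markingsn{\syntactic}{n}}{X,\val{v}} \subseteq \marking{\markingsn{\local}{n}}{X,j,\val[j]{v}}$ for every $j$ with $d \in \belongsto{\var[j]{c}}$, project the global CFG edge coordinate-wise onto the LCFG using the clauses of Defs.~\ref{def:globalCFGHeuristic} and~\ref{def:localCFGHeuristic}, and then invoke Proposition~\ref{prop:approx}. The only cosmetic difference is that in the self-dependency sub-case the paper derives the contradiction from $\var[\ell]{d^Y}$ being changed yet belonging to $\var[j]{c}$, whereas you phrase it via $d$ being used in the PVI; both exploit Def.~\ref{def:belongsTo} in the same way.
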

The induced labelling $\markingsn{\local}{}$ can remain
implicit; in an implementation, the labelling constructed
by Def.~\ref{def:relevanceLocalHeuristic} can be used directly, sidestepping a
combinatorial explosion.

\section{Case Studies}\label{sec:experiments}

We implemented our techniques 
in the tool \texttt{pbesstategraph} of the mCRL2
toolset~\cite{Cra+:13}. Here, we report on the tool's effectiveness
in simplifying the PBESs originating from model checking problems and
behavioural equivalence checking problems: we compare sizes of the BESs
underlying the original PBESs to those for the PBESs obtained after
running the tool \texttt{pbesparelm} (implementing the techniques
from~\cite{OWW:09}) and those for the PBESs obtained after running
our tool. Furthermore, we compare the total times needed for reducing the PBES,
instantiating it into a BES, and solving this BES.

\begin{table}[!ht]
  \small
  \caption{Sizes of the BESs underlying (1) the original PBESs, and the
    reduced PBESs using (2) 
    \texttt{pbesparelm}, (3) \texttt{pbesstategraph} (global)
    and (4) \texttt{pbesstategraph} (local).
    For the original PBES, we report the number of generated BES equations,
    and the time required for generating and
    solving the resulting BES. For the other PBESs, we state the total
    reduction in percentages (\ie, $100*(|original|-|reduced|)/|original|$),
    and the reduction of the times (in percentages, computed in the same way),
    where for times we additionally include the \texttt{pbesstategraph/parelm}
    running times.
    Verdict $\surd$ indicates the problem 
    has solution $\true$; $\times$ indicates it is $\false$.
  }
  \label{tab:results}
  \centering
\scriptsize
\begin{tabular}{lc@{\hspace{5pt}}|@{\hspace{5pt}}rrrr@{\hspace{5pt}}|@{\hspace{5pt}}rrrr@{\hspace{5pt}}|@{\hspace{5pt}}c@{}}
 & \multicolumn{1}{c@{\hspace{10pt}}}{} & \multicolumn{4}{c}{Sizes} & 
 \multicolumn{4}{c}{Times}&Verdict\\
 \cmidrule(r){3-6}
 \cmidrule(r){7-10}
 \cmidrule{11-11}\\[-1.5ex]

& \multicolumn{1}{c}{} & \multicolumn{1}{@{\hspace{5pt}}c}{Original} & 
\multicolumn{1}{c}{\texttt{parelm}} & 
\multicolumn{1}{c}{\texttt{st.graph}} & 
\multicolumn{1}{c@{\hspace{5pt}}}{\texttt{st.graph}} & 
\multicolumn{1}{c}{Original}
& \multicolumn{1}{c}{\texttt{parelm}} & \multicolumn{1}{c}{\texttt{st.graph}} & 
\multicolumn{1}{c@{\hspace{5pt}}}{\texttt{st.graph}} & \\ 
           &   \multicolumn{1}{c@{\hspace{10pt}}}{$|D|$}
&&& \multicolumn{1}{c}{\texttt{(global)}}  & 
\multicolumn{1}{c}{\texttt{(local)}} &&& \multicolumn{1}{c}{\texttt{(global)}}  
& 
\multicolumn{1}{c}{\texttt{(local)}}& \\
\\[-1ex]
\toprule
\\[-1ex]
 \multicolumn{4}{c}{Model Checking Problems} \\
 \cmidrule{1-4}  \\[-1ex]

\multicolumn{11}{l}{\textbf{No deadlock}}  \\[.5ex]
\emph{Onebit}   & $2$ & 81,921 & 86\% & 89\% & 89\% & 15.7 & 90\% & 85\% & 90\% & $\surd$ \\
                 & $4$ & 742,401 & 98\% & 99\% & 99\% & 188.5 & 99\% & 99\% & 99\% & $\surd$ \\
\emph{Hesselink} & $2$ & 540,737 & 100\% & 100\% & 100\% & 64.9 & 99\% & 95\% & 99\% & $\surd$ \\
                  & $3$ & 13,834,801 & 100\% & 100\% & 100\% & 2776.3 & 100\% & 100\% & 100\% & $\surd$ \\
\\[-1ex]
\multicolumn{11}{l}{\textbf{No spontaneous generation of messages}}  \\[.5ex]
\emph{Onebit}  & $2$ & 185,089 & 83\% & 88\% & 88\% & 36.4 & 87\% & 85\% & 88\% & $\surd$ \\
                & $4$ & 5,588,481 & 98\% & 99\% & 99\% & 1178.4 & 99\% & 99\% & 99\% & $\surd$ \\

\\[-1ex]
\multicolumn{11}{l}{\textbf{Messages that are read are inevitably sent}}  
\\[.5ex]
\emph{Onebit}  & $2$ & 153,985 & 63\% & 73\% & 73\% & 30.8 & 70\% & 62\% & 73\% & $\times$ \\
                & $4$ & 1,549,057 & 88\% & 92\% & 92\% & 369.6 & 89\% & 90\% & 92\% & $\times$ \\

\\[-1ex]
\multicolumn{11}{l}{\textbf{Messages can overtake one another}}  \\[.5ex]
\emph{Onebit} & $2$ & 164,353 & 63\% & 73\% & 70\% & 36.4 & 70\% & 67\% & 79\% & $\times$ \\
               & $4$ & 1,735,681 & 88\% & 92\% & 90\% & 332.0 & 88\% & 88\% & 90\% & $\times$ \\

\\[-1ex]
\multicolumn{11}{l}{\textbf{Values written to the register can be read}} 
\\[.5ex]
 \emph{Hesselink} & $2$ & 1,093,761 & 1\% & 92\% & 92\% & 132.8 & -3\% & 90\% & 91\% & $\surd$ \\
                 & $3$ & 27,876,961 & 1\% & 98\% & 98\% & 5362.9 & 25\% & 98\% & 99\% & $\surd$ \\
\\[-1ex]
 \multicolumn{4}{c}{Equivalence Checking Problems} \\
 \cmidrule{1-4}  \\[-1ex]

\multicolumn{11}{l}{\textbf{Branching bisimulation equivalence}} \\[.5ex]
\emph{ABP-CABP} & $2$ & 31,265 & 0\% & 3\% & 0\% & 3.9 & -4\% & -1880\% & -167\% & $\surd$ \\
                 & $4$ & 73,665 & 0\% & 5\% & 0\% & 8.7 & -7\% & -1410\% & -72\% & $\surd$ \\
\emph{Buf-Onebit} & $2$ & 844,033 & 16\% & 23\% & 23\% & 112.1 & 30\% & 28\% & 31\% & $\surd$ \\
                   & $4$ & 8,754,689 & 32\% & 44\% & 44\% & 1344.6 & 35\% & 44\% & 37\% & $\surd$ \\
\emph{Hesselink I-S} & $2$ & 21,062,529 & 0\% & 93\% & 93\% & 4133.6 & 0\% & 74\% & 91\% & $\times$ \\

\\[-1ex]
\multicolumn{11}{l}{\textbf{Weak bisimulation equivalence}}  \\[.5ex]
\emph{ABP-CABP} & $2$ & 50,713 & 2\% & 6\% & 2\% & 5.3 & 2\% & -1338\% & -136\% & $\surd$ \\
                 & $4$ & 117,337 & 3\% & 10\% & 3\% & 13.0 & 4\% & -862\% & -75\% & $\surd$ \\
\emph{Buf-Onebit} & $2$ & 966,897 & 27\% & 33\% & 33\% & 111.6 & 20\% & 29\% & 28\% & $\surd$ \\
                   & $4$ & 9,868,225 & 41\% & 51\% & 51\% & 1531.1 & 34\% & 49\% & 52\% & $\surd$ \\
\emph{Hesselink I-S} & $2$ & 29,868,273 & 4\% & 93\% & 93\% & 5171.7 & 7\% & 79\% & 94\% & $\times$ \\

\\[-1ex]
\bottomrule

\end{tabular}

\end{table}

Our cases are taken from the literature. We here present a selection of the
results. For the model checking problems,
we considered the \emph{Onebit} protocol, which is a complex sliding window
protocol, and Hesselink's handshake register~\cite{Hes:98}. 
Both protocols are parametric in the set of values that can be read
and written.  A selection of properties of varying complexity and
varying nesting degree, expressed in the data-enhanced modal
$\mu$-calculus are checked.\footnote{\reportonly{The formulae are contained in 
the appendix;}
\paperonly{The formulae are contained in \cite{KWW:13report};} here we 
use textual characterisations instead.}
For the behavioural equivalence checking problems, we considered a
number of communication protocols such as the \emph{Alternating Bit
Protocol} (ABP), the \emph{Concurrent Alternating Bit Protocol} (CABP),
a two-place buffer (Buf) and the aforementioned Onebit protocol. Moreover,
we compare an implementation of Hesselink's register to a specification
of the protocol that is correct with respect to trace equivalence (but
for which currently no PBES encoding exists) but not with respect to the
two types of behavioural equivalence checking problems we consider here:
branching bisimilarity and weak bisimilarity.

The experiments were performed on a 64-bit Linux machine with kernel
version 2.6.27, consisting of 28 Intel\textregistered\ Xeon\copyright\ E5520 
Processors running
at 2.27GHz, and 1TB of shared main memory. None of our experiments use
multi-core features. We used revision 12637 of the mCRL2 toolset,
and the complete scripts for our test setup are available at 
\url{https://github.com/jkeiren/pbesstategraph-experiments}.

The results are reported in Table~\ref{tab:results};
higher percentages mean better reductions/\-smaller 
runtimes.\reportonly{\footnote{The absolute sizes and times are included in the 
appendix.}}
The experiments confirm our technique can achieve as much as an
additional reduction of about 97\% over \texttt{pbesparelm}, see the
model checking and equivalence problems
 for Hesselink's
register. Compared to the sizes of the BESs underlying the original PBESs,
the reductions can be immense. Furthermore,
reducing the PBES using the local stategraph algorithm, instantiating, and
subsequently solving it is typically faster than using the global stategraph 
algorithm,
even when the reduction achieved by the first is less.
For the equivalence checking
cases, when no reduction is achieved the local version of stategraph sometimes
results in substantially larger running times than parelm, which in turn already
adds an overhead compared to the original; however, for the cases in which this
happens the original running time is around or below 10 seconds, so the
observed increase may be due to inaccuracies in measuring.

\section{Conclusions and Future Work}\label{sec:conclusions}
We described a static analysis technique for PBESs that uses
a notion of control flow to determine when data parameters become
irrelevant.  Using this information, the PBES can be simplified, leading
to smaller underlying BESs. Our static analysis technique
enables the solving of PBESs using instantiation that so far could not be solved
this way as shown by our running example. 
Compared to existing techniques, our new static analysis technique can lead to
additional reductions of up-to 97\% in practical cases, as illustrated by our
experiments. Furthermore, if a reduction can be achieved the technique can
significantly speed up instantiation and solving, and in case no reduction is
possible, it typically does not negatively impact the total running time.

Several techniques described in this paper can be used
to enhance existing reduction techniques for PBESs. For instance,
our notion of a \emph{guard} of a predicate variable instance
in a PBES can be put to use to cheaply improve on the heuristics for
constant elimination~\cite{OWW:09}.  Moreover, we believe that our
(re)construction of control flow graphs from PBESs can be used to
automatically generate invariants for PBESs.  The theory on invariants
for PBESs is well-established, but still lacks proper
tool support.

\bibliographystyle{plain}
\bibliography{references}

\ifreport
\newpage
\appendix

\section{$\mu$-calculus formulae}\label{app:experiments}

Below, we list the formulae that were verified in
Section~\ref{sec:experiments}. All formulae are denoted in
the the first order modal $\mu$-calculus, an mCRL2-native data
extension of the modal $\mu$-calculus. The formulae assume that
there is a data specification defining a non-empty sort $D$ of
messages, and a set of parameterised actions that are present
in the protocols.  The scripts we used to generate our results,
and the complete data of the experiments are available from
\url{https://github.com/jkeiren/pbesstategraph}-\url{experiments}

\subsection{Onebit protocol verification}

\begin{itemize}

\item No deadlock:
\[
\nu X. [\true]X \wedge \langle \true \rangle \true
\]
Invariantly, over all reachable states at least one action is enabled.

\item Messages that are read are inevitably sent:
\[
\nu X. [\true]X \wedge \forall d\colon D.[ra(d)]\mu Y.([\overline{sb(d}]Y \wedge \langle \true \rangle \true))
\]
The protocol receives messages via action $ra$ and tries to send these 
to the other party. The other party can receive these via action $sb$.

\item Messages can be overtaken by other messages:
\[
\begin{array}{ll}
\mu X. \langle \true \rangle X \vee 
\exists d\colon D. \langle ra(d) \rangle
 \mu Y. \\
\qquad \Big ( \langle \overline{sb(d)}Y \vee \exists d'\colon D. d \neq d' \wedge
 \langle ra(d') \rangle \mu Z.  \\
\qquad \qquad ( \langle \overline{sb(d)} \rangle Z \vee
 \langle sb(d') \rangle \true) \\
\qquad \Big )
\end{array}
\]
That is, there is a trace in which message $d$ is read, and is still in the
protocol when another message $d'$ is read, which then is sent to the receiving
party before message $d$.

\item No spontaneous messages are generated:
\[
\begin{array}{ll}
\nu X.
[\overline{\exists d\colon D. ra(d)}]X \wedge\\
\qquad \forall d':D.  [ra(d')]\nu Y(m_1\colon D = d'). \\
  \qquad\qquad \Big ( [\overline{\exists d:D. ra(d) \vee sb(d) }]Y(m_1) \wedge \\
  \qquad\qquad\quad  \forall e\colon D.[sb(e)]((m_1 = e) \wedge X) \wedge \\ 
  \qquad\qquad\quad  \forall e':D. [ra(e')]\nu Z(m_2\colon D = e'). \\
  \qquad\qquad\qquad \Big ([\overline{\exists d \colon D. ra(d) \vee sb(d)}]Z(m_2) \wedge \\
  \qquad\qquad\qquad\quad       \forall f:D. [sb(f)]((f = m_1) \wedge Y(m_2))\\
  \qquad\qquad\qquad\quad       \Big )\\
  \qquad\qquad \Big )
\end{array}
\]
Since the onebit protocol can contain two messages at a time, the
formula states that only messages that are received can be subsequently
sent again. This requires storing messages that are currently in the buffer
using parameters $m_1$ and $m_2$.

\end{itemize}

\subsection{Hesselink's register}

\begin{itemize}
\item No deadlock:
\[\nu X. [\true]X \wedge \langle \true \rangle \true \]

\item Values that are written to the register can be read from the
register if no other value is written to the register in the meantime.
\[
\begin{array}{l}
\nu X. [\true]X \wedge
  \forall w \colon D . [begin\_write(w)]\nu Y.\\
\qquad\qquad \Big ( [\overline{end\_write}]Y \wedge 
    [end\_write]\nu Z. \\
\qquad\qquad\qquad \Big( [\overline{\exists d:D.begin\_write(d)}]Z \wedge
    [begin\_read]\nu W.\\
\qquad\qquad\qquad\qquad ([\overline{\exists d:D.begin\_write(d)}]W \wedge \\
\qquad\qquad\qquad\qquad\qquad \forall w': D . [end\_read(w')](w = w') ) \\
\qquad\qquad\qquad \Big) \\
\qquad\qquad \Big) \\
\end{array}
\]

\end{itemize}

\newpage
\section{Absolute sizes and times for the experiments}

\begin{table}[!ht]
  \small
  \caption{Sizes of the BESs underlying (1) the original PBESs, and the
    reduced PBESs using (2) 
    \texttt{pbesparelm}, (3) \texttt{pbesstategraph} (global)
    and (4) \texttt{pbesstategraph} (local).
    For each PBES, we report the number of generated BES equations,
    and the time required for generating and
    solving the resulting BES. For the other PBESs, we additionally include the 
    \texttt{pbesstategraph/parelm}
    running times.
    Verdict $\surd$ indicates the problem 
    has solution $\true$; $\times$ indicates it is $\false$.
  }
  \label{tab:results_absolute}
  \centering
\scriptsize
\begin{tabular}{lc@{\hspace{5pt}}|@{\hspace{5pt}}rrrr@{\hspace{5pt}}|@{\hspace{5pt}}rrrr@{\hspace{5pt}}|@{\hspace{5pt}}c@{}} &  & \multicolumn{4}{c}{Sizes} & \multicolumn{4}{c}{Times}&\\ & $|D|$  & Original & \texttt{parelm} & \texttt{st.graph} & \texttt{st.graph} & Original & \texttt{parelm} & \texttt{st.graph} & \texttt{st.graph} & verdict\\             &   &&& \texttt{(global)}  & \texttt{(local)} &&& \texttt{(global)}  & \texttt{(local)}& \\
\\[-1ex]
\toprule
\\[-1ex]
 \multicolumn{4}{c}{Model Checking Problems} \\
 \cmidrule{1-4}  \\[-1ex]

\multicolumn{11}{l}{\textbf{No deadlock}}  \\
\emph{Onebit}   & $2$ & 81,921 & 11,409 & 9,089 & 9,089 & 15.7 & 1.6 & 2.3 & 1.6 & $\surd$ \\
                 & $4$ & 742,401 & 11,409 & 9,089 & 9,089 & 188.5 & 1.6 & 2.3 & 1.3 & $\surd$ \\
\emph{Hesselink} & $2$ & 540,737 & 2,065 & 2,065 & 2,065 & 64.9 & 0.3 & 3.4 & 0.5 & $\surd$ \\
                  & $3$ & 13,834,801 & 2,065 & 2,065 & 2,065 & 2776.3 & 0.4 & 2.8 & 0.5 & $\surd$ \\
\\[-1ex]
\multicolumn{11}{l}{\textbf{No spontaneous generation of messages}}  \\
\emph{Onebit}  & $2$ & 185,089 & 30,593 & 22,145 & 22,145 & 36.4 & 4.7 & 5.6 & 4.4 & $\surd$ \\
                & $4$ & 5,588,481 & 92,289 & 60,161 & 60,161 & 1178.4 & 16.9 & 13.6 & 9.6 & $\surd$ \\

\\[-1ex]
\multicolumn{11}{l}{\textbf{Messages that are read are inevitably sent}}  \\
\emph{Onebit}  & $2$ & 153,985 & 57,553 & 41,473 & 41,473 & 30.8 & 9.1 & 11.8 & 8.2 & $\times$ \\
                & $4$ & 1,549,057 & 192,865 & 127,233 & 127,233 & 369.6 & 42.0 & 35.6 & 30.0 & $\times$ \\

\\[-1ex]
\multicolumn{11}{l}{\textbf{Messages can overtake one another}}  \\
\emph{Onebit} & $2$ & 164,353 & 61,441 & 44,609 & 49,217 & 36.4 & 11.0 & 11.9 & 7.6 & $\times$ \\
               & $4$ & 1,735,681 & 216,193 & 146,049 & 173,697 & 332.0 & 38.7 & 39.8 & 33.1 & $\times$ \\

\\[-1ex]
\multicolumn{11}{l}{\textbf{Values written to the register can be read}} \\
 \emph{Hesselink} & $2$ & 1,093,761 & 1,081,345 & 83,713 & 89,089 & 132.8 & 137.3 & 12.7 & 12.0 & $\surd$ \\
                 & $3$ & 27,876,961 & 27,656,641 & 528,769 & 561,169 & 5362.9 & 3995.5 & 81.3 & 72.2 & $\surd$ \\
\\[-1ex]
 \multicolumn{4}{c}{Equivalence Checking Problems} \\
 \cmidrule{1-4}  \\[-1ex]

\multicolumn{11}{l}{\textbf{Branching bisimulation equivalence}} \\
\emph{ABP-CABP} & $2$ & 31,265 & 31,265 & 30,225 & 31,265 & 3.9 & 4.0 & 76.4 & 10.3 & $\surd$ \\
                 & $4$ & 73,665 & 73,665 & 69,681 & 73,665 & 8.7 & 9.2 & 130.6 & 14.8 & $\surd$ \\
\emph{Buf-Onebit} & $2$ & 844,033 & 706,561 & 647,425 & 647,425 & 112.1 & 78.9 & 81.0 & 77.8 & $\surd$ \\
                   & $4$ & 8,754,689 & 5,939,201 & 4,897,793 & 4,897,793 & 1344.6 & 878.1 & 748.1 & 843.6 & $\surd$ \\
\emph{Hesselink I-S} & $2$ & 21,062,529 & 21,062,529 & 1,499,713 & 1,499,713 & 4133.6 & 4122.5 & 1070.6 & 375.3 & $\times$ \\

\\[-1ex]
\multicolumn{11}{l}{\textbf{Weak bisimulation equivalence}}  \\
\emph{ABP-CABP} & $2$ & 50,713 & 49,617 & 47,481 & 49,617 & 5.3 & 5.2 & 76.8 & 12.6 & $\surd$ \\
                 & $4$ & 117,337 & 113,361 & 106,089 & 113,361 & 13.0 & 12.5 & 125.3 & 22.8 & $\surd$ \\
\emph{Buf-Onebit} & $2$ & 966,897 & 706,033 & 644,209 & 644,209 & 111.6 & 89.6 & 79.8 & 80.6 & $\surd$ \\
                   & $4$ & 9,868,225 & 5,869,505 & 4,798,145 & 4,798,145 & 1531.1 & 1011.5 & 774.0 & 729.9 & $\surd$ \\
\emph{Hesselink I-S} & $2$ & 29,868,273 & 28,579,137 & 2,067,649 & 2,113,889 & 5171.7 & 4784.8 & 1061.3 & 294.1 & $\times$ \\

\\[-1ex]
\bottomrule

\end{tabular}

\end{table}

\fi

\end{document}